\theoremstyle{plain}
\newtheorem{theorem}{Theorem}[section]
\newtheorem{lemma}[theorem]{Lemma}
\theoremstyle{definition}
\theoremstyle{remark}
\gdef\@copyrightpermission{
  \begin{minipage}{0.2\columnwidth}
   \href{https://creativecommons.org/licenses/by/4.0/}{\includegraphics[width=0.90\textwidth]{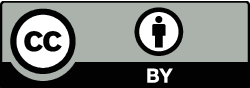}}
  \end{minipage}\hfill
  \begin{minipage}{0.8\columnwidth}
   \href{https://creativecommons.org/licenses/by/4.0/}{This work is licensed under a Creative Commons Attribution International 4.0 License.}
  \end{minipage}
  \vspace{5pt}
}
\title[AAMAS-2025 Formatting Instructions]{Dual Ensembled Multiagent Q-Learning with Hypernet Regularizer}
\author{Yaodong Yang}
\affiliation{
  \institution{Department of CSE, CUHK}
  \city{Hong Kong}
  \country{China}}
\email{yydapple@gmail.com}
\author{Guangyong Chen}
\thanks{Corresponding author: Guangyong Chen.}
\affiliation{
  \institution{Zhejiang Lab}
  \city{Hangzhou}
  \country{China}}
\email{gychen@zhejianglab.com}
\author{Hongyao Tang}
\affiliation{
  \institution{Mila, University of Montreal}
  \city{Montreal}
  \country{Canada}}
\email{tang.hongyao@quebec.mila}
\author{Furui Liu}
\affiliation{
  \institution{Zhejiang Lab}
  \city{Hangzhou}
  \country{China}}
\email{liufurui@zhejianglab.com}
\author{Danruo Deng}
\affiliation{
  \institution{Department of CSE, CUHK}
  \city{Hong Kong}
  \country{China}}
\email{drdeng@link.cuhk.edu.hk}
\author{Pheng Ann Heng}
\affiliation{
  \institution{Department of CSE, CUHK}
  \city{Hong Kong}
  \country{China}}
\email{pheng@cse.cuhk.edu.hk}
\begin{abstract}
Overestimation in single-agent reinforcement learning has been extensively studied. In contrast, overestimation in the multiagent setting has received comparatively little attention although it increases with the number of agents and leads to severe learning instability. Previous works concentrate on reducing overestimation in the estimation process of target Q-value. They ignore the follow-up optimization process of online Q-network, thus making it hard to fully address the complex multiagent overestimation problem. To solve this challenge, in this study, we first establish an iterative estimation-optimization analysis framework for multiagent value-mixing Q-learning. Our analysis reveals that multiagent overestimation not only comes from the computation of target Q-value but also accumulates in the online Q-network's optimization. Motivated by it, we propose the Dual Ensembled Multiagent Q-Learning with Hypernet Regularizer algorithm to tackle multiagent overestimation from two aspects. First, we extend the random ensemble technique into the estimation of target individual and global Q-values to derive a lower update target. Second, we propose a novel hypernet regularizer on hypernetwork weights and biases to constrain the optimization of online global Q-network to prevent overestimation accumulation. Extensive experiments in MPE and SMAC show that the proposed method successfully addresses overestimation across various tasks\footnote{Code is available at \href{https://github.com/CNDOTA/AAMAS25-DEMAR}{GitHub}.}.
\end{abstract}
\keywords{Multiagent Reinforcement Learning; Q-value Overestimation}
\newcommand{\BibTeX}{\rm B\kern-.05em{\sc i\kern-.025em b}\kern-.08em\TeX}
\begin{document}


\pagestyle{fancy}
\fancyhead{}


\maketitle 


\section{Introduction}
\label{sec:introduction}
Overestimation is a critical challenge for reinforcement learning that stems from the maximum operation when bootstrapping the target Q-value \cite{Thrun-1993-issues}. This Q-value overestimation can be continually accumulated during learning, leading to sub-optimal policy updates and behaviors, causing instability and crippling the quality of learned policy \cite{lan_maxmin_2020}. In single-agent deep reinforcement learning (DRL), a lot of representative works have been proposed to address the overestimation problem, including Double DQN \cite{hasselt_deep_2016}, Averaged-DQN \cite{anschel_averaged-dqn_2017}, TD3 \cite{fujimoto_addressing_2018}, Minmax Q-learning \cite{lan_maxmin_2020}, and MeanQ \cite{liang_reducing_2022} etc. Although overestimation in single-agent DRL has been widely studied, overestimation in multiagent reinforcement learning (MARL) has received comparatively little attention although it increases with the number of agents and could cause severe learning instability to harm agent policy \cite{pan_regularized_2021}.

To mitigate overestimation in MARL, a few preceding works extend ensemble methods from single-agent DRL to reduce Q-value overestimation by discarding large target values in the ensemble. For example, Ackermann et al. \cite{ackermann_reducing_2019} introduce the TD3 technique to reduce the overestimation bias by using double centralized critics. Recently, Wu et al. \cite{wu_sub-avg_2022} use an ensemble of target multiagent Q-values to derive a lower update target by discarding the larger previously learned action values and averaging the retained ones. Besides the above ensemble-based methods, another category of works softens the maximum operator in the Bellman equation to avoid updating with large target values. For example, Gan et al. \cite{gan_stabilizing_2021} extend the soft Mellowmax operator into the field of MARL to alleviate the multiagent overestimation issue. At the same time, Pan et al. \cite{pan_regularized_2021} use the softmax Bellman operator on the joint action space to avoid large target Q-values and speed up the softmax computation by sampling actions around the maximal joint action. However, the above works focus on reducing the overestimation of MARL in the estimation process of the target Q-value. They ignore the follow-up optimization process of the online Q-network where overestimation can accumulate, thus making it hard to fully solve the complex multiagent overestimation problem.

To address this challenge, in our work, we first establish an iterative estimation-optimization framework to analyze the overestimation phenomenon in multiagent value-mixing Q-learning. Through the analysis, we found that multiagent overestimation not only comes from the overestimation of target Q-values \cite{gan_stabilizing_2021}, but also accumulates in the online Q-network after optimization with the overestimated update target. Inspired by this finding, we propose the \textbf{D}ual \textbf{E}nsembled \textbf{M}ulti\textbf{A}gent Q-learning with hypernet \textbf{R}egularizer (DEMAR) algorithm to address multiagent overestimation from two aspects. First, DEMAR extends a random ensemble technique \cite{chen_randomized_2021} into the estimation of target individual and global Q-values to derive a lower update target. Second, to prevent the accumulation of overestimation, we propose a novel hypernet regularizer to regularize the weights and biases from hypernetworks to constrain the optimization of online global Q-network. Extensive experiments in the classical multiagent particle environment and a noisy version of the StarCraft multiagent challenge environment show that DEMAR successfully stabilizes learning on various multiagent tasks suffering from overestimation while outperforming other baselines of the multiagent overestimation problem.

\section{Background}
\label{sec:background}
\subsection{Overestimation in Q-learning}
Q-learning \cite{watkins_q-learning_1992} learns the optimal value of each state-action via updating a tabular representation of the Q-value function by
\begin{equation}
    Q(s,a) \gets Q(s,a) + \alpha (r + \gamma \max_{a'} Q(s', a') - Q(s, a))
\end{equation}
to minimize the temporal difference error between estimates and bootstrapped targets with learning rate $\alpha$ and discount factor $\gamma$. DQN \cite{mnih_human-level_2015} learns a parametrized value function $Q^{\theta}$ by minimizing the squared error loss between $Q^{\theta}$ and the target value estimation as
\begin{equation}
    L(s,a,r,s';\theta)=(r + \gamma \max_{a'} Q^{\bar{\theta}}(s', a') - Q^{\theta}(s, a))^{2},
\end{equation}
where $Q^{\bar{\theta}}$ is a lagging version of the current state-action value function and is updated periodically from $Q^{\theta}$. When the target value estimation is with noise, the $\max$ operator has been shown to lead to an overestimation bias \cite{Thrun-1993-issues} due to Jensen's inequality as
\begin{equation}
    \mathbb{E}[\max_{a'} Q(s', a')] \geq \max_{a'}\mathbb{E}[Q(s', a')].
\end{equation}
Since the update is applied repeatedly through bootstrapping, it iteratively increases the bias of estimated Q-values and introduces instability into learning. 

\subsection{Randomized Ensembled Double Q-Learning}
\label{sec:redq}
Using an ensemble of Q-networks to reduce overestimation is widely studied in the field of single-agent DRL \cite{anschel_averaged-dqn_2017,chen_randomized_2021,liang_reducing_2022}. To reduce overestimation of the target Q-value, Double DQN \cite{hasselt_deep_2016} decomposes the max operation in the target into action selection and action evaluation, and TD3 \cite{fujimoto_addressing_2018} extends this idea into actor-critic methods. Meanwhile, Averaged-DQN \cite{anschel_averaged-dqn_2017} and MeanQ \cite{liang_reducing_2022} both employ an ensemble of multiple Q-networks to reduce overestimation with a lower estimation variance. 

Recently, Randomized Ensembled Double Q-Learning (REDQ) \cite{chen_randomized_2021} adopts an in-target minimization across a subset $\mathbb{H}$ of $N_{\mathbb{H}}$ Q-functions, which is randomly sampled from an ensemble of $H$ Q-functions, to derive a lower update target. Based on SAC \cite{haarnoja_soft_2018}, the Q target of REDQ is computed as
\begin{equation}
    y = r + \gamma (\min_{h \in \mathbb{H}} Q^{h}(s', \tilde{a}') - \alpha_{sac} \log \pi(s', \tilde{a}')), \tilde{a}' \sim \pi(\cdot | s'),
\end{equation}
where $h$ is the index of Q-function and $\alpha_{sac}$ is the coefficient of the entropy term in SAC. As indicated by Theorem 1 of REDQ \cite{chen_randomized_2021}, by adjusting $H$ and $N_{\mathbb{H}}$, the overestimation can be flexibly controlled. For example, by increasing the subset size $N_{\mathbb{H}}$ with some fixed ensemble size $H$, the estimation bias of the Q-value can be adjusted from above zero (overestimation) to under zero (underestimation).

\subsection{Multiagent Value-Mixing Q-learning}
We use Markov games, the multiagent extension of Markov Decision Processes \cite{littman_markov_1994}, as our setting. Markov games are described by a state transition function, $T: S \times A_{1} \times ... \times A_{N} \rightarrow P(S)$, which defines the probability distribution over all possible next states, $P(S)$, given the current global state $S$ and the action $A_{i}$ produced by the $i$-th agent. The reward is usually given based on the global state and actions of all agents $R_{i}:S \times A_{1} \times ... \times A_{N} \rightarrow \mathbb{R}$. If all agents receive the same rewards, i.e. $R_{1}=...=R_{N}$, Markov games are fully-cooperative \cite{matignon_independent_2012}: a best-interest action of one agent is also a best-interest action of other agents. Markov games can be partially observable and each agent $i$ receives a local observation $o_{i}: \mathcal{O}(S, i) \rightarrow O_{i}$. Thus, each agent learns a policy $\pi_{i}: O_{i} \rightarrow P(A_{i})$, which maps each agent's observation to a probability distribution over its action set, to maximize its expected discounted returns, $J_{i}(\pi_{i})=\mathbb{E}_{a_{1} \sim \pi_{1}, ..., a_{N} \sim \pi_{N},s \sim T}[\sum_{t=0}^{\infty} \gamma^{t} r_{i}(s_{t}, a_{1,t},..., a_{N,t})]$, where $\gamma$ is the discounted factor and is in the range of $[0,1)$.

To solve Markov games, multiagent value-mixing Q-learning algorithms \cite{rashid_qmix_2018,yang_qatten_2020} represent the global Q-value as an aggregation of individual Q-values with different forms. As the most representative method, QMIX \cite{rashid_qmix_2018} learns a monotonic mixing network to transform individual Q-values into the global Q-value as $Q_{tot}= f_{mix}(s, Q_{1}(s, a_{1}),..., Q_{N}(s, a_{N}))$. To ensure the monotonicity between global and individual Q-values that $\frac{\partial Q_{tot}}{\partial Q_{i}} \geq 0$, QMIX utilizes the hypernetworks \cite{ha_hypernetworks_2017} to produce non-negative weights and biases for mixing $Q_{i}$s. A hypernetwork $HPN$ is a network such as a multilayer perception. It takes the global state $s$ as the input to output non-negative weights and biases for the mixing network $f_{mix}$. For example, we can build $Q_{tot}$ with a linear form as $f_{mix}= \sum_{i=1}^{N} w_{i}Q_{i} + b$ where the weights $w_{i} \geq 0$ and bias $b$ are produced from the hypernetworks $\mathbf{w}=HPN_{1}(s)$ and $b=HPN_{2}(s)$ \cite{yang_qatten_2020}. The specific form of $f_{mix}$ in QMIX is described in Appendix B.
Recently, many works \cite{mahajan_maven_2019,mahajan_tesseract_2021,wang_qplex_2021} aim to address the monotonicity limitation of value-mixing MARL. The representation limitation from monotonicity causes inaccurate multiagent utilities (higher utilities on some joint-action pairs as overestimation and lower utilities on other pairs as underestimation) to lead to sub-optimal policies. Differently, the multiagent overestimation causes all joint-action multiagent Q-values larger than the ground truth to cripple policy quality. In this paper, we focus on the multiagent overestimation problem which receives much less attention nowadays.

\subsection{Overestimation in Multiagent Q-learning}
In the single-agent setting, if estimated action values contain independent noise uniformly distributed in $[-\epsilon, \epsilon]$ ($\epsilon > 0$), the expected overestimation of target values becomes
\begin{equation}
    \mathbb{E}[Z^{s}] = \mathbb{E}[r + \gamma \max_{a'}Q(s',a') - (r + \gamma \max_{a'} Q^{*}(s',a'))] \in [0, \gamma\epsilon\frac{m-1}{m+1}],
\end{equation}
where $Q^{*}$ is the target optimal Q-value and $m$ is the action space size \cite{Thrun-1993-issues}. Overestimation in MARL is more sophisticated as it increases with the number of agents \cite{pan_regularized_2021}. Furthermore, Gan et al. \cite{gan_stabilizing_2021} prove that the overestimation in multiagent value-mixing algorithms is raised from the maximization of individual Q-values with noise over its action space, which is shown below.
\begin{lemma}[Gan et al. \cite{gan_stabilizing_2021}]
\label{lemma:maoverestimation}
Let $Q_{tot}$ be a function of $s$ and $Q_{i}$ for $i=1,2,...,N$ where $Q_{i}$ is a function of $s$ and $a_{i}$ for $a_{i} \in A_{i}$. Assuming $l \leq \frac{\partial  Q_{tot}}{\partial Q_{i}} \leq L, i=1,2,...,N$ where $l \geq 0$, $L > 0$, and $Q_{i}(s,a_{i})$ is with an independent noise uniformly distributed in $[-\epsilon, \epsilon]$ on each $a_{i}$ given state $s$. Then
\begin{equation}
\label{eq:maoverestimation}
\begin{split}
    & lN\mathbb{E}[Z_{i}^{s}] \leq \mathbb{E}[r + \gamma \max_{\mathbf{a}'} Q_{tot}(s',\mathbf{Q}(s', \mathbf{a}'_{i})) - \\
    & (r + \gamma \max_{\mathbf{a}'} Q_{tot}(s',\mathbf{Q^{*}}(s', \mathbf{a}'_{i})))] \leq LN\mathbb{E}[Z_{i}^{s}],
\end{split}
\end{equation}
where $\mathbf{Q}(s', \mathbf{a}'_{i})= (Q_{1}(s',a'_{1}), ..., Q_{N}(s',a'_{N}))$ are individual Q-values, $\mathbf{Q^{*}}(s', \mathbf{a}'_{i})= (Q_{1}^{*}(s',a'_{1}), ..., Q_{N}^{*}(s',a'_{N}))$ are the target optimal individual Q-values, and $\mathbb{E}[Z_{i}^{s}] = \mathbb{E}[\max_{a'_{i}} Q_{i}(s',a'_{i}) - \max_{a'_{i}} Q_{i}^{*}(s',a'_{i})] \geq 0$. The proof is in Appendix A for completeness.
\end{lemma}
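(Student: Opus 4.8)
The plan is to reduce the joint-action overestimation to a weighted sum of the per-agent overestimations and then control the weights with the derivative bounds $l$ and $L$. First I would cancel the common reward $r$ and factor out $\gamma$, so that it suffices to sandwich
\[
\mathbb{E}\!\left[\max_{\mathbf{a}'} Q_{tot}(s',\mathbf{Q}(s',\mathbf{a}'_i)) - \max_{\mathbf{a}'} Q_{tot}(s',\mathbf{Q^{*}}(s',\mathbf{a}'_i))\right]
\]
between $lN\,\mathbb{E}[Z_i^s]$ and $LN\,\mathbb{E}[Z_i^s]$; the full claim then follows after restoring the factor $\gamma$.

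The key structural step is to exploit monotonicity. Because $\partial Q_{tot}/\partial Q_i \geq l \geq 0$, the map $Q_{tot}$ is non-decreasing in each argument, and since each $Q_i(s',a'_i)$ depends only on its own action $a'_i$, the joint maximization decouples coordinatewise,
\[
\max_{\mathbf{a}'} Q_{tot}(s',\mathbf{Q}(s',\mathbf{a}'_i)) = Q_{tot}\!\left(s', \max_{a'_1} Q_1, \dots, \max_{a'_N} Q_N\right),
\]
and likewise for $\mathbf{Q^{*}}$. Writing $M_i := \max_{a'_i} Q_i(s',a'_i)$ and $M_i^{*} := \max_{a'_i} Q_i^{*}(s',a'_i)$, the quantity of interest becomes $Q_{tot}(s',M_1,\dots,M_N) - Q_{tot}(s',M_1^{*},\dots,M_N^{*})$, and by definition $Z_i^s = M_i - M_i^{*}$.

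I would then compare the points $(M_1,\dots,M_N)$ and $(M_1^{*},\dots,M_N^{*})$ by a telescoping argument, changing one coordinate at a time and applying the one-dimensional mean value theorem along each coordinate (equivalently, integrating the gradient of $Q_{tot}$ along the connecting segment). This yields, for each realization of the noise,
\[
Q_{tot}(s',M_1,\dots,M_N) - Q_{tot}(s',M_1^{*},\dots,M_N^{*}) = \sum_{i=1}^{N} c_i\,(M_i - M_i^{*}),
\]
where each coefficient $c_i = \partial Q_{tot}/\partial Q_i$, evaluated at some intermediate point, satisfies $l \leq c_i \leq L$. Taking expectations and using $\mathbb{E}[Z_i^s] \geq 0$ together with the agent symmetry of the noise model (so that $\mathbb{E}[Z_i^s]$ is common across $i$ and $\sum_i \mathbb{E}[Z_i^s] = N\mathbb{E}[Z_i^s]$) would deliver the desired sandwich.

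The main obstacle is this last step. The decomposition holds pointwise with coefficients $c_i \in [l,L]$, but $c_i$ is itself a function of the noisy $Q_i$, so $c_i$ and $Z_i^s$ are correlated, and $Z_i^s$ need not be non-negative on every realization (only in expectation). Hence one cannot naively pull the bounds $l$ and $L$ through the expectation. This difficulty disappears when the mixing is linear, $f_{mix}=\sum_i w_i Q_i + b$, because then $c_i = w_i$ is a deterministic, state-dependent constant with $w_i \in [l,L]$, so $\mathbb{E}[c_i(M_i - M_i^{*})] = w_i\,\mathbb{E}[Z_i^s]$ and the sandwich is immediate. For a general monotonic $Q_{tot}$ I would either specialize to this linear regime (consistent with the $f_{mix}$ introduced above) or argue separately that the sign of $Z_i^s$ is controlled on the relevant event, which is precisely where the argument must be carried out with care.
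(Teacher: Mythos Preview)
Your approach is essentially identical to the paper's: cancel $r$, factor out $\gamma$, push the joint $\max$ inside $Q_{tot}$ coordinatewise via monotonicity, and then control the difference $Q_{tot}(s',M_1,\dots,M_N)-Q_{tot}(s',M_1^{*},\dots,M_N^{*})$ through the derivative bounds $l\le \partial Q_{tot}/\partial Q_i\le L$, finally invoking the symmetry of the noise to replace $\sum_i\mathbb{E}[Z_i^s]$ by $N\,\mathbb{E}[Z_i^s]$.

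The obstacle you single out is worth a comment. The paper's proof does \emph{not} resolve it: it simply writes, inside the expectation,
\[
Q_{tot}(s',M_1,\dots,M_N)-Q_{tot}(s',M_1^{*},\dots,M_N^{*}) \;\ge\; \sum_{i=1}^N l\,(M_i-M_i^{*}),
\]
and then takes expectations. As you observe, this pointwise inequality would follow from the mean-value/telescoping representation $\sum_i c_i(M_i-M_i^{*})$ with $c_i\in[l,L]$ \emph{only if} each $M_i-M_i^{*}\ge 0$ on that realization, which need not hold under the uniform-noise model (and the $c_i$ are noise-dependent). So the subtlety you raise is present in the paper's argument as well; your linear-$f_{mix}$ remark (where $c_i=w_i$ is deterministic) is exactly the regime in which the step is clean. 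In short, your proposal matches the paper's proof and is, if anything, more scrupulous about where the argument is loose.
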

Eq.~(\ref{eq:maoverestimation}) shows that, when computing target global Q-values, overestimation is raised by maximizing $Q_{i}$ over its action space. However, previous works ignore the fact that the overestimation could also be accumulated when optimizing the Q-network, thus not fully tackling multiagent overestimation. Next, we show how multiagent overestimation is raised and accumulated during learning.

\section{DEMAR for Multiagent Overestimation}
In this section, first, we conduct an in-depth analysis of the overestimation in multiagent value-mixing Q-learning during the estimation-optimization iterations. Next, motivated by the findings of the analysis, we propose the Dual Ensembled Multiagent Q-Learning with Hypernet Regularizer algorithm as shown in Figure~\ref{fig:demar}.

\begin{figure}[htbp]
\centering
\includegraphics[width=1.0\linewidth]{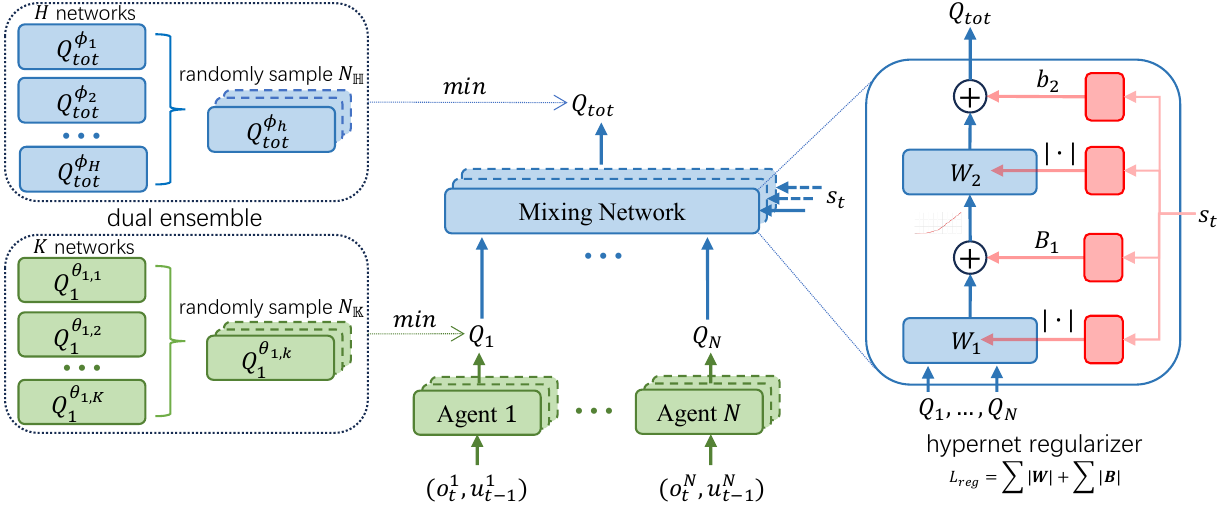}
\caption{The framework of DEMAR. The left part involves the dual ensembled multiagent Q-learning while the right part shows the hypernet regularizer on the global Q-network.}
\label{fig:demar}
\end{figure}

\subsection{Analysis of Multiagent Overestimation}
\label{subsection:analysis}

Multiagent value-mixing Q-learning learns in estimation-optimization iterations. At each iteration, the target global Q-value is estimated first. Then the online global Q-network is optimized based on the estimated target global Q-value. Following multiagent value-mixing Q-learning \cite{sunehag_value-decomposition_2018,rashid_qmix_2018,yang_qatten_2020} which approximates the global Q-value $Q_{tot}$ by mixing individual Q-values $Q_{i}$, we expand $Q_{tot}$ as
\begin{equation}
\label{eq_qmix}
    Q_{tot}=Q_{tot}(s,Q_{1},...,Q_{N}) = f_{mix}(s,Q_{1}(s,a_{1}),...,Q_{N}(s,a_{N})),
\end{equation}
where $f_{mix}$ is monotonic mixing network by enforcing $\frac{\partial Q_{tot}}{\partial Q_{i}} \geq 0$.

First, we analyze the estimation process of the target global Q-value. Lemma~\ref{lemma:maoverestimation} indicates that the overestimation of the target $Q_{tot}$ results from overestimated target $Q_{i}$s by monotonicity in multiagent value-mixing Q-learning. And the overestimation of the target $Q_{i}$ results from independent noise on each action-value of $Q_{i}$ by maximizing $Q_{i}$ over its action space. After an overestimated target $Q_{tot}$ is obtained, the online global Q-network is trained to fit this overestimated update target. Therefore, the overestimation of the target $Q_{tot}$ and $Q_i$ contribute to the multiagent overestimation.

Second, we analyze the optimization process of the online global Q-network. The update target of online global Q-network is computed as $y_{tot} = r + \gamma \max Q_{tot}(s',\mathbf{a}')$ with the overestimated target $Q_{tot}(s',\mathbf{a}')$. We found that, after the online global Q-network's optimization with $y_{tot}$, $\frac{\partial Q_{tot}}{\partial Q_i}$ impacts the overestimation of the global Q-network with a quadratic term as below.

\begin{theorem}
\label{severetheorem}
Assume that the current $Q_{tot}$ network approximates the optimal $Q_{tot}^*(\cdot)$ as $Q_{tot}(\cdot)=Q_{tot}^*(\cdot)$ and the current $Q_{i}$ network approximates the optimal $Q_{i}^*(\cdot)$ as $Q_{i}(\cdot)=Q_{i}^*(\cdot)$. By Lemma~\ref{lemma:maoverestimation}, if $l \leq \frac{\partial  Q_{tot}}{\partial Q_{i}} \leq L$ for $i=1, 2,..., N$ where $l \geq 0$, $L > 0$ and the estimated individual Q-value is with an independent noise uniformly distributed in $[-\epsilon, \epsilon]$ on each action, then the estimated update target $y_{tot}$ will be biased from the optimal target value $y^*_{tot}$ as $y_{tot} = y^*_{tot} + \Delta y$ where the estimation bias $\Delta y > 0$. If we continue to train the current $Q_{i}$ network by the loss $L_{mix}=\|y_{tot} - Q_{tot}\|^2$, the updated network $\hat{Q}_{i}$ will be biased from optimal $Q^{*}_i$ as
\begin{equation}
    \hat{Q}_{i} = Q^{*}_i + \Delta Q_i,
\end{equation}
where $\Delta Q_i \geq 2 \alpha l\Delta y $ for some learning rate $\alpha>0$. After the feedforward even without considering the updated $Q_{tot}$ network, the bias of the new $Q_{tot}$ value $\hat{Q}_{tot}$ from the optimal value $Q_{tot}^*$ becomes
\begin{equation}
\label{eq:severeoverestimation}
    \hat{Q}_{tot} = Q_{tot}^* + 2\alpha\Delta y\sum_{i=1}^N (\frac{\partial Q_{tot}}{\partial Q_i})^2.
\end{equation}
\end{theorem}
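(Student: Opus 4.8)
The plan is to model the optimization step as one iteration of gradient descent on the per-transition squared loss, treating the individual value $Q_i(s,a_i)$ as the quantity being updated, and then to propagate the resulting perturbation through the mixing network by a first-order expansion. First I would write the single-sample loss $L_{mix}=(y_{tot}-Q_{tot})^2$ and differentiate it with respect to $Q_i$ via the chain rule, noting that $Q_{tot}$ depends on $Q_i$ only through $f_{mix}$, so that $\frac{\partial L_{mix}}{\partial Q_i} = -2(y_{tot}-Q_{tot})\frac{\partial Q_{tot}}{\partial Q_i}$. The gradient step with rate $\alpha$ is then $\hat{Q}_i = Q_i - \alpha\frac{\partial L_{mix}}{\partial Q_i} = Q_i + 2\alpha(y_{tot}-Q_{tot})\frac{\partial Q_{tot}}{\partial Q_i}$.

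Next I would invoke the standing assumptions of the theorem to simplify the residual. Since the current networks sit at the optimum we have $Q_{tot}=Q_{tot}^*$, and at the optimal Bellman fixed point $y_{tot}^*=Q_{tot}^*$; combined with $y_{tot}=y_{tot}^*+\Delta y$ this gives $y_{tot}-Q_{tot}=\Delta y$. Substituting this together with $Q_i=Q_i^*$ into the update yields $\hat{Q}_i = Q_i^* + 2\alpha\Delta y\,\frac{\partial Q_{tot}}{\partial Q_i}$, i.e. $\Delta Q_i = 2\alpha\Delta y\,\frac{\partial Q_{tot}}{\partial Q_i}$. The claimed lower bound is then immediate: because $\frac{\partial Q_{tot}}{\partial Q_i}\geq l\geq 0$ with $\Delta y>0$ and $\alpha>0$, we obtain $\Delta Q_i \geq 2\alpha l\,\Delta y$.

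Finally, for the feedforward step I would expand $f_{mix}$ to first order in the perturbations $\Delta Q_i$ around the optimal point, holding the mixing-network weights fixed as the statement instructs, so that $\hat{Q}_{tot}=f_{mix}(s,Q_1^*+\Delta Q_1,\dots,Q_N^*+\Delta Q_N) = Q_{tot}^* + \sum_{i=1}^N \frac{\partial Q_{tot}}{\partial Q_i}\,\Delta Q_i + o(\lVert\Delta Q\rVert)$. Substituting the exact expression $\Delta Q_i = 2\alpha\Delta y\,\frac{\partial Q_{tot}}{\partial Q_i}$ collapses the sum into $2\alpha\Delta y\sum_{i=1}^N\bigl(\frac{\partial Q_{tot}}{\partial Q_i}\bigr)^2$, reproducing Eq.~(\ref{eq:severeoverestimation}). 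The quadratic term arises precisely because the same sensitivity $\frac{\partial Q_{tot}}{\partial Q_i}$ enters twice: once in the backpropagated update of each $Q_i$ and once again in the feedforward mixing.

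I expect the delicate part to be the bookkeeping of what is held fixed rather than any hard estimate. The feedforward Taylor step is exact for a linear mixer $f_{mix}=\sum_i w_i Q_i + b$, but for QMIX's nonlinear mixing one must either argue the higher-order remainder is negligible or read the equality as a first-order statement; one must also keep every $\frac{\partial Q_{tot}}{\partial Q_i}$ evaluated at the pre-update optimal point rather than re-evaluating after $Q_i$ has shifted. Making the ``train $Q_i$ but freeze the mixing weights for the feedforward'' convention explicit is the step requiring the most care, since it is exactly the reuse of the same partial derivative in both stages that yields the quadratic amplification.
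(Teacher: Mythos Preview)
Your proposal is correct and follows essentially the same route as the paper: compute the gradient of $L_{mix}$ with respect to $Q_i$, use $Q_{tot}=y_{tot}^*$ so the residual reduces to $\Delta y$, obtain $\Delta Q_i = 2\alpha\Delta y\,\partial Q_{tot}/\partial Q_i$, and then feed this back through a first-order Taylor expansion of $f_{mix}$ to get the quadratic sum. Your added remarks about the first-order approximation being exact for linear mixers and the need to evaluate all partials at the pre-update point are well placed; the paper's proof likewise treats the Taylor step as a first-order approximation and keeps the mixing network frozen for the feedforward.
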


\begin{proof}
From Lemma~\ref{lemma:maoverestimation}, we have an overestimated target value $y_{tot}=y_{tot}^{*} + \Delta y$ where $\Delta y > 0$. As we assume that the learned $Q_{tot}$ network approximates the optimal function, we have $Q_{tot}(s,Q_1,\ldots,Q_N) = y_{tot}^{*}$.
We apply the gradient method to minimize $L_{mix}$, the independent $Q_i$ is updated as follows,
\begin{equation}
\begin{split}
    \hat{Q}_i & = Q^*_i -\alpha \frac{\partial L_{mix}}{\partial Q_i}\\
    & = Q^*_i -\alpha \frac{\partial(y_{tot} - Q_{tot}(s,Q_1,\ldots,Q_N))^2}{\partial Q_i}\\
    & = Q^*_i - \alpha\frac{\partial(y^*_{tot}+\Delta y - Q_{tot}(s,Q_1,\ldots,Q_N))^2}{\partial Q_i}\\
    & = Q^*_i + 2\alpha(y^*_{tot}+\Delta y - Q_{tot}(s,Q_1,\ldots,Q_N))\frac{\partial Q_{tot}(s,Q_1,\ldots,Q_N)}{\partial Q_i} \\
    & = Q^*_i + 2\alpha\Delta y\frac{\partial Q_{tot}(s,Q_1,\ldots,Q_N)}{\partial Q_i}.
\end{split}
\end{equation}
Compared with updating with the ground-truth target global Q-value $y_{tot}^*$, $Q_i$ is biased with $\Delta Q_i = 2\alpha\Delta y\frac{\partial Q_{tot}}{\partial Q_i} \geq 2 \alpha l\Delta y$. Such a bias $\Delta Q_{i}$ will be propagated through the global Q-network's feedforward process to increase the overestimation of the new global Q-value $\hat{Q}_{tot}$ as
\begin{equation}
\label{eq:taylorexpansion}
    \begin{split}
        \hat{Q}_{tot}&=Q_{tot}(s,\hat{Q}_1,\ldots,\hat{Q}_N)
        =Q_{tot}(s,Q_1+\Delta Q_1,\ldots,Q_N+\Delta Q_N)\\
        &\approx Q_{tot}(s,Q_1,\ldots,Q_N) + \sum_{i=1}^N \frac{\partial Q_{tot}(s,Q_1,\ldots,Q_N)}{\partial Q_i}\Delta Q_i \\
        &=Q_{tot}(s,Q_1,\ldots,Q_N) + 2\alpha\Delta y\sum_{i=1}^N (\frac{\partial Q_{tot}(s,Q_1,\ldots,Q_N)}{\partial Q_i})^2 \\
        &= Q_{tot}^* + 2\alpha\Delta y\sum_{i=1}^N (\frac{\partial Q_{tot}(s,Q_1,\ldots,Q_N)}{\partial Q_i})^2 \\
        &\geq Q^*_{tot}+2\alpha N l^2\Delta y,
    \end{split}
\end{equation}
where the second approximation comes from the first-order Taylor expansion of the multivariate function. Thus, $Q_{i}$'s overestimation causes more severe overestimation for $Q_{tot}$ even if we only update the online $Q_{i}$ network in one optimization step. Furthermore, in repeated estimation-optimization iterations, such a bias accumulates to increase the multiagent Q-value overestimation.
\end{proof}

As Eq.~(\ref{eq:severeoverestimation}) shows, when forwarding the updated biased $\hat{Q}_{i}$ to compute a new global Q-value $\hat{Q}_{tot}$, $\frac{\partial Q_{tot}}{\partial Q_i}$ impacts the global Q-value's overestimation with a quadratic term. Besides, Pan et al. \cite{pan_regularized_2021} empirically show that $\frac{\partial Q_{tot}}{\partial Q_i}$ in QMIX continually increases with overestimation. Therefore, we also need to regularize $\frac{\partial Q_{tot}}{\partial Q_i}$ in $Q_{tot}$'s optimization. In conclusion, the overall overestimation analysis motivates us to control the overestimated $Q_{tot}$ and $Q_{i}$ in the target Q-value estimation, as well as to regularize $\frac{\partial Q_{tot}}{\partial Q_{i}}$ in the online $Q_{tot}$ network's optimization.

\begin{algorithm*}[htbp]
\small
\caption{Dual Ensembled Multiagent Q-Learning with Hypernet Regularizer (DEMAR)}
\label{alg:demar}
\begin{algorithmic}[1]
    \STATE Initialize individual Q-value network parameters $\theta_{1,1},..., \theta_{1,K}, ..., \theta_{N,1}, ..., \theta_{N,K}$, global Q-value network parameters $\phi_{1}, \phi_{2}, ..., \phi_{H}$ and empty replay buffer $D$. Set target network parameters $\bar{\theta}_{i,k} \gets \theta_{i,k}$ for $i=1,2,...,N$, $k=1,2,...,K$ and $\bar{\phi}_{h} \gets \phi_{h}$ for $h=1,2,...,H$.
    \FOR{Episode $1,2,3...$}
    \STATE Each agent $i$ takes action $a_{i,t} \sim \pi_{\theta_{i}}(\cdot|o_{i,t})$. Step into state $s_{t+1}$. Receive reward $r_{t}$ and observe $o_{i,t+1}$.
    \STATE Add data to buffer: $D \gets D \cup \{(s_{t}, \mathbf{o}_{t}, \mathbf{a}_{t}, r_{t}, s_{t+1}, \mathbf{o}_{t+1})\}$.
    \STATE Sample a mini-batch $B=\{(s,\mathbf{o},\mathbf{a}_{t},r,s',\mathbf{o}')\}$ from $D$.
    \STATE Sample a set $\mathbb{K}$ of $N_{\mathbb{K}}$ distinct indices from $\{1, 2, ..., K\}$.
    \STATE Sample a set $\mathbb{H}$ of $N_{\mathbb{H}}$ distinct indices from $\{1, 2, ..., H\}$.
    \STATE Compute the Q target $y_{tot}$ which is the same for all $H$ critics (denote $\bar{\theta}_{i,1}, ..., \bar{\theta}_{i,K}$ as $\bar{\boldsymbol{\theta}}_{i}$):
    \begin{equation}
    \begin{split}
    \nonumber
        y_{tot} = r + \gamma & ({\color{blue}\min_{h \in \mathbb{H}}Q_{tot}^{\bar{\phi}_{h}}}(s',Q_{1}^{\bar{\boldsymbol{\theta}}_{1}}(o_{1}',a_{1}'),...,Q_{N}^{\bar{\boldsymbol{\theta}}_{N}}(o_{N}',a_{N}')), \text{where } Q_{i}^{\bar{\boldsymbol{\theta}}_{i}}(o_{i}',a_{i}') = \max_{a_{i}'}{\color{blue}\min_{k \in \mathbb{K}} Q_{i}^{\bar{\theta}_{i,k}}}(o_{i}',a_{i}').
    \end{split}
    \end{equation}
    \FOR{$h=1,...,H$}
    \STATE Update $\phi_{h}, \theta_{1,1},...,\theta_{N,K}$ (denote $\theta_{1,1},...,\theta_{N,K}$ as $\boldsymbol{\theta}$) with gradient descent:
    \begin{equation}
    \begin{split}
    \nonumber
        & \quad \nabla_{\phi_{h}, \boldsymbol{\theta}} L(\phi_{h}, \boldsymbol{\theta}), \text{where}\\ 
        L(\phi_{h},\boldsymbol{\theta}) = \frac{1}{|B|} \sum_{(s,\mathbf{o},\mathbf{a},r,s',\mathbf{o}') \in B} ([Q^{\phi_{h}}_{tot}(s,Q_{1}^{\boldsymbol{\theta}_{1}}&(o_{1},a_{1}),...,Q_{N}^{\boldsymbol{\theta}_{N}}(o_{N},a_{N})) - y_{tot}]^{2} + {\color{blue}\alpha_{reg} L_{reg}^{h}(s)}), \text{and } Q_{i}^{\boldsymbol{\theta}_{i}}(o_{i},a_{i})={\color{blue}mean_{k=1}^{K} Q_{i}^{\theta_{i,k}}}(o_{i},a_{i}).
    \end{split}
    \end{equation}
    \ENDFOR
    \STATE Update target networks $\bar{\phi}_{1} \gets \phi_{1}$, ...,  $\bar{\phi}_{H} \gets \phi_{H}$ and $\bar{\theta}_{1,1} \gets \theta_{1,1}$, ..., $\bar{\theta}_{N,K} \gets \theta_{N,K}$ every $C$ times.
    \ENDFOR
\end{algorithmic}
\end{algorithm*}

\begin{figure*}[htbp]
\centering
\subfigure[simple\_tag]{
\label{tag}
\includegraphics[width=0.321\textwidth]{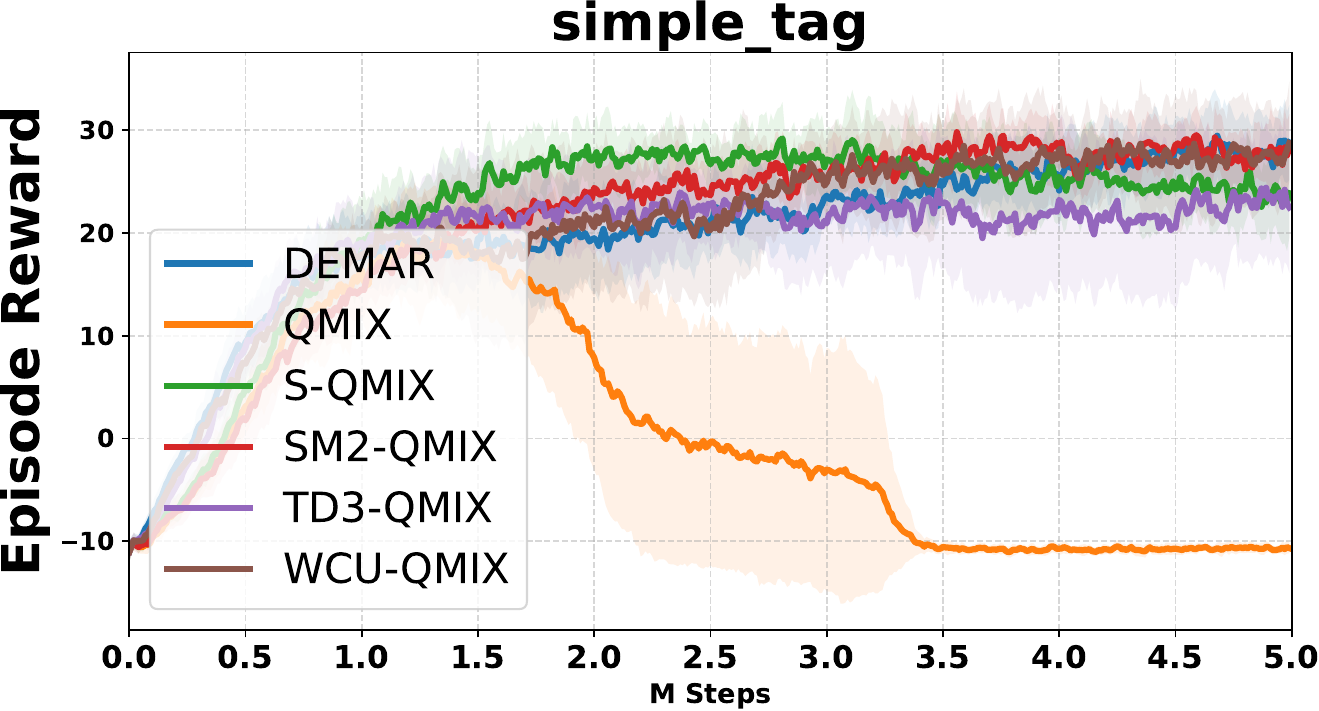}}
\subfigure[simple\_world]{
\label{world}
\includegraphics[width=0.321\textwidth]{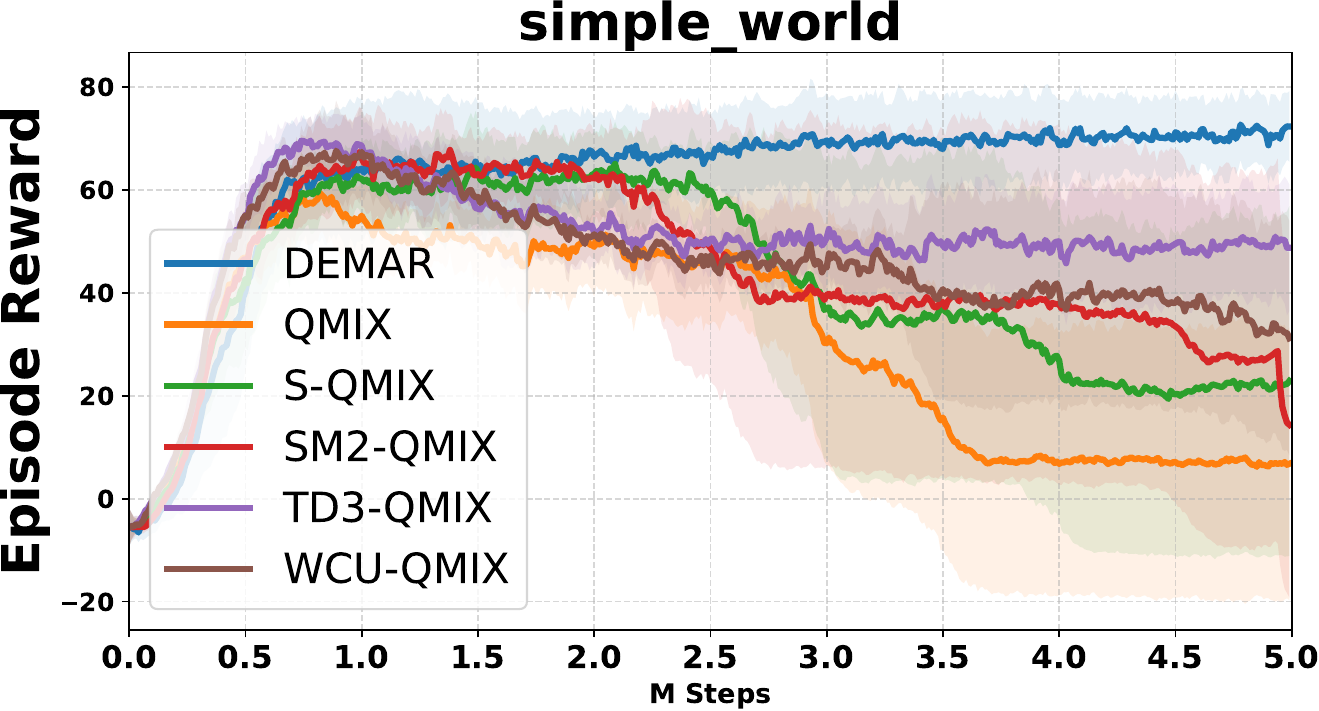}}
\subfigure[simple\_adversary]{
\label{adversary}
\includegraphics[width=0.321\textwidth]{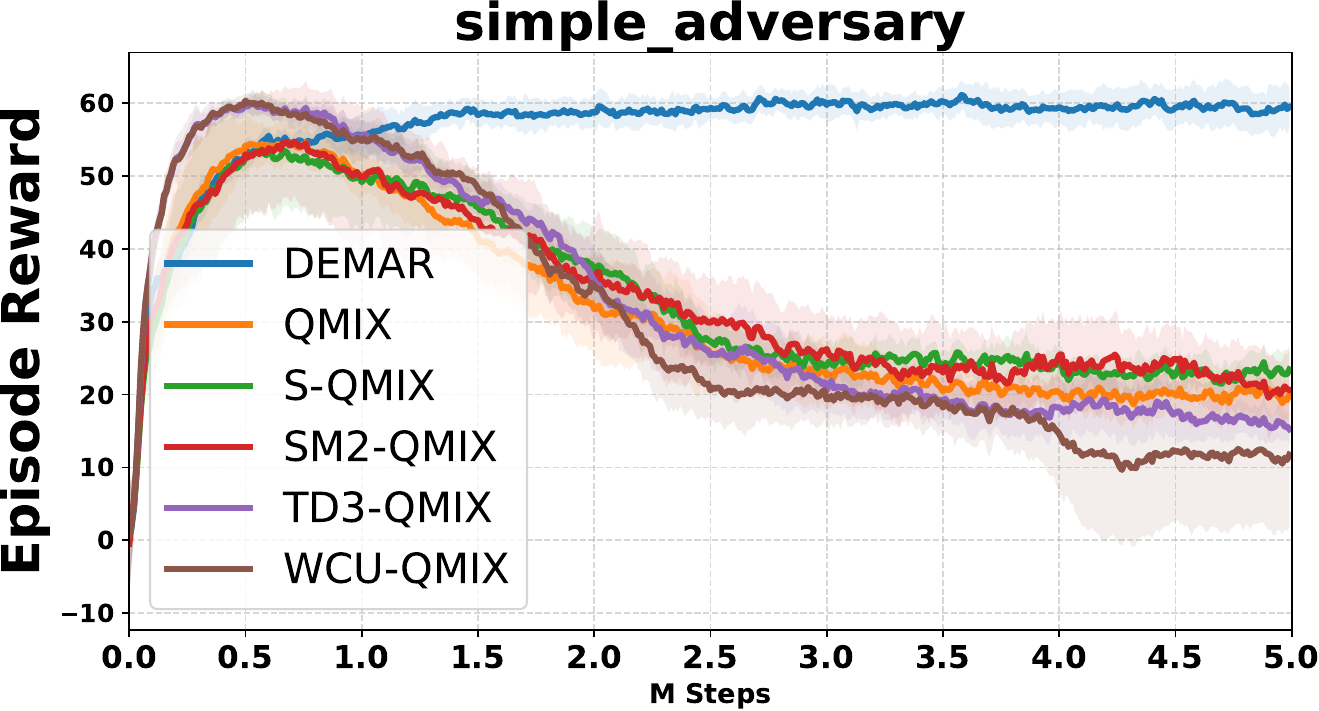}}
\subfigure[simple\_tag overestimation]{
\label{tagoverestimation}
\includegraphics[width=0.321\textwidth]{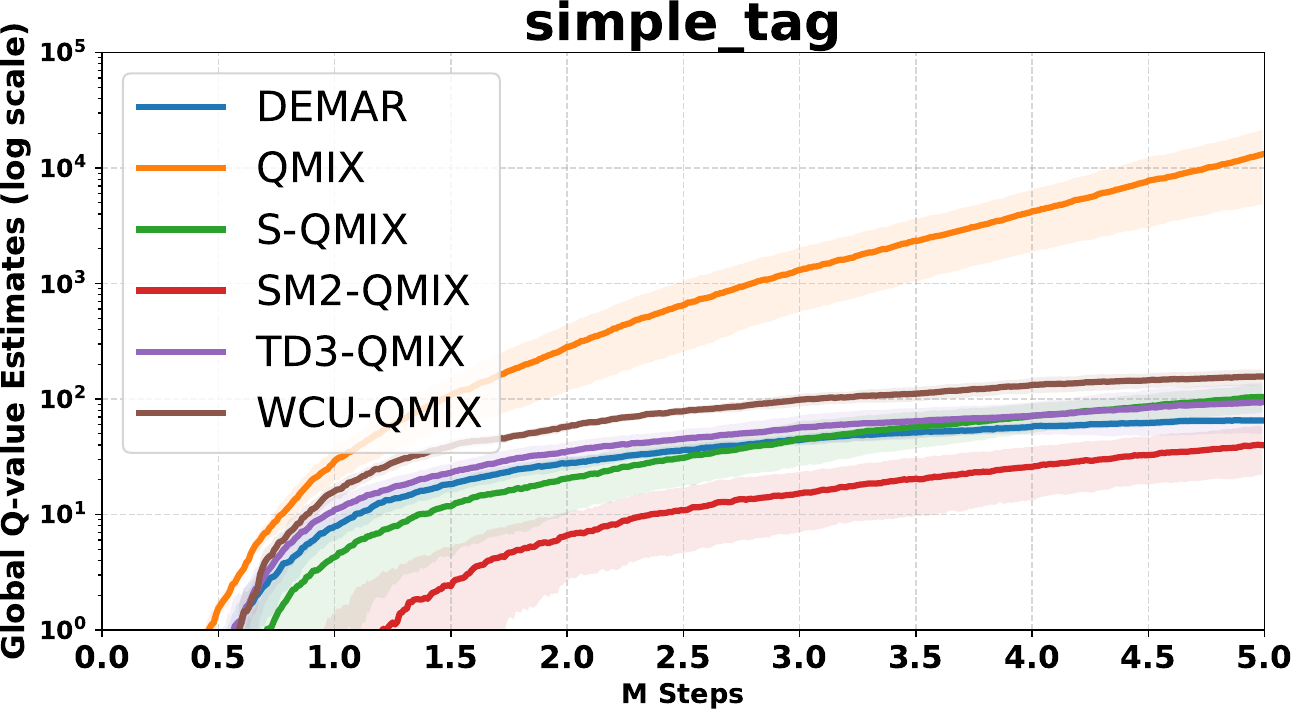}}
\subfigure[simple\_world overestimation]{
\label{worldoverestimation}
\includegraphics[width=0.321\textwidth]{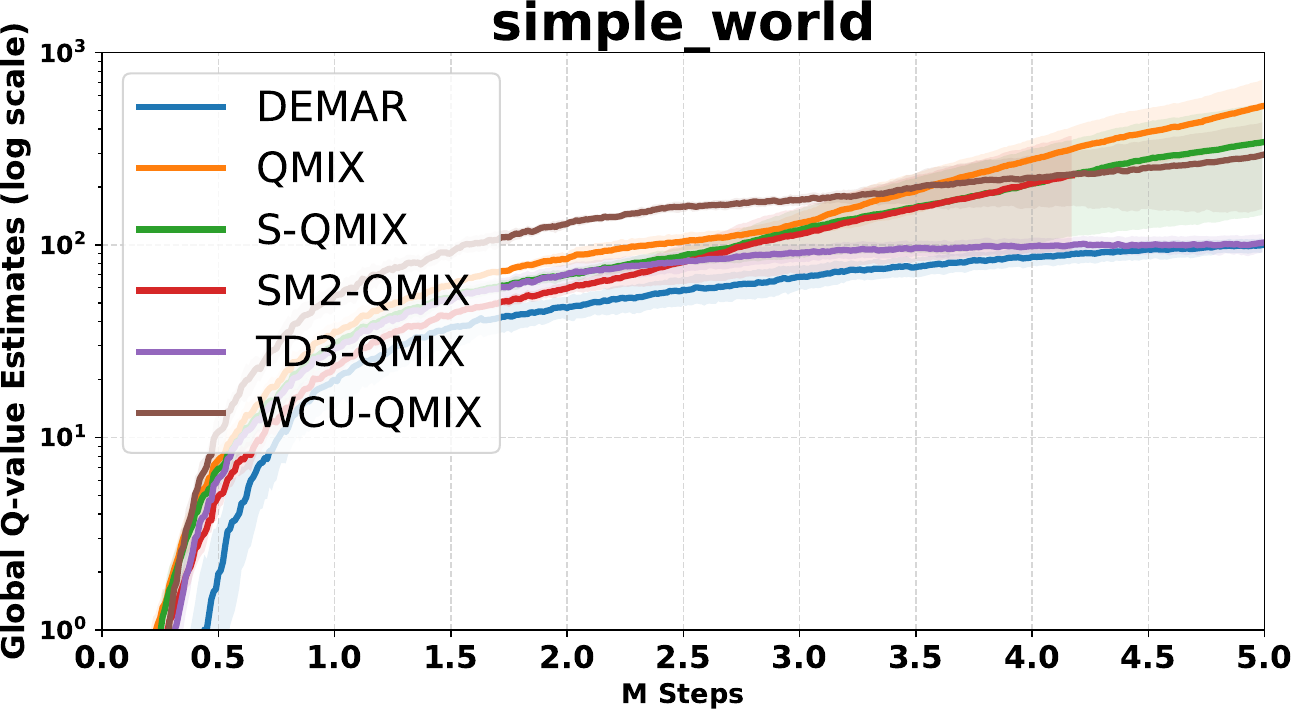}}
\subfigure[simple\_adversary overestimation]{
\label{adversaryoverestimation}
\includegraphics[width=0.321\textwidth]{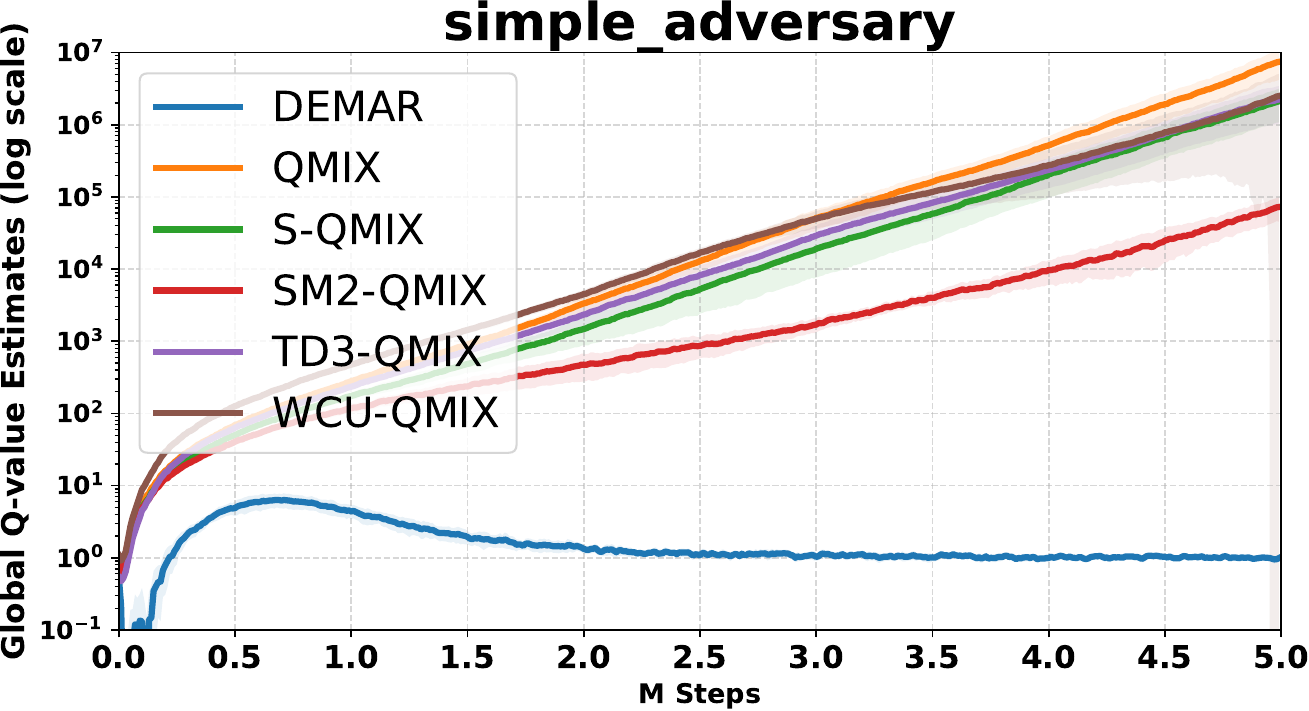}}
\caption{Results on different MPE scenarios. Figure~\ref{tag}-\ref{adversary} show the learning performance of each method on MPE tasks. Figure~\ref{tagoverestimation}-\ref{adversaryoverestimation} show the estimated global Q-value of each method in the log scale on MPE tasks.}
\label{figure:mpe}
\end{figure*}

\begin{figure*}[htbp]
\centering
\subfigure[5m\_vs\_6m]{
\label{5m6m}
\includegraphics[width=0.242\textwidth]{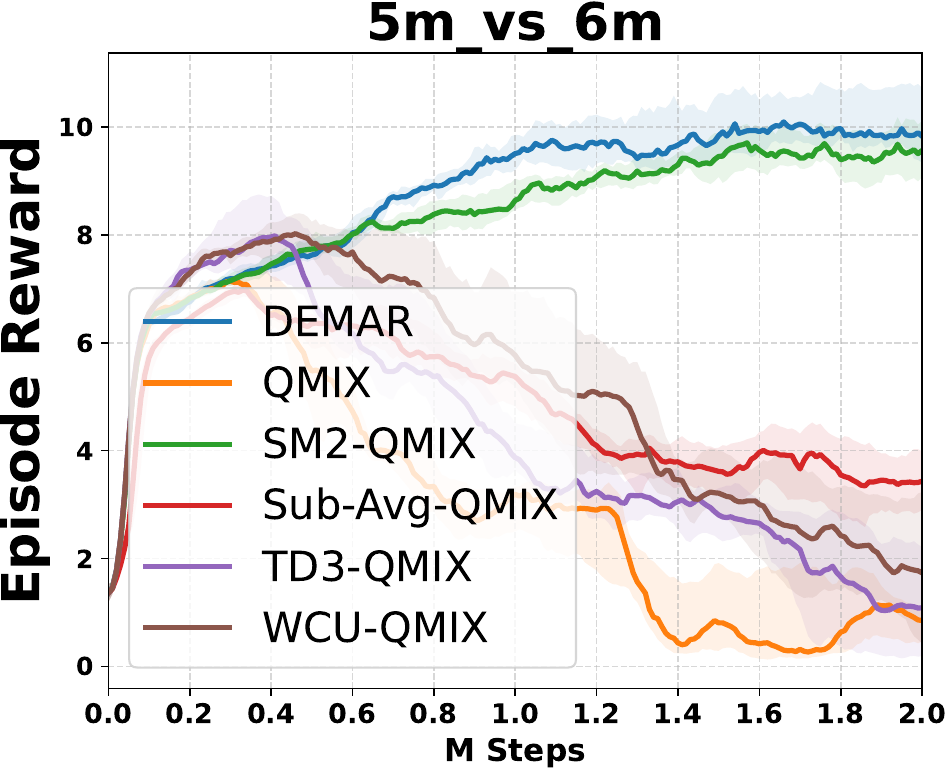}}
\subfigure[2s3z]{
\label{2s3z}
\includegraphics[width=0.242\textwidth]{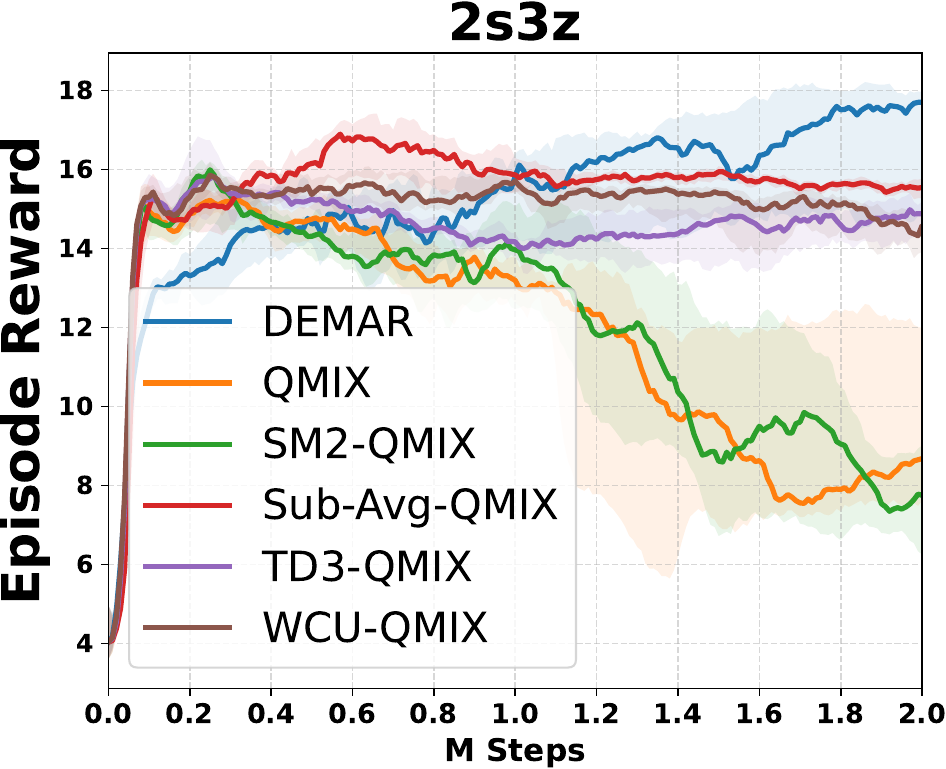}}
\subfigure[3s5z]{
\label{3s5z}
\includegraphics[width=0.242\textwidth]{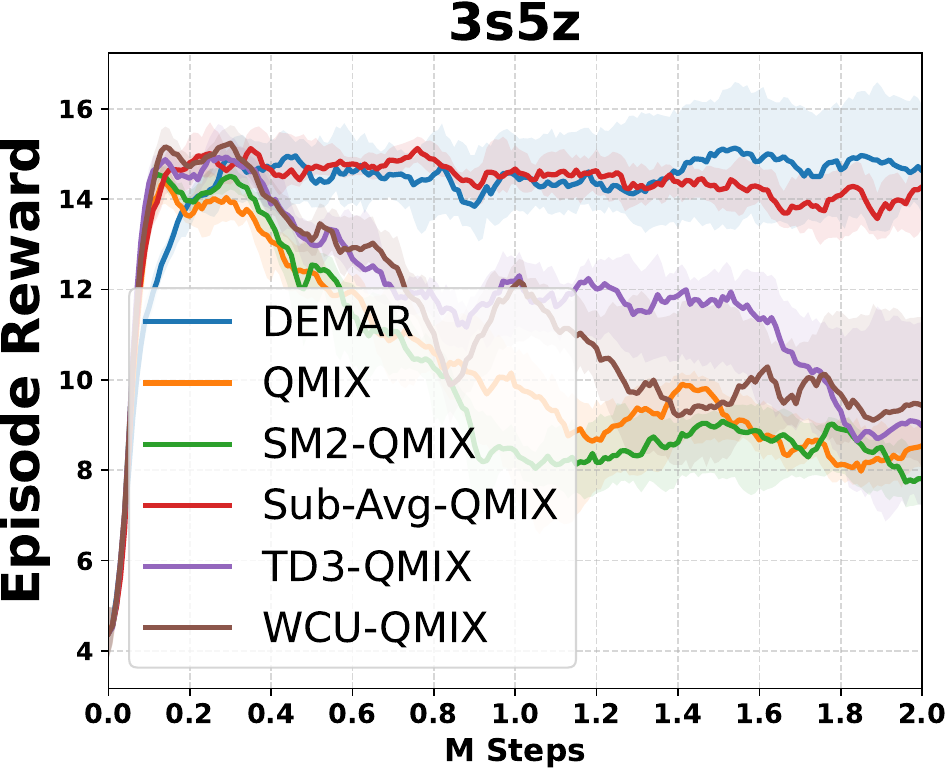}}
\subfigure[10m\_vs\_11m]{
\label{10m11m}
\includegraphics[width=0.242\textwidth]{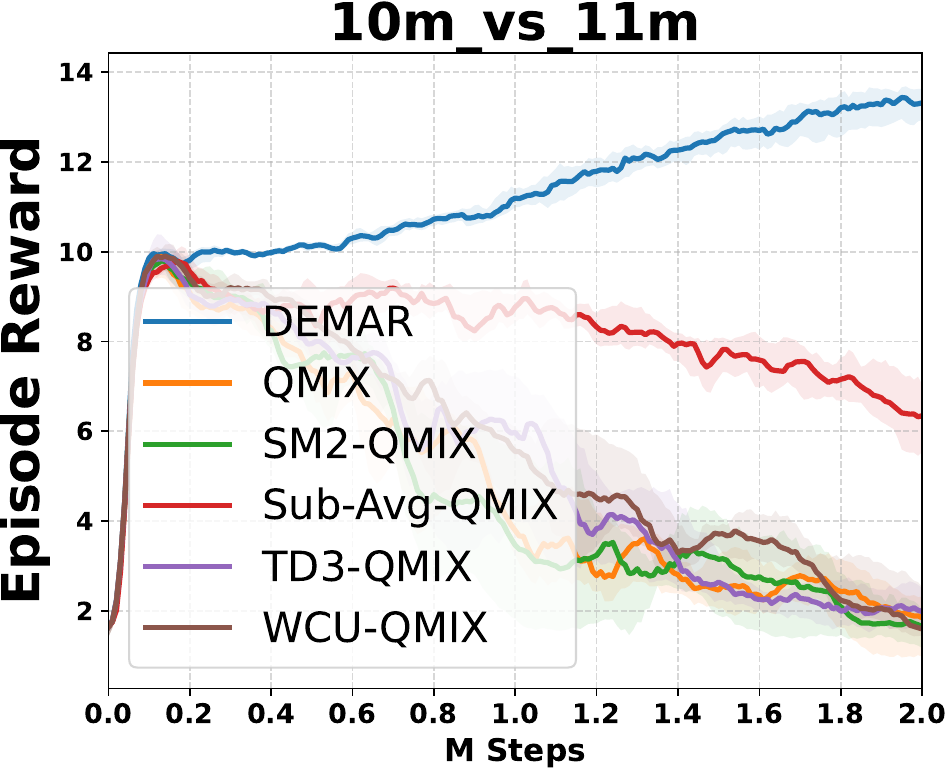}}
\subfigure[$Q_{tot}$ on 5m\_vs\_6m]{
\label{5m6moverestimation}
\includegraphics[width=0.242\textwidth]{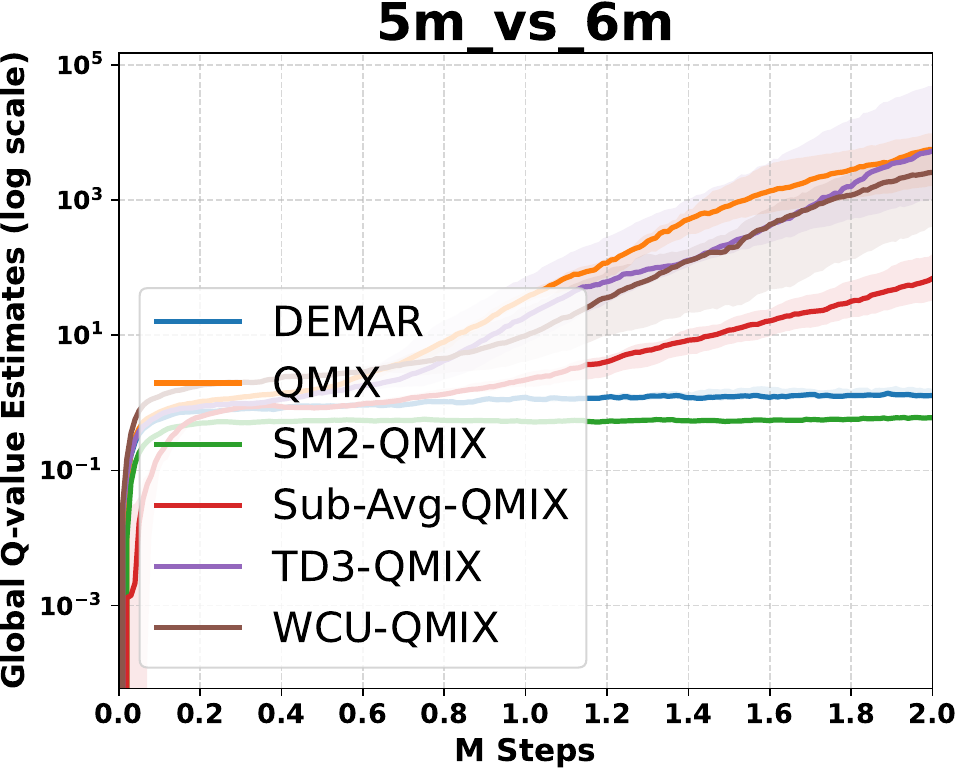}}
\subfigure[$Q_{tot}$ on 2s3z]{
\label{2s3zoverestimation}
\includegraphics[width=0.242\textwidth]{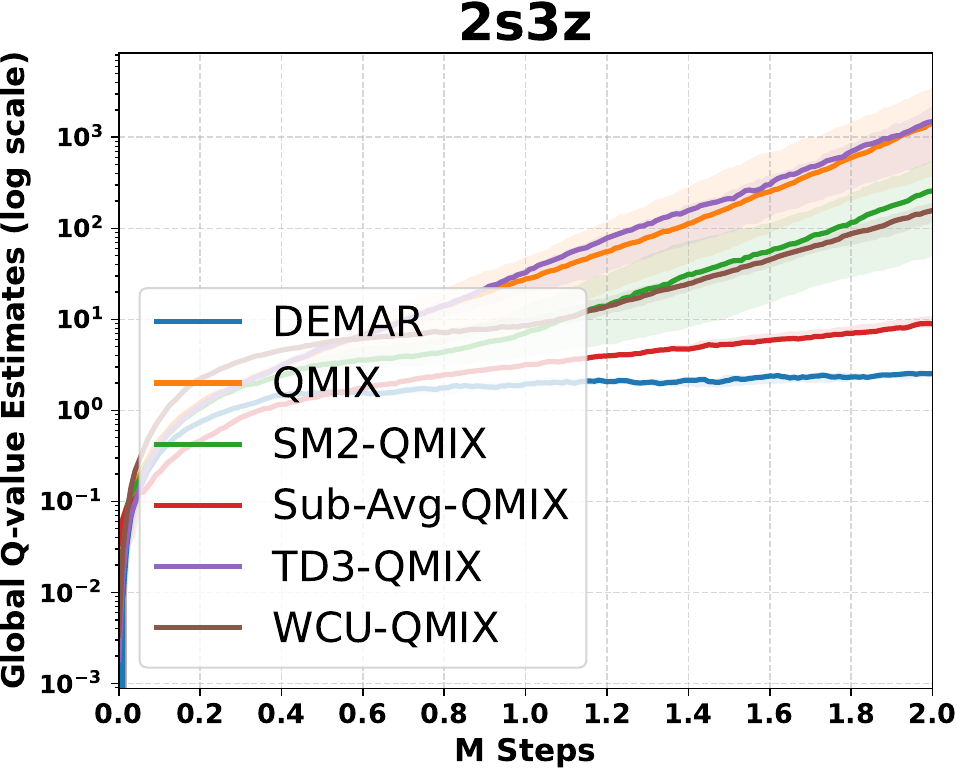}}
\subfigure[$Q_{tot}$ on 3s5z]{
\label{3s5zoverestimation}
\includegraphics[width=0.242\textwidth]{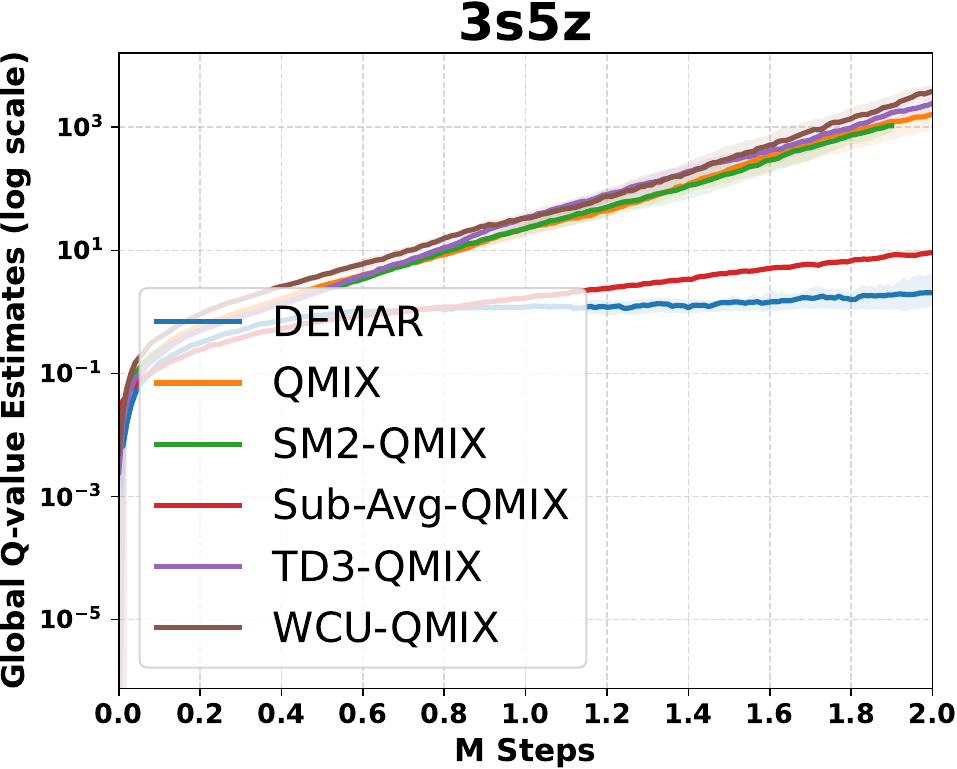}}
\subfigure[$Q_{tot}$ on 10m\_vs\_11m]{
\label{10m11moverestimation}
\includegraphics[width=0.242\textwidth]{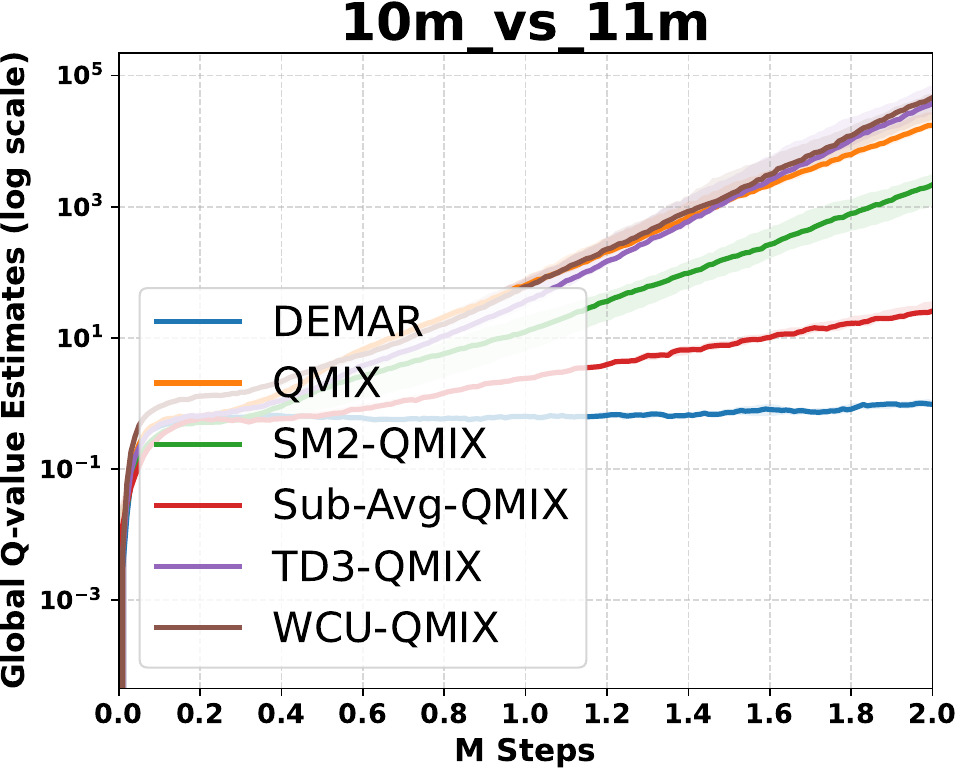}}
\caption{Results on different noisy SMAC scenarios. Figure~\ref{5m6m}-\ref{10m11m} show the learning performance of each method on SMAC tasks. Figure~\ref{5m6moverestimation}-\ref{10m11moverestimation} show the estimated global Q-value of each method in the log scale on SMAC tasks.}
\label{figure:smac}
\end{figure*}

\subsection{Dual Ensembled Multiagent Q-Learning}
\label{subsection:dema}
Motivated by the above analysis, first, we are going to control the overestimated $Q_{tot}$ and $Q_{i}$ in the estimation process of target $Q_{tot}$. Similar to works in single-agent DRL, here we introduce the idea of the ensemble into the estimation process of target Q-values to derive a lower update target. We extend REDQ \cite{chen_randomized_2021}, a state-of-the-art ensemble method with theoretical guarantee and impressive performance from single-agent DRL, into the multiagent value-mixing Q-learning algorithms. As shown in Section~\ref{sec:redq}, REDQ uses an in-target minimization across a random subset of Q-functions from the ensemble of Q-networks to reduce overestimation. However, REDQ is specially built on the single-agent SAC \cite{haarnoja_soft_2018} algorithm. It cannot be directly applied to the multiagent value-mixing Q-learning which involves the process of mixing individual $Q_{i}$s into $Q_{tot}$. In this study, we carefully design a dual ensembled algorithm based on the random ensemble technique from REDQ to control both the overestimated target $Q_{i}$ and $Q_{tot}$ during the mixing process. Next, we explain the details.

The global target value of multiagent value-mixing Q-learning \cite{rashid_qmix_2018} is computed as 
\begin{equation}
\label{eq:targetqtot}
\begin{aligned}
    y_{tot} & = r + \gamma \max_{\mathbf{a}'} Q_{tot}^{\bar{\phi}}(s',\mathbf{Q}(s',\mathbf{a}'_{i})) \\
    & = r + \gamma Q_{tot}^{\bar{\phi}}(s', \max_{\mathbf{a}'} \mathbf{Q}(s',\mathbf{a}'_{i})) \\
    & = r + \gamma Q_{tot}^{\bar{\phi}}(s',\max_{a_{1}'}Q^{\bar{\theta}_{1}}_{1}(o_{1}',a_{1}'),...,\max_{a_{N}'}Q^{\bar{\theta}_{N}}_{N}(o_{N}',a_{N}')),
\end{aligned}
\end{equation}
where $\bar{\phi}$ and $\bar{\theta}$ are the parameters of the target global Q-network and target individual Q-networks respectively. As analyzed in Section~\ref{subsection:analysis}, the overestimation in target $Q_{i}$ results in the overestimation of target $Q_{tot}$. Therefore, we first apply the minimization operation in the random ensemble of target $Q_{i}$ networks to reduce target $Q_{i}$'s overestimation. Thus, the target $Q_{i}$ of agent $i$ is computed as
\begin{equation}
    Q_{i}^{\bar{\boldsymbol{\theta}}_{i}} = \min_{k \in \mathbb{K}} Q_{i}^{\bar{\theta}_{i,k}},
\end{equation}
where $\mathbb{K}$ is a subset with size $N_{\mathbb{K}}$ randomly sampled from $\{1,2,...,K\}$ and $\bar{\theta}_{i,k}$ are the parameters of the agent $i$'s $k$th target individual Q-network. After getting the target $Q_{i}$, we are able to compute the target $Q_{tot}$. Similarly, we use the minimization operation again in the random ensemble of target $Q_{tot}$ networks to reduce the overestimation of target global Q-value as
\begin{equation}
    Q_{tot}^{\bar{\phi}} = \min_{h \in \mathbb{H}} Q_{tot}^{\bar{\phi}_{h}},
\end{equation}
where $\mathbb{H}$ is a subset with size $N_{\mathbb{H}}$ randomly sampled from $\{1,2,...,H\}$ and $\bar{\phi}_{h}$ are the parameters of the $h$th target global Q-network. With the proposed dual ensembled Q-learning technique, we derive a lower update target for the online global Q-network. In addition, as indicated by Theorem 1 of REDQ \cite{chen_randomized_2021}, we are able to flexibly control the overestimation of the target $Q_{tot}$ and $Q_{i}$ by changing their Q-network ensemble sizes and the random subset sizes respectively. Next, we are going to prevent the overestimation accumulation in the optimization process of the global Q-network.

\subsection{Hypernet Regularizer}
\label{subsection:hr}
The analysis in Section~\ref{subsection:analysis} indicates that, in the optimization step of the online $Q_{tot}$ network, $\frac{\partial Q_{tot}}{\partial Q_{i}}$ impacts the multiagent overestimation with a quadratic term. To tackle this issue, we propose a novel hypernet regularizer to constrain this term in the online global Q-network's optimization. Specifically, we use the L1 sparse regularization on hypernetwork weights and biases to constrain $\frac{\partial Q_{tot}}{\partial Q_{i}}$ as
\begin{equation}
    L_{reg}=\sum{|\mathbf{W}_f|} + \sum{|\mathbf{B}_f|},
\end{equation}
where $\mathbf{W}_f$ and $\mathbf{B}_f$ are hypernetwork weights and biases in $f_{mix}$ produced from separate hypernetworks. The proof of using the hypernet regularizer to constrain $\frac{\partial Q_{tot}}{\partial Q_{i}}$ is given in Appendix B.

The final loss function for DEMAR becomes
\begin{equation}
    L(\phi_{h},\boldsymbol{\theta})=L_{mix}^{h} + \alpha_{reg} L_{reg}^{h},
\end{equation}
where $\alpha_{reg}$ is the coefficient of hypernet regularization. $h$ is the index of $Q_{tot}$ network, $\phi_{h}$ are the network parameters of $h$th $Q_{tot}$ network, and $\boldsymbol{\theta}$ are the network parameters of all $Q_{i}$ networks.

DEMAR is completely described in Algorithm~\ref{alg:demar}. \textbf{Line 1} initializes the empty replay buffer, the online and target individual Q-value networks for each agent, as well as the online and target global Q-value networks. \textbf{Line 3-4} interact with the environment based on agent policies and store the transition into the replay buffer. \textbf{Line 5} samples a mini-batch of transitions from the replay buffer. \textbf{Line 6-7} sample the indices for selecting instances from the $Q_{i}$ network ensemble and $Q_{tot}$ network ensemble respectively. \textbf{Line 8} computes the target $Q_{tot}$ value with in-target minimization across the selected network subsets. Specifically, for the computation of $\max_{a_{i}'}\min_{k \in \mathbb{K}} Q_{i}^{\bar{\theta}_{i,k}}(o_{i}', a_{i}')$, the $\min_{k \in \mathbb{K}} Q_{i}^{\bar{\theta}_{i,k}}$ is first computed for each action in agent $i$' action space by finding the minimal value in the ensemble subset, then the $\max_{a_{i}'}$ operation is executed. \textbf{Line 9-11} update all online $Q_{i}$ and $Q_{tot}$ networks by the loss $L$. Note that the online $Q_{i}$ is calculated by averaging the ensemble of $Q_{i}$ values with given actions instead of randomly picking up one value from the $Q_{i}$ ensemble, which enjoys the benefit of reducing value variance. \textbf{Line 12} updates all target $Q_{i}$ and $Q_{tot}$ networks by copying network parameters from their online versions periodically. With the dual ensembled Q-learning and hypernet regularizer, DEMAR is able to control the multiagent overestimation. Additionally, by setting $H=N_{\mathbb{H}}=K=N_{\mathbb{K}}=1$ and $\alpha_{reg}=0$, DEMAR degenerates to vanilla QMIX.

\section{Experiments}
\label{sec:experiments}

In this section, we first conduct experiments in the multiagent particle environment (MPE) \cite{lowe_multi-agent_2017} and a noisy version of the StarCraft multiagent challenge (SMAC) \cite{samvelyan_starcraft_2019} that both suffer from overestimation. Then we perform ablation studies to validate the proposed dual ensembled Q-learning and hypernet regularizer of DEMAR separately. Next, we experimentally examine the overestimation terms as analyzed in Section~\ref{subsection:analysis} and compare the estimated Q-value of DEMAE with its true Q-value. Finally, we extend DEMAR to other advanced MARL approaches to test the generality of DEMAR.

\subsection{Experiments in MPE}

We first evaluate DEMAR in MPE including the \textit{simple\_tag}, \textit{simple\_world}, and \textit{simple\_adversary} scenarios. \textit{Simple\_tag} is a predator-prey task where 3 slower predators coordinate to capture a faster prey. \textit{Simple\_world} involves 4 slower agents to catch 2 faster adversaries that desire to eat food. \textit{Simple\_adversary} involves 2 cooperating agents and 1 adversary where agents need to reach a specified target landmark of two landmarks while the adversary is unaware of the target. We compare DEMAR with several baselines including QMIX \cite{rashid_qmix_2018}, S-QMIX \cite{pan_regularized_2021}, SM2-QMIX \cite{gan_stabilizing_2021}, TD3-QMIX \cite{ackermann_reducing_2019} and WCU-QMIX \cite{sarkar_weighted_2021}. All methods are implemented in the pymarl framework \cite{samvelyan_starcraft_2019} and use one-step return. The details of each baseline are provided in Appendix C. We follow the training and evaluation settings of Pan et al. \cite{pan_regularized_2021}. For the hyperparameter tuning, we conduct a grid search for each baseline. To reduce the load of the hyperparameter tuning for DEMAR, we adopt a heuristic sequential searching to tune the hyperparameters of DEMAR, which is provided in Appendix E with the hyperparameter settings in MPE. For each method, we run 5 independent trials with different random seeds and the resulting plots include the mean performance as well as the shaded 95\% confidence interval in Figure~\ref{tag}-\ref{adversary}. Besides, we plot the estimated $Q_{tot}$ in the log scale to show the overestimation status of each method in Figure~\ref{tagoverestimation}-\ref{adversaryoverestimation}.

As shown in Figure~\ref{figure:mpe}, DEMAR prevents overestimation and stabilizes the learning on all three MPE tasks. QMIX fails on all tasks as it gets the most severe overestimation as shown in Figure~\ref{tagoverestimation}-\ref{adversaryoverestimation}. Meanwhile, although WCU-QMIX, S-QMIX, and SM2-QMIX tackle the \textit{simple\_tag} task, they fail on the two other tasks. While TD3-QMIX controls the overestimation in \textit{simple\_tag} and \textit{simple\_world}, it cannot tackle \textit{simple\_adversary} where the overestimation on this task is the most severe. Overall, DEMAR is the only method that successfully stabilizes the learning on all tasks in MPE.

\subsection{Experiments in Noisy SMAC}
\label{sec:noisysmac}

Next, we conduct the experiments on a noisy version of the commonly used MARL benchmark SMAC \cite{samvelyan_starcraft_2019}. In this noisy SMAC environment, the interference signal noises are added to the sensors of each agent's observation and the global state (details in Appendix D). The noise increases the variances of the individual Q-value and global Q-value, and thus raises the overestimation problem for multiagent Q-learning algorithms, making the noisy SMAC environment a good testbed for MARL algorithms designed to address the multiagent overestimation problem. We perform the experiments on 4 tasks including \textit{5m\_vs\_6m}, \textit{2s3z}, \textit{3s5z}, and \textit{10m\_vs\_11m}. In \textit{5m\_vs\_6m}, there are 5 allied marines against 6 marine enemies. In the map \textit{2s3z}, both sides have 2 Stalkers and 3 Zealots. For map \textit{3s5z}, both sides have 3 Stalkers and 5 Zealots. In \textit{10m\_vs\_11m}, there are 10 allied marines against 11 marine enemies. We train multiple agents to control allies respectively while a built-in handcrafted AI controls the enemies. Training and evaluation schedules such as the testing setting and training hyperparameters are kept unchanged \cite{samvelyan_starcraft_2019}. The version of StarCraft II is 4.6.2.

We compare DEMAR with QMIX, SM2-QMIX, TD3-QMIX, WCU-QMIX, and Sub-Avg-QMIX \cite{wu_sub-avg_2022}. Here we add the Sub-Avg-QMIX which performs well on standard SMAC tasks \cite{wu_sub-avg_2022}. We omit S-QMIX's results as it collapses while training. Similarly, we conduct the grid hyperparameter searching for baselines and use the sequential hyperparameter searching for DEMAR. The hyperparameter settings in SMAC are in Appendix E. Results are averaged over 6 independent training trials with different random seeds, and the resulting plots include the median performance as well as the shaded 25-75\% percentiles. Here we report the episode reward as the performance metric instead of the test winrate because the test winrate is almost zero for some baselines in most tasks. The results are shown in Figure~\ref{5m6m}-\ref{10m11m}. We also report the estimated $Q_{tot}$ in the log scale in Figure~\ref{5m6moverestimation}-\ref{10m11moverestimation}.

Figure~\ref{figure:smac} shows that, in the noisy SMAC, QMIX suffers from severe overestimation and cannot learn stably on all tasks. Meanwhile, although Sub-Avg-QMIX controls the overestimation and stabilizes learning on \textit{2s3z} and \textit{3s5z}, it fails on \textit{5m\_vs\_6m} and \textit{10m\_vs\_11m}. At the same time, SM2-QMIX performs well on \textit{5m\_vs\_6m} by successfully controlling overestimation on this task but fails on the other three tasks. Both TD3-QMIX and WCU-QMIX fail to control the overestimation on most tasks. While other baselines cannot consistently tackle all the tasks, DEMAR successfully reduces the overestimation to stabilize the learning in these four maps, which validates DEMAR's effectiveness. 

\begin{figure}[htbp]
\centering
\subfigure[simple\_tag ablation]{
\label{tagablation}
\includegraphics[width=0.231\textwidth]{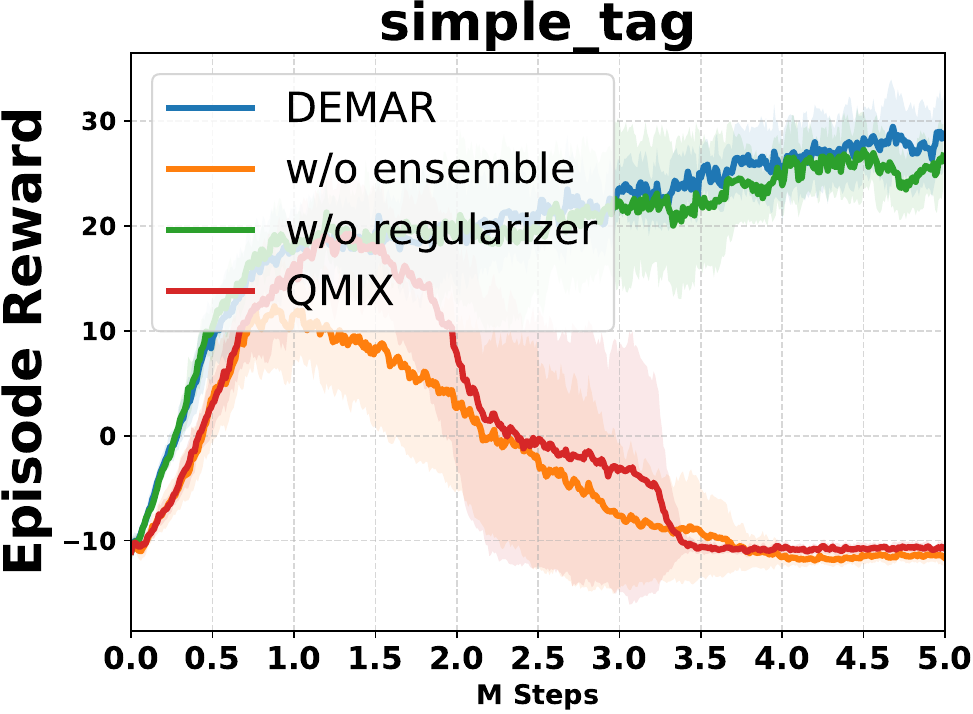}}
\subfigure[$Q_{tot}$ on simple\_tag]{
\label{tagablationoverestimation}
\includegraphics[width=0.231\textwidth]{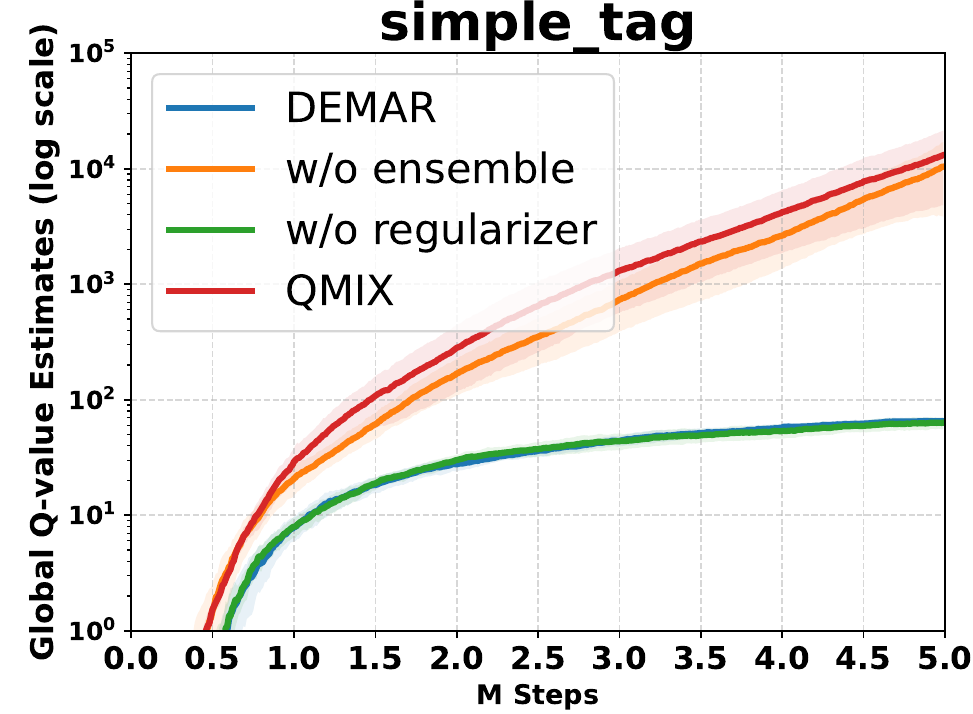}}
\subfigure[5m\_vs\_6m ablation]{
\label{5m6mablation}
\includegraphics[width=0.231\textwidth]{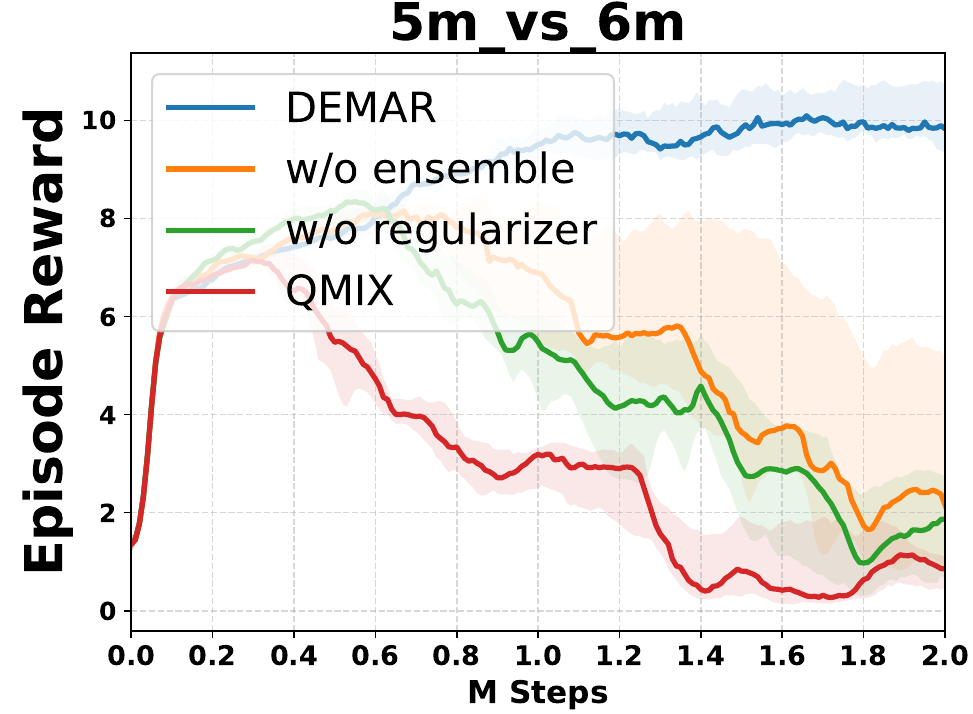}}
\subfigure[$Q_{tot}$ on 5m\_vs\_6m]{
\label{5m6mablationoverestimation}
\includegraphics[width=0.231\textwidth]{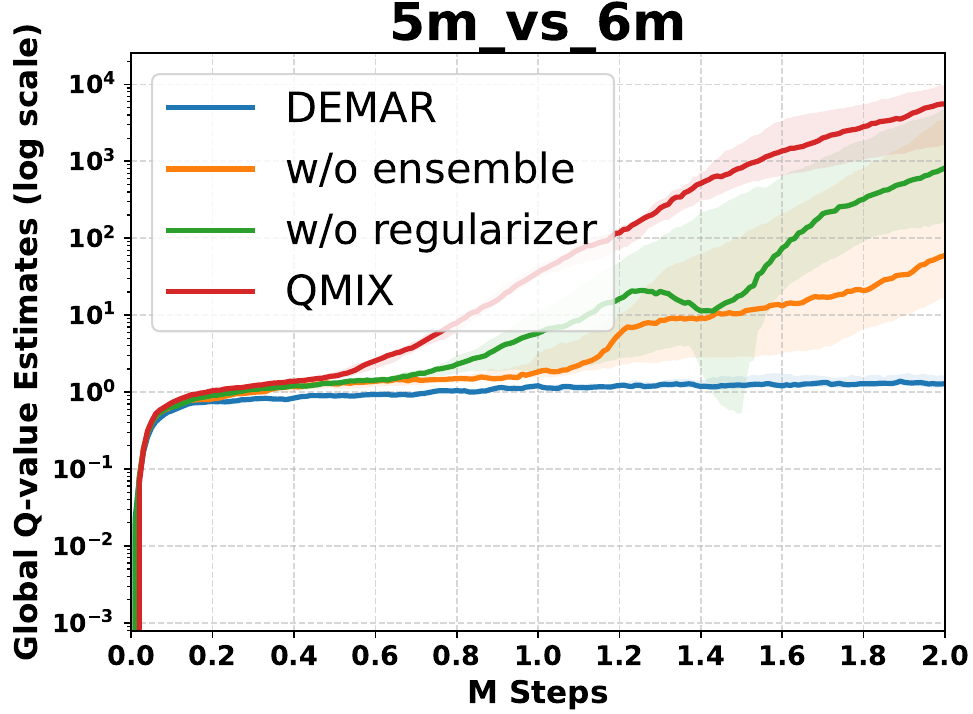}}
\caption{Results of the ablation study on \textit{simple\_tag} and \textit{5m\_vs\_6m}. The w/o ensemble indicates DEMAR without the dual ensembled Q-learning. The w/o regularizer represents DEMAR without the hypernet regularizer.}
\label{figure:ablation}
\end{figure}

\subsection{Ablation Study of DEMAR}

In this section,  we perform the ablation study to validate each component of DEMAR. We compare DEMAR, DEMAR without the dual ensembled Q-learning (w/o ensemble), and DEMAR without the hypernet regularizer (w/o regularizer). When DEMAR is without both techniques, it degenerates to vanilla QMIX. Figure~\ref{figure:ablation} shows the ablation results on \textit{simple\_tag} from MPE and \textit{5m\_vs\_6m} from SMAC. As we can see, both techniques of DEMAR are essential to address the multiagent overestimation problem. Especially, in \textit{5m\_vs\_6m}, neither the dual ensembled Q-learning nor the hypernet regularizer addresses the overestimation alone. Only the combined techniques, which jointly control the overestimation terms, can avoid overestimation and successfully stabilize learning. We also conduct the ablation study with different numbers of ensemble networks and different coefficients of the hypernet regularization. The additional experimental results are provided in Appendix F.

\subsection{Experiment Study of Overestimation Terms}
\label{sec:emp_ana_cause}

\begin{figure}[htbp]
\centering
\subfigure[$Q_{tot}$ on simple\_adversary]{
\label{adversaryqtot}
\includegraphics[width=0.235\textwidth]{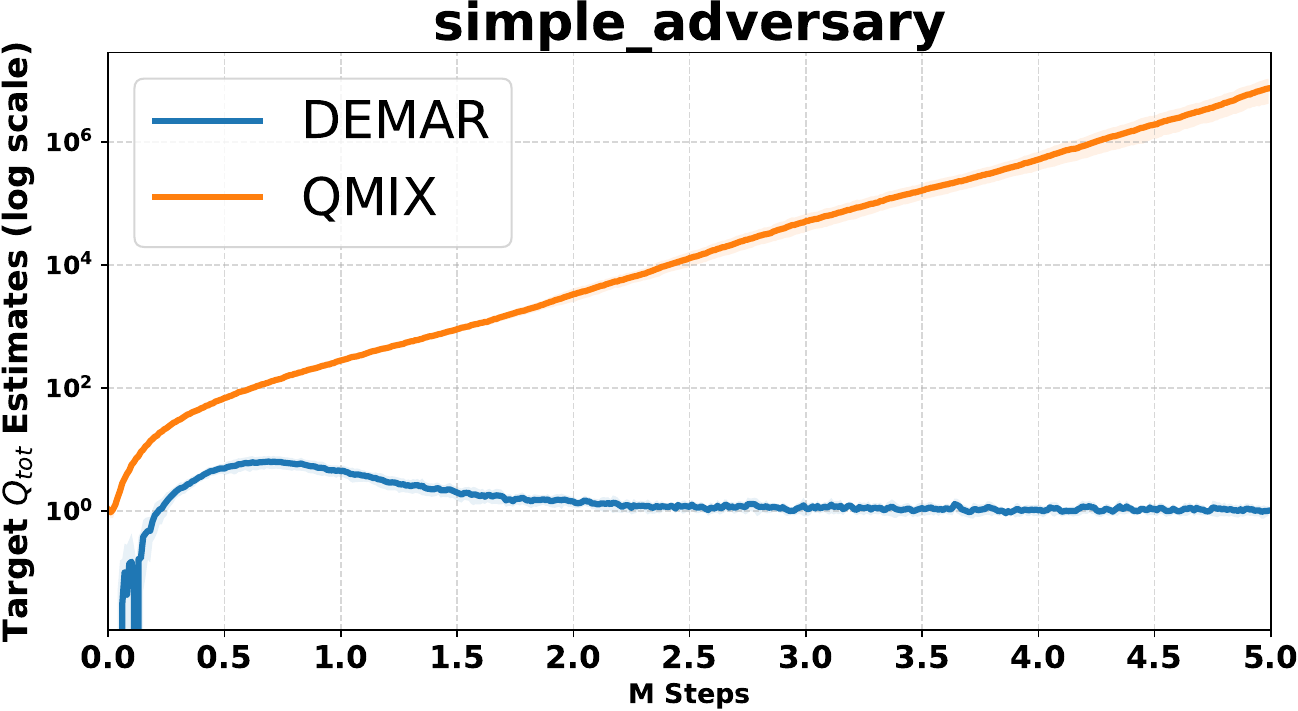}}
\subfigure[$Q_{tot}$ on 5m\_vs\_6m]{
\label{5m6mqtot}
\includegraphics[width=0.218\textwidth]{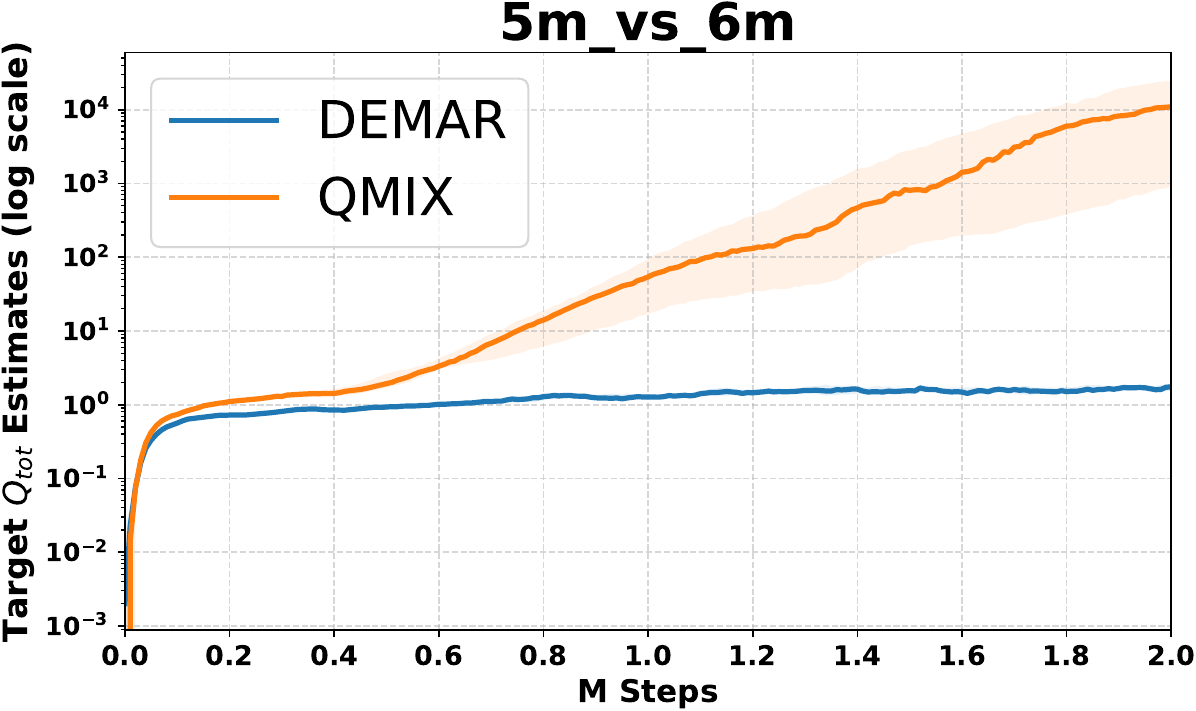}}
\subfigure[$Q_{i}$ on simple\_adversary]{
\label{adversaryagentutility}
\includegraphics[width=0.235\textwidth]{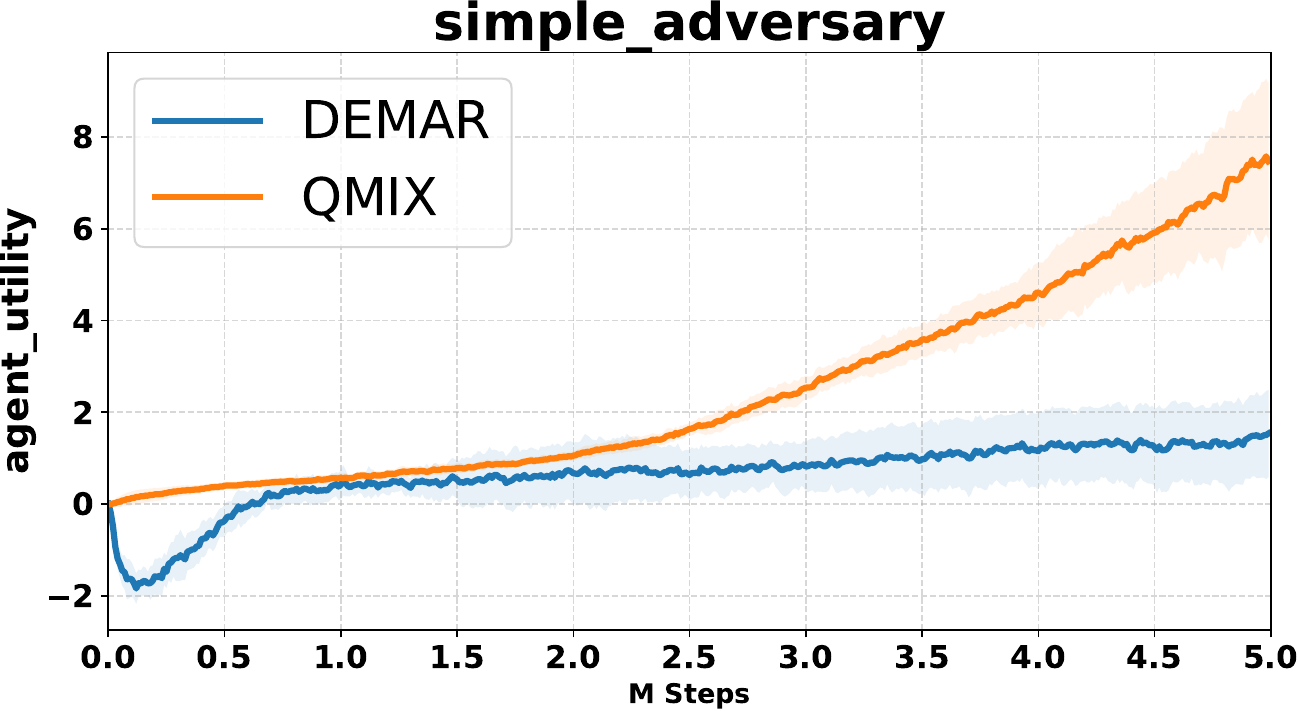}}
\subfigure[$Q_{i}$ on 5m\_vs\_6m]{
\label{5m6magentutility}
\includegraphics[width=0.215\textwidth]{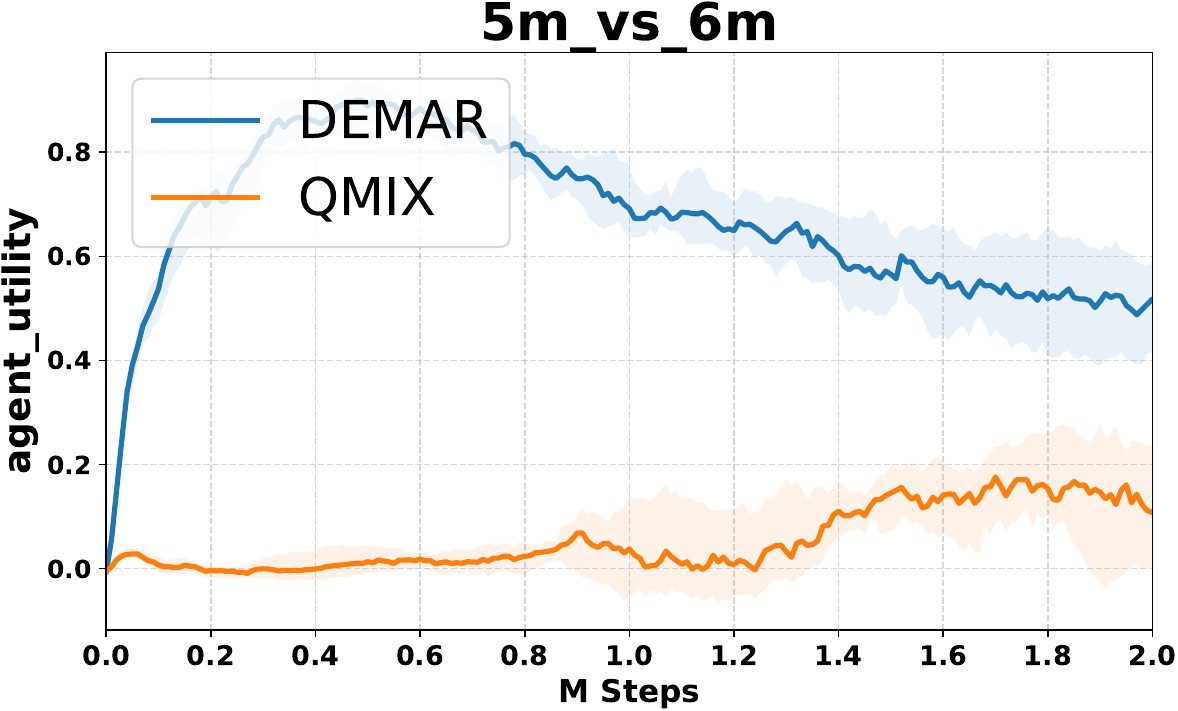}}
\subfigure[$\frac{\partial Q_{tot}}{\partial Q_{i}}$ on simple\_adversary]{
\label{adversarygradient}
\includegraphics[width=0.235\textwidth]{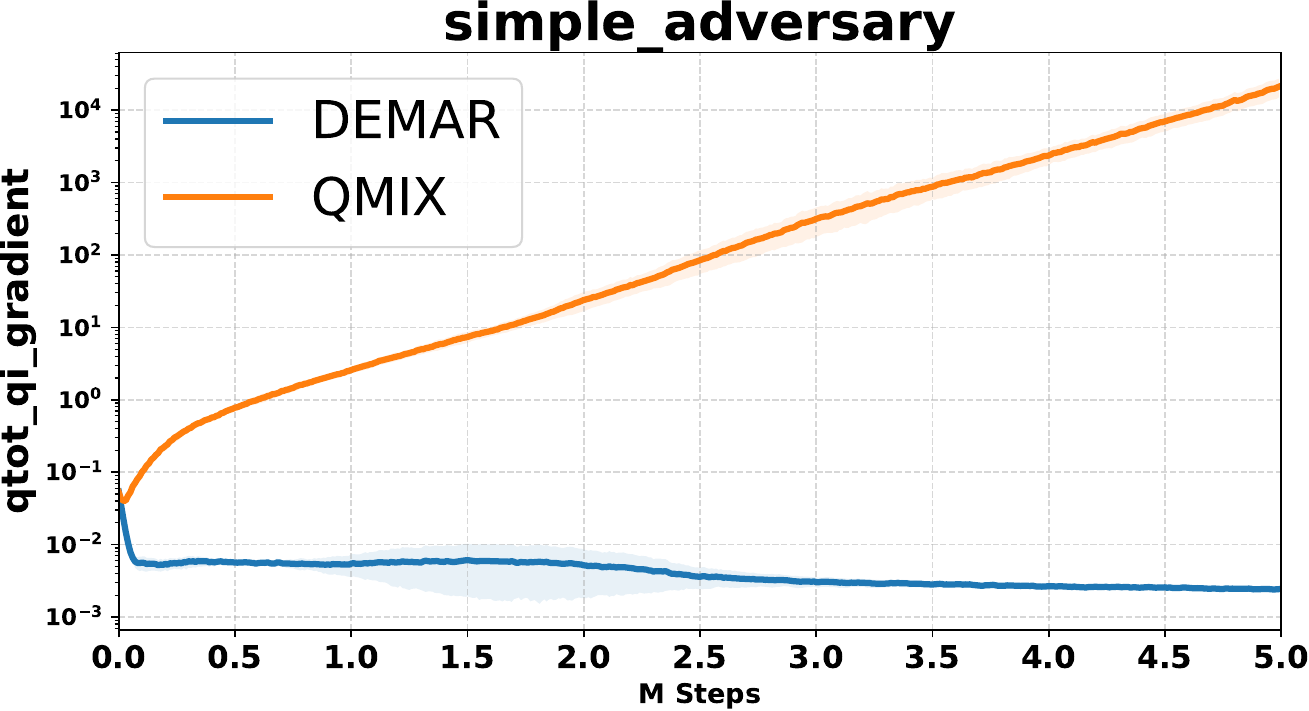}}
\subfigure[$\frac{\partial Q_{tot}}{\partial Q_{i}}$ on 5m\_vs\_6m]{
\label{5m6mgradient}
\includegraphics[width=0.216\textwidth]{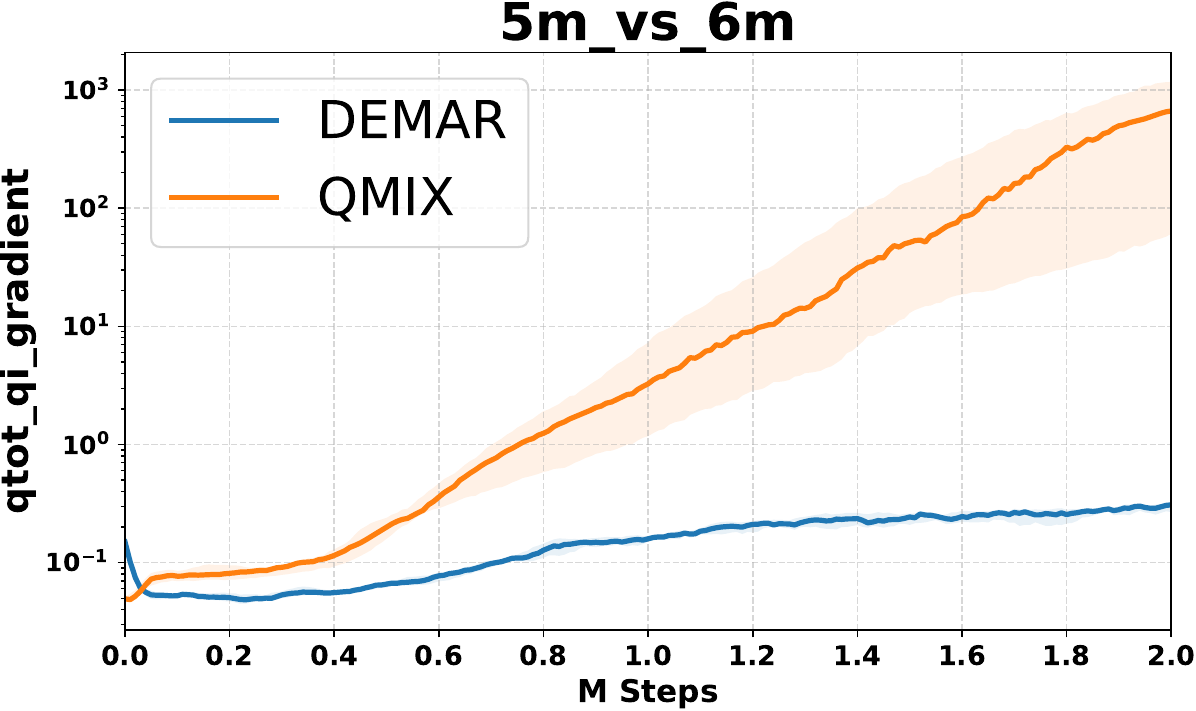}}
\caption{Results of analyzed overestimation terms including $Q_{tot}$, $Q_{i}$, and $\frac{\partial Q_{tot}}{\partial Q_{i}}$ on both \textit{simple\_adversary} and \textit{5m\_vs\_6m}.}
\label{figure:rootcauses}
\end{figure}

As DEMAR is designed to control each analyzed overestimation term, we experimentally examine $Q_{i}$ and $Q_{tot}$ in the target Q-value estimation as well as $\frac{\partial Q_{tot}}{\partial Q_{i}}$ in the online Q-network optimization to see whether DEMAR works as expected. Here we use the \textit{simple\_adversary} task because its overestimation is the most severe among all tasks. The values of each analyzed overestimation term are plotted in Figure~\ref{adversaryqtot}-\ref{adversarygradient}. We see that each overestimation term in QMIX has higher values than DEMAR. Especially, $\frac{\partial Q_{tot}}{\partial Q_{i}}$ contributes much to the overestimation of QMIX as the value of $\frac{\partial Q_{tot}}{\partial Q_{i}}$ in QMIX is much larger than in DEMAR by orders of magnitude. Overall, the experimental results here correspond to our analysis in Theorem~\ref{severetheorem}. By controlling each analyzed overestimation term, DEMAR successfully prevents severe multiagent overestimation.

\subsection{Comparing True and Estimated Q-values}

\begin{figure}[htbp]
\centering
\subfigure[$Q_{tot}$ on 5m\_vs\_6m]{
\includegraphics[width=0.220\textwidth]{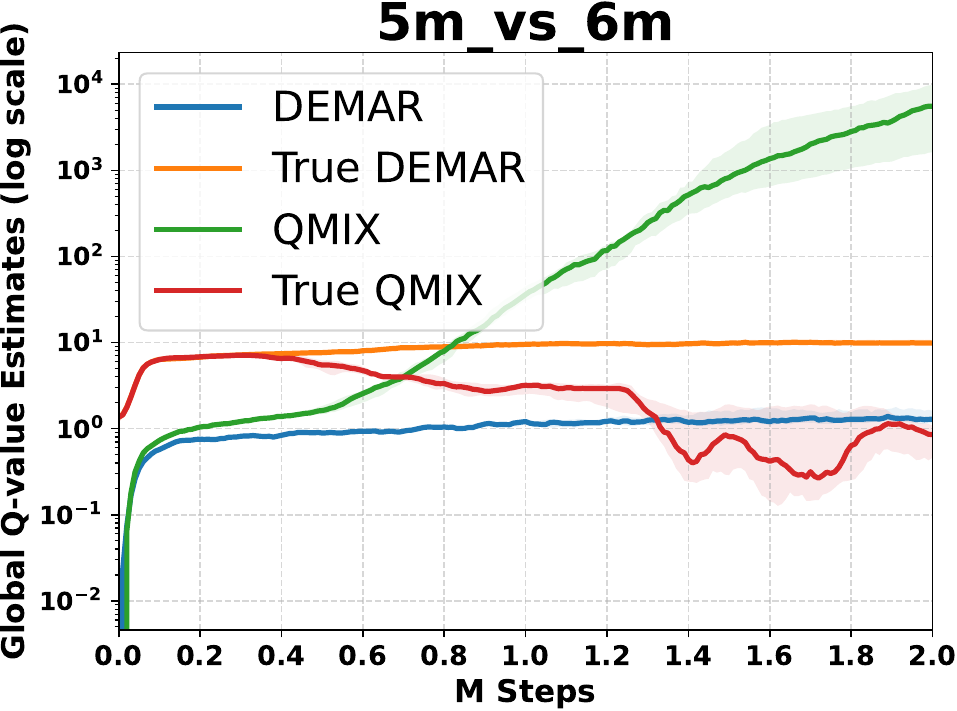}}
\subfigure[$Q_{tot}$ on 2s3z]{
\includegraphics[width=0.220\textwidth]{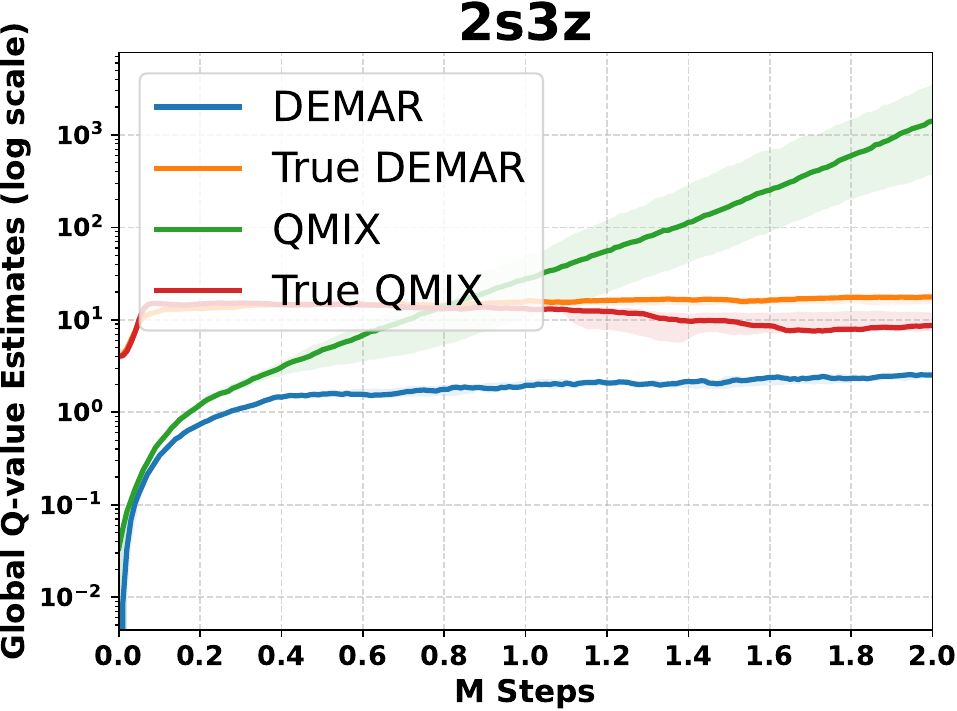}}
\subfigure[$Q_{tot}$ on 3s5z]{
\includegraphics[width=0.220\textwidth]{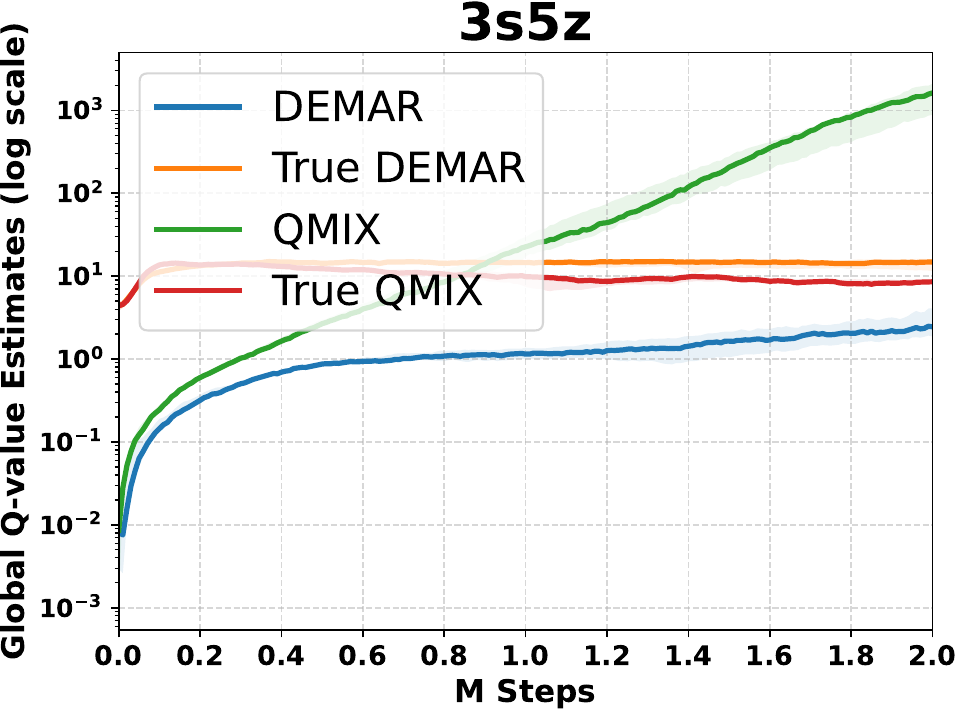}}
\subfigure[$Q_{tot}$ on 10m\_vs\_11m]{
\includegraphics[width=0.220\textwidth]{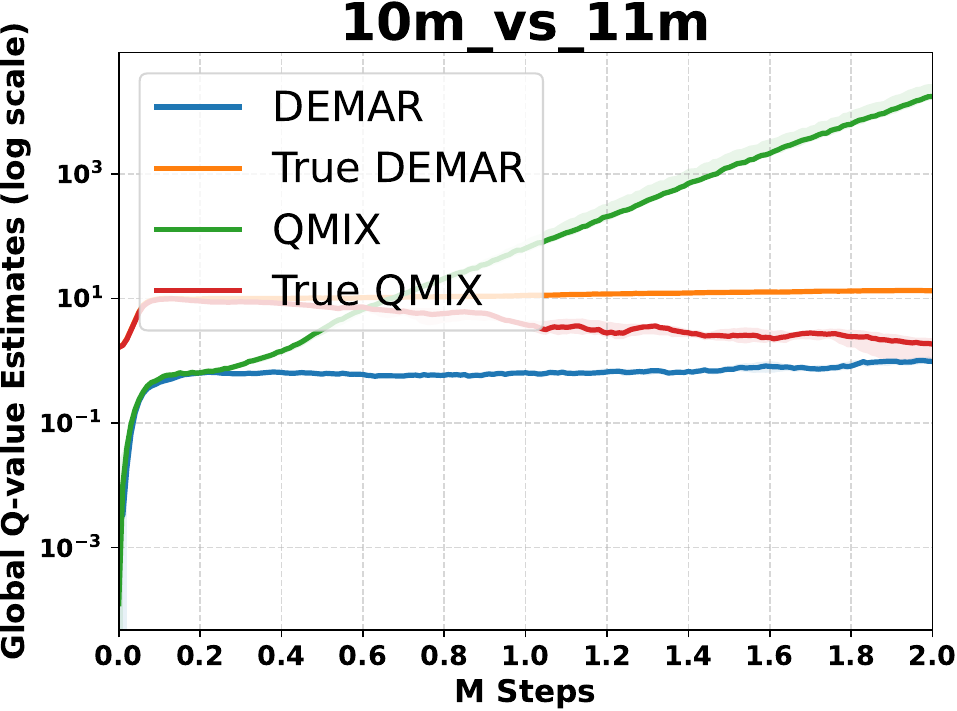}}
\caption{Comparison between True Q-values and Estimated Q-values. All Q-value curves are plotted in the log scale.}
\label{figure:trueQ}
\end{figure}

In this section, we compare the true Q-values and estimated Q-values of DEMAR as well as QMIX. We estimate true values by summing up the discounted returns of the following transitions starting from the sampled state. The results are shown in Figure~\ref{figure:trueQ}. We could see that the estimated Q-value by DEMAR is closer to its true Q-value than QMIX. At the same time, the estimated Q-value by QMIX increases rapidly and its gap to its true Q-value also becomes larger as training continues.

\subsection{Extending DEMAR to Other MARL Methods}

\begin{figure}[htbp]
\centering
\subfigure[5m\_vs\_6m]{
\label{5m6masnupdet}
\includegraphics[width=0.228\textwidth]{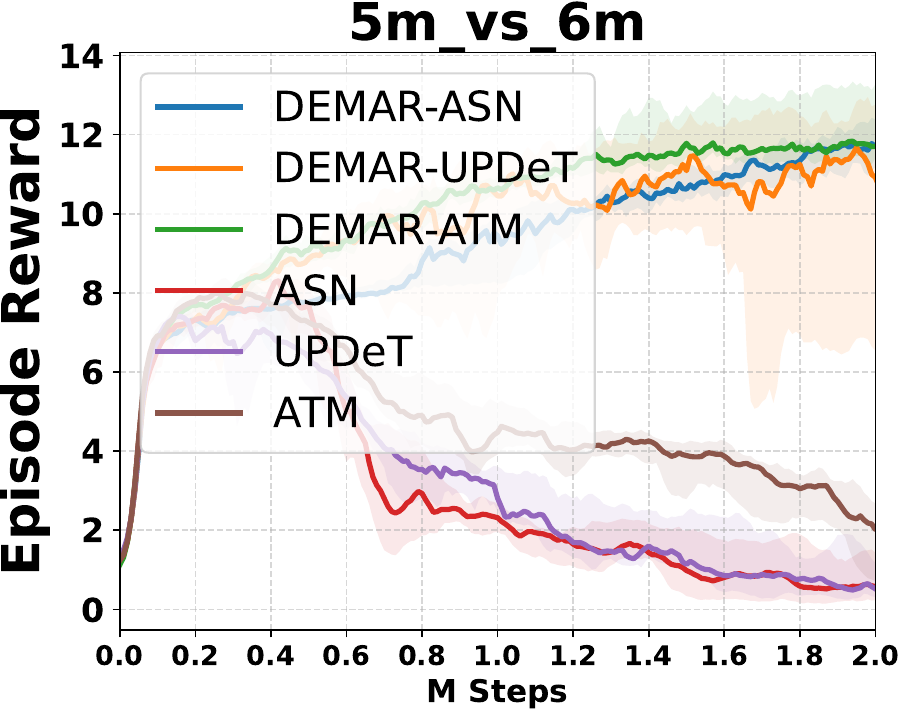}}
\subfigure[$Q_{tot}$ on 5m\_vs\_6m]{
\label{5m6masnupdetoverestimation}
\includegraphics[width=0.228\textwidth]{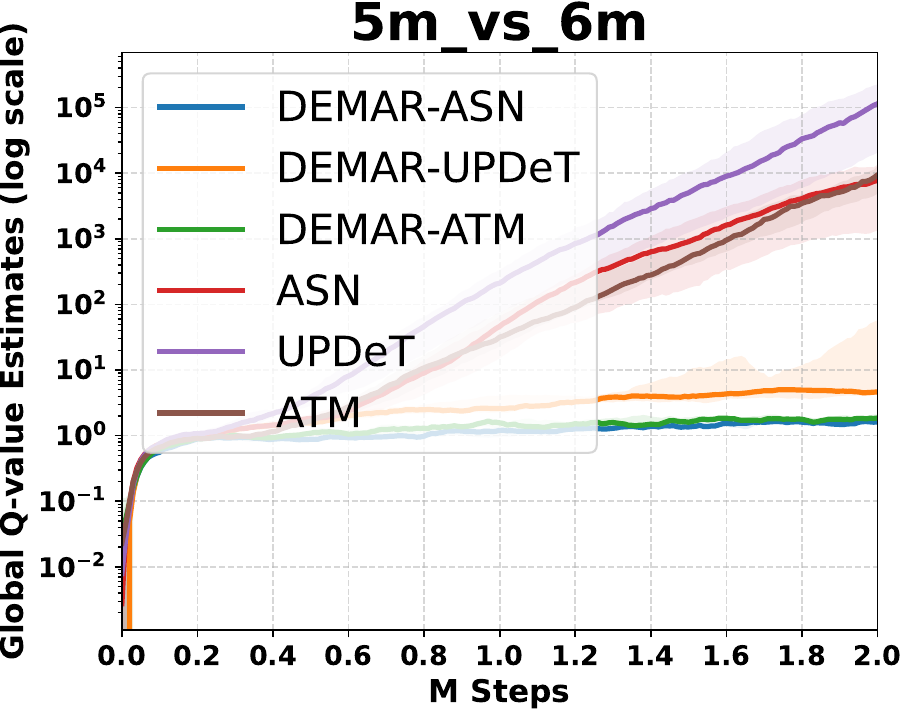}}
\subfigure[10m\_vs\_11m]{
\label{10m11masnupdet}
\includegraphics[width=0.228\textwidth]{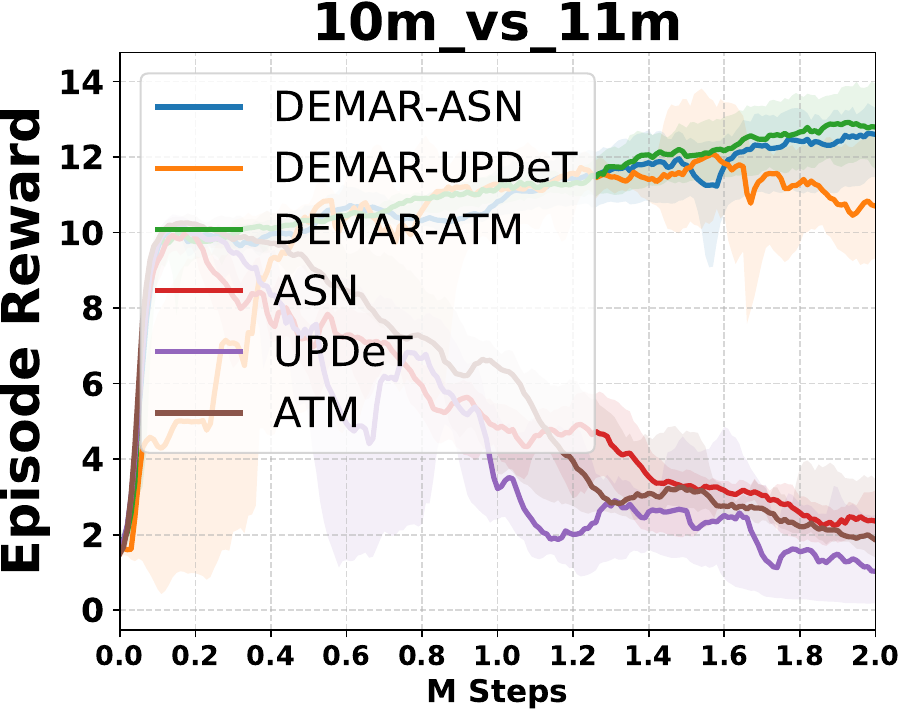}}
\subfigure[$Q_{tot}$ on 10m\_vs\_11m]{
\label{10m11masnupdetoverestimation}
\includegraphics[width=0.228\textwidth]{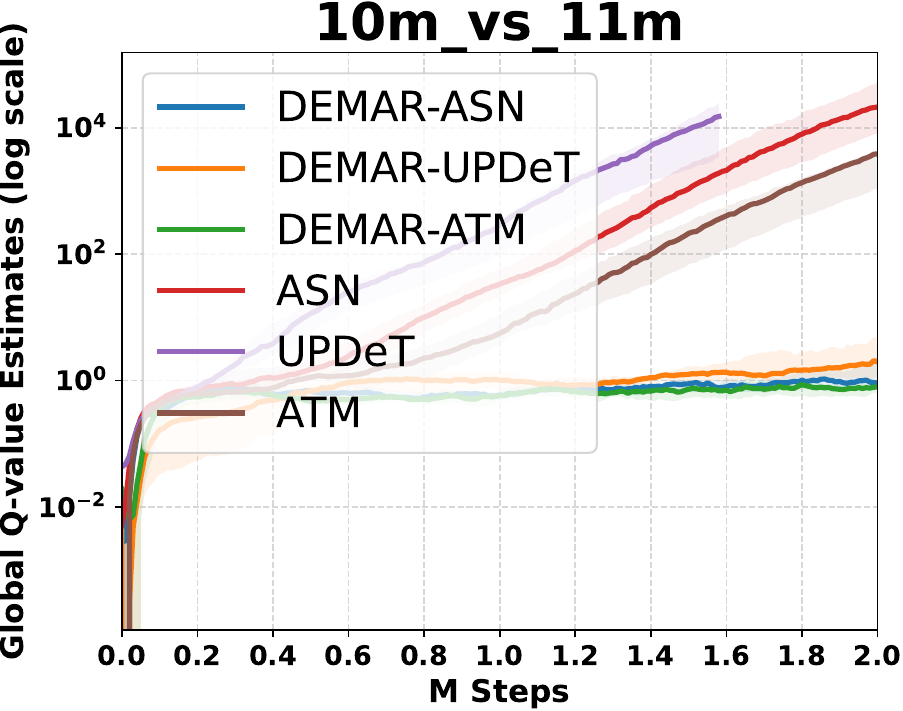}}
\caption{Results of UPDeT, ASN, and ATM with DEMAR.}
\label{figure:extend}
\end{figure}

To test the generality of DEMAR, we extend DEMAR to other advanced MARL algorithms such as ASN \cite{wang_action_2020}, UPDeT \cite{hu_updet_2021}, and ATM \cite{yang_transformer-based_2022}. ASN explicitly represents action semantics between agents and characterizes different actions' influence on other agents using neural networks to significantly improve the performance of MARL algorithms \cite{wang_action_2020}. UPDeT utilizes a transformer-based model to enable multiple tasks transferring in MARL through the transformer’s strong generalization abilities \cite{hu_updet_2021}. ATM proposes a transformer-based working memory mechanism to address partial observability in multiagent scenarios \cite{yang_transformer-based_2022}. The results of extending DEMAR into ASN, UPDeT, and ATM are shown in Figure~\ref{figure:extend}. As we can see, DEMAR helps ASN, UPDeT, and ATM avoid severe overestimation and stabilizes the learning process, which validates the generality of DEMAR. By the way, the global Q-value estimation curve of UPDeT on \textit{10m\_vs\_11m} is clipped in Figure~\ref{10m11masnupdetoverestimation}. During running UPDeT on \textit{10m\_vs\_11m}, the global Q-value estimation in one trial increased too much to become NaN (too large to represent) while learning. Then the mean value of global Q-value estimation of UPDeT on \textit{10m\_vs\_11m} also becomes NaN and is clipped when plotting.

\section{Conclusion}
In this study, we propose DEMAR to address the challenging multiagent overestimation problem. For the first time, we establish an iterative estimation-optimization analysis framework to systematically analyze the overestimation in multiagent value-mixing Q-learning. We found that the multiagent overestimation not only comes from the overestimation of target individual and global Q-values but also accumulates in the online Q-network's optimization. Motivated by this analysis finding, we propose DEMAR with dual ensembled Q-learning and hypernet regularizer to address these analyzed overestimation terms correspondingly. Extensive experiments in MPE and the noisy SMAC demonstrate that DEMAR successfully controls the multiagent overestimation.

For future work, on the one hand, there is a high potential to apply DEMAR to real-world multiagent scenarios where environmental noises are common and learning stability is a prerequisite. And it would be helpful to develop evaluation techniques for MARL in real-world applications. On the other hand, extending DEMAR to policy-based MARL algorithms is also promising. 



\begin{acks}
This work is supported by the National Natural Science Foundation of China (62376254, 32341018), a grant from the Research Grants Council of the Hong Kong Special Administrative Region, China (Project No. CUHK 14201321), and a grant from Hong Kong Innovation and Technology Fund (Project No. MHP/092/22).
\end{acks}



\bibliographystyle{ACM-Reference-Format} 
\balance
\bibliography{sample}


\clearpage
\appendix
\onecolumn

\section{Proof of Overestimation in Value-mixing Q-learning}
\label{appendix:maoverestimationproof}

Here we provide the proof for completeness. Gan et al. \cite{gan_stabilizing_2021} show that if the assumption that $l \leq \frac{\partial  Q_{tot}}{\partial Q_{i}} \leq L, i=1,2,...,N$ where $l \geq 0$ and $L > 0$ satisfies, multiagent Q-learning algorithms with the monotonic value-mixing global Q-network such as VDN \cite{sunehag_value-decomposition_2018}, QMIX \cite{rashid_qmix_2018}, and Qatten \cite{yang_qatten_2020} obtain the overestimation
\begin{equation}
        LN\mathbb{E}[Z_{i}^{s}] \geq \mathbb{E}[r + \gamma \max_{\mathbf{a}'} Q_{tot}(s',\mathbf{Q}(s', \mathbf{a}_{i}')) - (r + \gamma \max_{\mathbf{a}'} Q_{tot}(s',\mathbf{Q^{*}}(s', \mathbf{a}_{i}')))] \geq lN\mathbb{E}[Z_{i}^{s}],
\end{equation}
where $\mathbb{E}[Z_{i}^{s}] = \mathbb{E}[\max_{a'_{i}} Q_{i}(s',a'_{i}) - \max_{a'_{i}} Q_{i}^{*}(s',a'_{i})]$ and $\mathbf{Q^{*}}$ are the optimal target individual Q-values.

\begin{proof}
\begin{equation}
\label{eq:maoverestimationproof}
\begin{split}
        & \mathbb{E}[r + \gamma \max_{\mathbf{a}'} Q_{tot}(s',\mathbf{Q}(s', \mathbf{a}'_{i})) - (r + \gamma \max_{\mathbf{a}'} Q_{tot}(s',\mathbf{Q^{*}}(s', \mathbf{a}'_{i})))] \\
        & = \gamma \mathbb{E}[Q_{tot}(s',\max_{\mathbf{a}'} \mathbf{Q}(s', \mathbf{a}'_{i})) - Q_{tot}(s',\max_{\mathbf{a}'} \mathbf{Q^{*}}(s', \mathbf{a}'_{i}))] \\
        & = \gamma \mathbb{E}[Q_{tot}(s',\max_{a'_{1}}Q_{1}(s', a'_{1}),...,\max_{a'_{N}}Q_{N}(s', a'_{N})) - Q_{tot}(s',\max_{a'_{1}} Q_{1}^{*}(s',a'_{1}),...,\max_{a'_{N}} Q_{N}^{*}(s', a'_{N}))] \\
        & \geq \gamma \mathbb{E}[\sum_{i}^{N} l (\max_{a'_{i}}Q_{i}(s',a'_{i}) - \max_{a'_{i}} Q_{i}^{*}(s',a'_{i}))] \\
        & = \gamma lN\mathbb{E}[\max_{a'_{i}} Q_{i}(s',a'_{i}) - \max_{a'_{i}} Q_{i}^{*}(s',a'_{i})] \\
        & = lN\mathbb{E}[Z_{i}^{s}],
\end{split}
\end{equation}
where the estimated $Q_{i}$ is assumed with an independent noise uniformly distributed in $[-\epsilon, \epsilon]$ on each action $a_{i}$ given $s$. Similarly, we can also get $\mathbb{E}[r + \gamma \max_{\mathbf{a}'} Q_{tot}(s',\mathbf{Q}(s', \mathbf{a}'_{i})) - (r + \gamma \max_{\mathbf{a}'} Q_{tot}(s',\mathbf{Q^{*}}(s', \mathbf{a}'_{i})))] \leq LN\mathbb{E}[Z_{i}^{s}]$.
\end{proof}

\section{Proof of Hypernet Regularizer}
\label{appendix:regularizerproof}
Multiagent value-mixing algorithms use the outputted variables from hypernetworks as the weights and biases for the mixing network layers to transform the individual $Q_{i}$s into the global $Q_{tot}$. For the most representative QMIX \cite{rashid_qmix_2018}, the global Q-value is calculated as follows
\begin{equation}
    Q_{tot}=f_{mix}(s,Q_{1},...,Q_{N})=elu(\mathbf{Q}_{i}^{1 \times N} \mathbf{W}_{f,1}^{N \times L_h} + \mathbf{B}_{f,1}^{1 \times L_h}) \mathbf{W}_{f,2}^{L_h \times 1} + b_{f,2}^{1 \times 1},
\end{equation}
where $\mathbf{W}_{f,1}^{N \times L_h}$ and $\mathbf{W}_{f,2}^{L_h \times 1}$ are weights while $\mathbf{B}_{f,1}^{1 \times L_h}$ and $b_{f,2}^{1 \times 1}$ are biases generated from the corresponding hypernetworks. $L_h$ is the hidden unit number. First, we show the relation between $\frac{\partial Q_{tot}}{\partial Q_{i}}$ and the weights and biases outputted from hypernetworks.
\begin{equation}
\label{eq:qmixgrad}
\begin{aligned}
    \frac{\partial Q_{tot}}{\partial Q_{i}} & = \frac{\partial elu(\mathbf{Q}_{i}^{1 \times N} \mathbf{W}_{f,1}^{N \times L_h} + \mathbf{B}_{f,1}^{1 \times L_h}) \mathbf{W}_{f,2}^{L_h \times 1} + b_{f,2}^{1 \times 1}}{\partial Q_{i}} \\
    & = \frac{\partial elu(Q_{i}^{1 \times 1} \mathbf{W}_{f,1,i}^{1 \times L_h} + B_{f,1}^{1 \times L_h})\mathbf{W}_{f,2}^{L_h \times 1}}{\partial Q_{i}} \\
    & = \sum_{\substack{l_h=1\\Q_{i}w_{1,i,l_h}+b_{1,l_h} \geq 0}}^{L_h} w_{1,i,l_h}w_{2,l_h} + \sum_{\substack{l_h=1\\Q_{i}w_{1,i,l_h}+b_{1,l_h} < 0}}^{L_h} \alpha_{elu} w_{1,i,l_h}w_{2,l_h}e^{Q_{i}w_{1,i,l_h}+b_{1,l_h}} \\
    & \leq \sum_{\substack{l_h=1\\Q_{i}w_{1,i,l_h}+b_{1,l_h} \geq 0}}^{L_h} w_{1,i,l_h}w_{2,l_h} + \sum_{\substack{l_h=1\\Q_{i}w_{1,i,l_h}+b_{1,l_h} < 0}}^{L_h} \alpha_{elu} w_{1,i,l_h}w_{2,l_h},
\end{aligned}
\end{equation}
where $w_{1,i,l_h} \in \mathbf{W}_{f,1,i}^{1 \times L_h} \geq 0$, $w_{2,l_h} \in \mathbf{W}_{f,2}^{L_h \times 1} \geq 0$, $b_{1,l_h} \in \mathbf{B}_{f,1}^{1 \times L_h}$, $b_{2}=b_{f,2}$, and $elu$ is the Exponential Linear Unit activation function, and $\alpha_{elu} > 0$ is a scalar \cite{clevert_fast_2016}. Therefore, if we use
\begin{equation}
    L_{reg}=\sum{|\mathbf{W}_{f}|} + \sum{|\mathbf{B}_{f}|} =\sum{w_{1}} + \sum{w_{2}} + \sum{|b_{1}|} + |b_{2}|
\end{equation}
as the regularization term in the loss function, we can constrain the term $\frac{\partial Q_{tot}}{\partial Q_{i}}$. 

Furthermore,
\begin{equation}
    \frac{\partial}{\partial Q_{i}} (\frac{\partial Q_{tot}}{\partial Q_{i}}) = \sum_{\substack{l_h=1\\Q_{i}w_{1,i,l_h}+b_{1,l_h} < 0}}^{L_h} \alpha_{elu} w_{1,i,l_h}^{2} w_{2,l_h}e^{Q_{i}w_{1,i,l_h}+b_{1,l_h}} \geq 0.
\end{equation}
Therefore, in QMIX, $\frac{\partial Q_{tot}}{\partial Q_{i}}$ increases with $Q_{i}$ (strictly increases when $Q_{i}w_{1, i, l_h}+b_{1,l_h} < 0$ for some $l_h$) and the overestimation can be accumulated, which is also observed from the experiments \cite{pan_regularized_2021}. With the hypernet regularizer, we could constrain $\frac{\partial Q_{tot}}{\partial Q_{i}}$ to prevent the accumulation of overestimation. 

For the multiagent value-mixing algorithms in the linear form \cite{yang_qatten_2020}, the global Q-value is calculated as
\begin{equation}
    Q_{tot}= \sum_{i=1}^{N} w_{i}Q_{i} + b,
\end{equation}
where $w_{i} \geq 0$. The following simply holds as
\begin{equation}
    \frac{\partial Q_{tot}}{\partial Q_{i}}=w_{i}.
\end{equation}
As $w_{i} \geq 0$, the hypernet regularizer becomes
\begin{equation}
    L_{reg}=\sum{|\mathbf{W}_{f}|} + \sum{|\mathbf{B}_{f}|} =\sum{w} + |b|.
\end{equation}
Therefore, $L_{reg}$ in the linear form $f_{mix}$ also regularizes $\frac{\partial Q_{tot}}{\partial Q_{i}}$. Here we also constrain the biases. During the optimization of the Q-value, the overestimation will be accumulated either by weights or biases. If we do not regularize the biases, the biases will increase to accumulate the overestimation as the global Q-network fits the overestimated global Q-target. 

\section{The Implementation Details of Baselines}
\label{appendix:baselines}

\subsection{S-QMIX}
S-QMIX uses the softmax Bellman operator to compute the estimation of the target global Q-value
\begin{equation}
\label{eq:sqmix}
    softmax_{\beta, \mathbf{U}}(Q_{tot}(s,\cdot)) = \sum_{\mathbf{u} \in \mathbf{U}} \frac{e^{\beta Q_{tot}(s,\mathbf{u})}}{\sum_{\mathbf{u}' \in \mathbf{U}}e^{\beta Q_{tot}(s,\mathbf{u}')}} Q_{tot}(s,\mathbf{u}),
\end{equation}
where $\beta \geq 0$ is the inverse temperature parameter. However, the computation of Eq.~(\ref{eq:sqmix}) in the multiagent setting can be computationally intractable as the size of the joint action space grows exponentially with the number of agents. Therefore, Pan et al. \cite{pan_regularized_2021} use an alternative joint action set $\mathbf{\hat{U}}$ to replace the joint action space $\mathbf{U}$. First, the maximal joint action $\mathbf{\hat{u}}$ is obtained by $\mathbf{\hat{u}}=\arg\max_{\mathbf{u}} Q(s, \mathbf{u})$. Next, for each agent $i$, $N_{\mathbf{u}}$ joint actions are considered by changing only agent $i$’s action while keeping the other agents’ actions $\mathbf{u}_{-i}$ fixed and the resulting action set of agent $i$ is $U_{i}=\{(u_{i}, \mathbf{\hat{u}}_{-i})|u_{i} \in U\}$. Finally, the joint action subspace $\mathbf{\hat{U}}=U_{i} \cup \cdot \cdot \cdot \cup U_{N}$ is obtained and used to calculate the softmax version of the global Q-value.

\subsection{SM2-QMIX}
SM2-QMIX uses the soft Mellowmax operator to compute the estimation of the target individual Q-value and thus avoids the explosion problem of the joint action space in S-QMIX.
\begin{equation}
    sm_{\omega} Q_{i}(s, \cdot) = \frac{1}{\omega} \log[\sum_{a \in A} \frac{e^{\alpha Q_{i}(s,a)}}{\sum_{a' \in A}e^{\alpha Q_{i}(s,a')}} e^{\omega Q_{i}(s,a)}],
\end{equation}
where $\omega > 0$ and $\alpha \in \mathbb{R}$, which can be viewed as a particular instantiation of the weighted quasi-arithmetic mean \cite{beliakov_practical_2015}.

\subsection{TD3-QMIX}
TD3-QMIX takes the minimum between the two critics' estimations to calculate the target global Q-value.
\begin{equation}
\begin{split}
    y_{tot} = r + \gamma & \min_{h \in \{1,2\}} Q_{tot}^{{\bar{\phi}_{h}}}(s',Q_{1}(o_{1}',a_{1}'),...,Q_{N}(o_{N}',a_{N}')), \\ & Q_{i}(o_{i}',a_{i}') = \max_{a_{i}'} Q_{i}^{\bar{\theta}_{i}}(o_{i}',a_{i}').
\end{split}
\end{equation}

\subsection{WCU-QMIX}
WCU-QMIX proposes a weighted critic updating scheme of TD3. It updates the critic networks with the loss function that is calculated using the weighted Q-values obtained from the two critic networks. The target is calculated using the rewards and the minimum of the target Q-values.
\begin{equation}
\begin{split}
    L(\phi_{h},\theta_{1},...,\theta_{N})=\frac{1}{|B|}\sum_{b}(y_{b} - (w Q_{tot}^{\phi_{h}} & (s_b,Q_{1},...,Q_{N}) + (1 - w)Q_{tot}^{\phi_{p}}(s_b,Q_{1},...,Q_{N}))|_{p \neq h}), \\
    & Q_{i} = \max_{a_{i}} Q_{i}^{\theta_{i}}(o_{i,b},a_{i}).
\end{split}
\end{equation}

\subsection{Sub-Avg-QMIX}
The Sub-Avg-QMIX keeps multiple target networks to maintain various action values of different periods and discards the larger action values to eliminate the excessive overestimation error. Thereby, Sub-Avg-QMIX gets an overall lower maximum action value and then obtains an overall lower update target. Specifically, Sub-Avg-QMIX discards the action values above the average. Here we apply the Sub-Avg operator in the mixing network as it shows better performance compared with the version of applying the Sub-Avg operator in the agent network \cite{wu_sub-avg_2022}.
\begin{equation}
    y_{tot} = r + \gamma \max_{\mathbf{a}'} (\frac{\sum_{k=1}^{K} c_{k} Q_{tot}^{\bar{\theta}_{i,t-k+1}}(s',\mathbf{a}')}{\sum_{k=1}^{K} c_{k}}).
\end{equation}
with
\begin{equation}
    c_{k} = \max (0, sign(\bar{Q}_{tot}^{\bar{\phi}}(s',\mathbf{a}') - Q_{tot}^{\bar{\phi}_{t-k+1}}(s',\mathbf{a}'))),
\end{equation}
where $c_{k}$ determines whether the global action value should be preserved if it is below average or discarded otherwise. $\bar{Q}_{tot}^{\bar{\phi}}(s',\mathbf{a}')$ is the average of the last $K$ global action values.


\section{Noisy SMAC Settings}
\label{appendix:noisysmac}

In this noisy SMAC environment, we add a random noise for each feature of both the observation and global state, which could be regarded as a kind of interference signal. The random noise is set to be uniformly distributed and its range is $[0, 0.02)$. Although the noise is small, it dramatically raises the overestimation problem for multiagent Q-learning algorithms and seriously impedes the quality of the learned policies, which is not reported in the literature before. The noisy SMAC environment is a good testbed for MARL algorithms to address the multiagent overestimation problem.

\begin{figure}[htbp]
\centering
\subfigure[5m\_vs\_6m]{
\label{5m6mgaussian}
\includegraphics[width=0.239\textwidth]{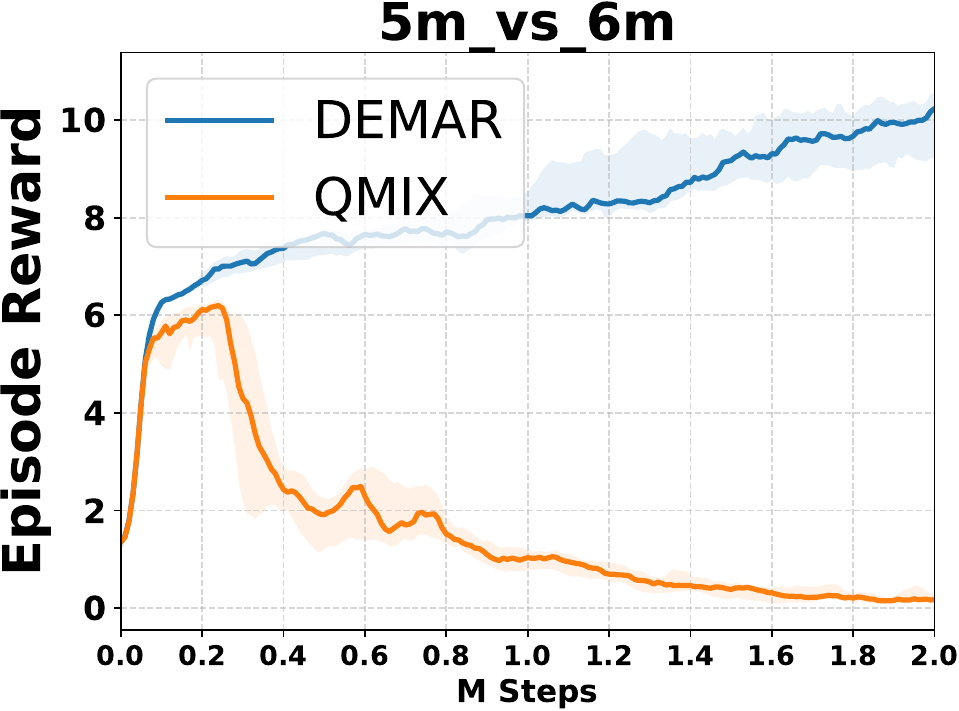}}
\subfigure[$Q_{tot}$ on 5m\_vs\_6m]{
\label{5m6mgaussianoverestimation}
\includegraphics[width=0.239\textwidth]{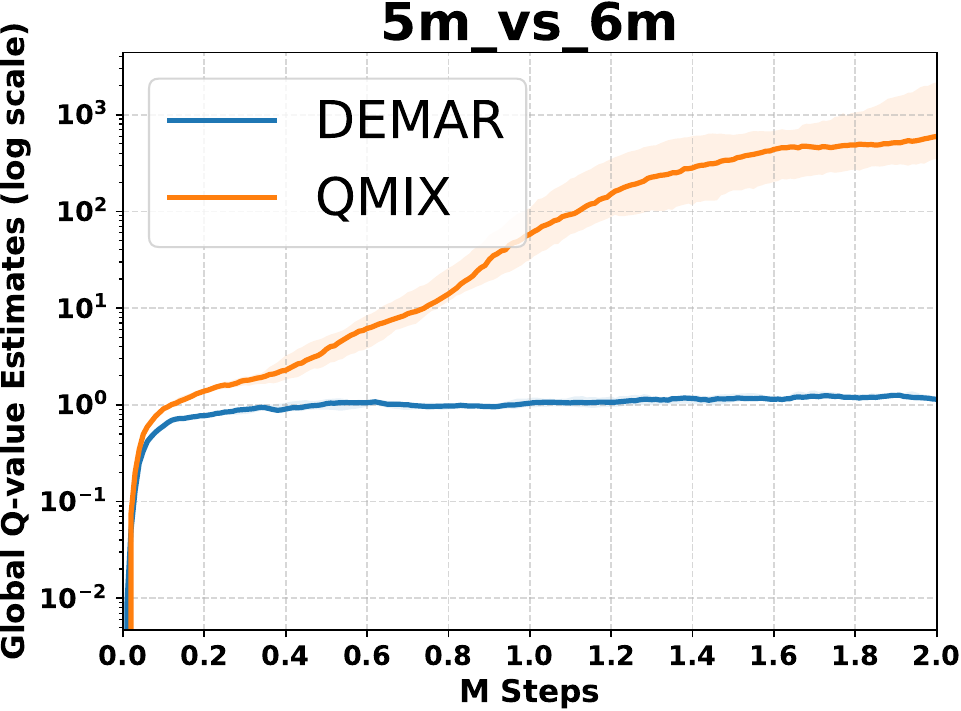}}
\subfigure[10m\_vs\_11m]{
\label{10m11mgaussian}
\includegraphics[width=0.239\textwidth]{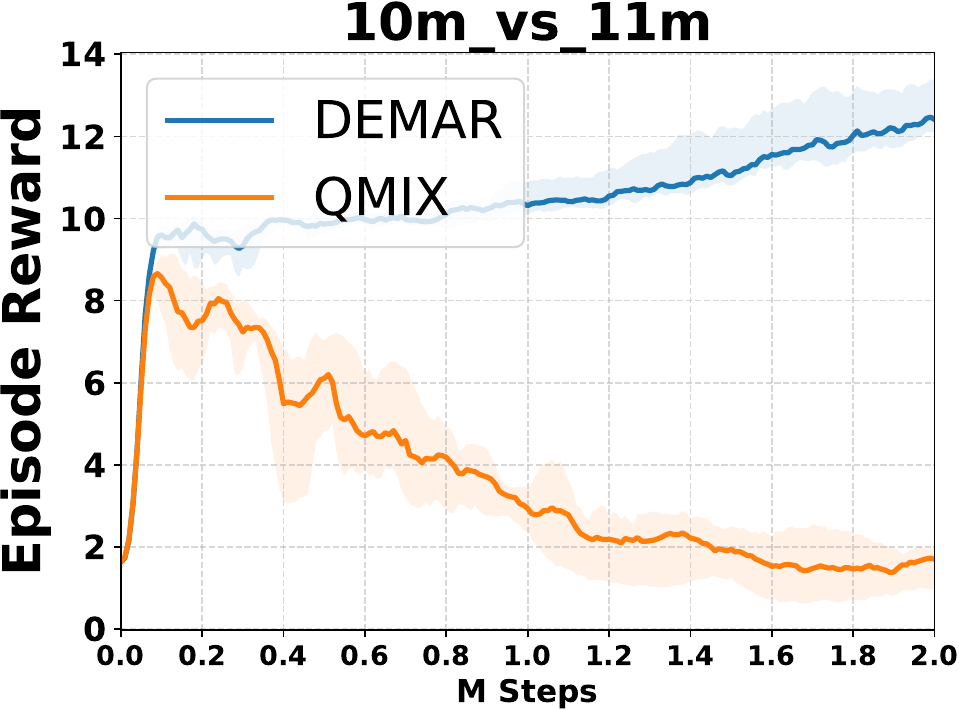}}
\subfigure[$Q_{tot}$ on 10m\_vs\_11m]{
\label{10m11mgaussianoverestimation}
\includegraphics[width=0.239\textwidth]{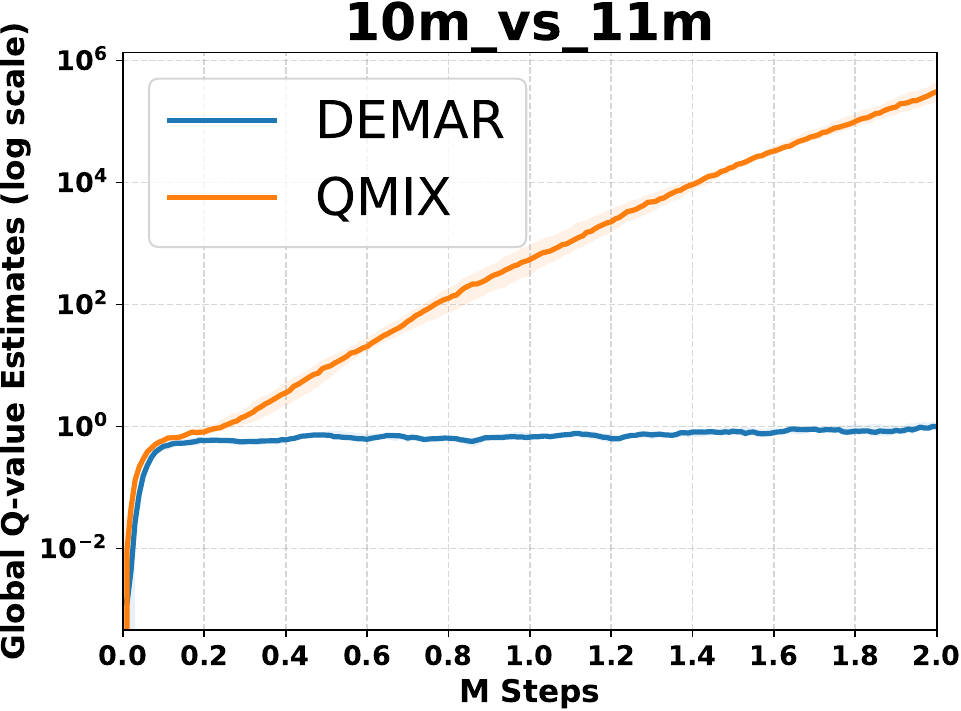}}
\caption{Testing DEMAR on the noisy SMAC environment with the Gaussian distributed noise.}
\label{figure:gaussian}
\end{figure}

Meanwhile, we also test DEMAR's performance on the noisy SMAC with the Gaussian distributed noise. The random noise is set to be Gaussian distributed and its mean and standard deviation are both 0.02. The noise is added to each feature of both the observation and global state. The results are shown in Figure~\ref{figure:gaussian}. As we can see, DEMAR also controls the overestimation well and stabilizes the learning on both the 5m\_vs\_6m and 10m\_vs\_11m with Gaussian noise.

\section{Hyperparameter Settings}
\label{appendix:hypersetting}
As different tasks have different levels of overestimation, we adjust the hyperparameters of each method on each task. To make a fair comparison, we perform the grid search for all baselines around their tuned default values which perform best in their original papers. Specifically, for S-QMIX, we search $\beta \in \{50.0, 5.0, 0.5, 0.05, 0.005\}$. For SM2-QMIX, we search $(\alpha, \omega) \in \{(10.0, 5.0), (10.0, 0.5), (10.0, 0.05), (1.0, 5.0), \\ (1.0, 0.5), (1.0, 0.05), (0.1, 5.0), (0.1, 0.5), (0.1, 0.05)\}$. For WCU-QMIX, we search $w \in \{0.25, 0.5, 0.75\}$. The tuned hyperparameters of each method on each task of MPE as shown in Table~\ref{table:mpehyper}.

\begin{table}[htbp]
\centering
\caption{Hyperparamters of algorithms on MPE.}
\begin{tabular}{|c|c|c|c|}
\hline
DEMAR & simple tag & simple world & simple adversary \\ \hline
$H$     &       3     &     10     &    10       \\ \hline
$N_{\mathbb{H}}$  &      3      &     6         &    4          \\ \hline
$K$     &       1     &      1        &         10     \\ \hline
$N_{\mathbb{K}}$  &      1      &      1        &    4    \\ \hline
$\alpha_{reg}$ &      0.002      &    0.02      &    0.05   \\ \hline \hline
S-QMIX & simple tag & simple world & simple adversary \\ \hline
$\beta$ &      0.05      &      0.5        &       0.005     \\ \hline \hline
SM2-QMIX & simple tag & simple world & simple adversary \\ \hline
$\alpha$ &     0.1       &      10.0        &    0.1   \\ \hline
$\omega$ &     5.0      &       0.05   &     0.5     \\ \hline \hline
WCU-QMIX & simple tag & simple world & simple adversary \\ \hline
$w$ &   0.75   &    0.75    &   0.75  \\ \hline
\end{tabular}
\label{table:mpehyper}
\end{table}

We also use the grid search on each SMAC tasks for all baselines. For SM2-QMIX, we search $(\alpha, \omega) \in \{(10.0, 5.0), (10.0, 0.5), (10.0, 0.05), \\ (1.0, 5.0), (1.0, 0.5), (1.0, 0.05), (0.1, 5.0), (0.1, 0.5), (0.1, 0.05)\}$. For WCU-QMIX, we search $w \in \{0.25, 0.5, 0.75\}$. For Sub-Avg-QMIX, we search $K \in \{3,5,10\}$. The tuned hyperparameters of each method on each task of SMAC as shown in Table~\ref{table:smachyper}.

\begin{table}[htbp]
\centering
\caption{Hyperparameters of algorithms on SMAC.}
\begin{tabular}{|c|c|c|c|c|}
\hline
DEMAR    &  5m\_vs\_6m &   2s3z     & 3s5z   & 10m\_vs\_11m \\ \hline
$H$      &      3      &     3      &    10   &     4        \\ \hline
$N_{\mathbb{H}}$  &      2      &     2      &    9   &     3        \\ \hline
$K$      &      1      &     1      &    1   &     1        \\ \hline
$N_{\mathbb{K}}$  &      1      &     1      &    1   &     1        \\ \hline
$\alpha_{reg}$ &      0.002  &     0.002  &  0.001 &     0.01     \\ \hline \hline
SM2-QMIX & 5m\_vs\_6m &   2s3z     & 3s5z   & 10m\_vs\_11m \\ \hline
$\alpha$ &     1.0       &      10.0        &    10.0  &  1.0  \\ \hline
$\omega$ &     0.5      &       0.05   &     5.0   &   5.0   \\ \hline \hline
WCU-QMIX & 5m\_vs\_6m &   2s3z     & 3s5z   & 10m\_vs\_11m \\ \hline
$w$ &   0.75   &    0.75    &   0.75    &     0.75     \\ \hline \hline
Sub-Avg-QMIX & 5m\_vs\_6m &   2s3z     & 3s5z   & 10m\_vs\_11m \\ \hline
$K$ &      10      &      3        &       3     &     3    \\ \hline 
\end{tabular}
\label{table:smachyper}
\end{table}

As there are five hyperparameters for DEMAR to tune, the grid search for DEMAR needs massive computation budgets. Instead of grid search, we use a heuristic sequential searching to search the hyperparameters for DEMAR, which greatly reduces the load of tuning. The sequential searching is as follows. First, we adjust the $\alpha_{reg}$ to see whether the overestimation is controlled to avoid extremely large global Q-values. Next, when overestimation avoids being extremely large, we adjust the $H$ and $N_{\mathbb{H}}$ to further limit the overestimation. Finally, if the overestimation still exists to influence the learning, we adjust the $K$ and $N_{\mathbb{K}}$. We use this hyperparameter sequential searching in both environments for DEMAR. In most cases, the first two steps could successfully mitigate the overestimation problem. On the other hand, DEMAR could return to vanilla QMIX by setting hyperparameters as $H=N_{\mathbb{H}}=K=N_{\mathbb{K}}=1$ and $\alpha_{reg}=0$ if the vanilla algorithm does not have the overestimation issue in the environment, which demonstrates the flexibility of our method.

\section{Ablation Study of Dual Ensembled Q-learning and Hypernet Regularizer}
\label{sec:ablationdual}

Here we also conduct the ablation study with different Q-network ensemble sizes and subset sizes for target $Q_{tot}$ and $Q_{i}$ in the dual ensembled Q-learning. We use the \textit{simple\_adversary} as the tested scenario. The standard hyperparameter setting of DEMAR on \textit{simple\_adversary} are $H=10$, $N_{\mathbb{H}}=4$, $K=10$, and $N_{\mathbb{K}}=4$ for dual ensembled Q-learning while $\alpha_{reg}=0.05$ for hypernet regularizer. We test $H \in \{5, 10, 15\}$, $N_{\mathbb{H}} \in \{1, 2, 4, 6, 8\}$, $K \in \{5, 10, 15\}$, and $N_{\mathbb{K}} \in \{1, 2, 4, 6, 8\}$ separately while keeping $\alpha_{reg}=0.05$ unchanged. Results are shown in Figure~\ref{returnhablation}-\ref{returnnkablation} respectively. As we can see, most hyperparameter settings also mitigate severe overestimation and the standard hyperparameter setting performs best. Meanwhile, we also include more results of ablation studies of K in the 5m\_vs\_6m and 2s3z. We try different combinations of $K$ and $N_{\mathbb{K}}$ on both 5m\_vs\_6m and 2s3z in SMAC. We keep other hyperparameters fixed. The results are provided in Figure~\ref{figure:ab5m6mK}.

\begin{figure}[htbp]
\centering
\subfigure[Episode return of $H$]{
\label{returnhablation}
\includegraphics[width=0.238\textwidth]{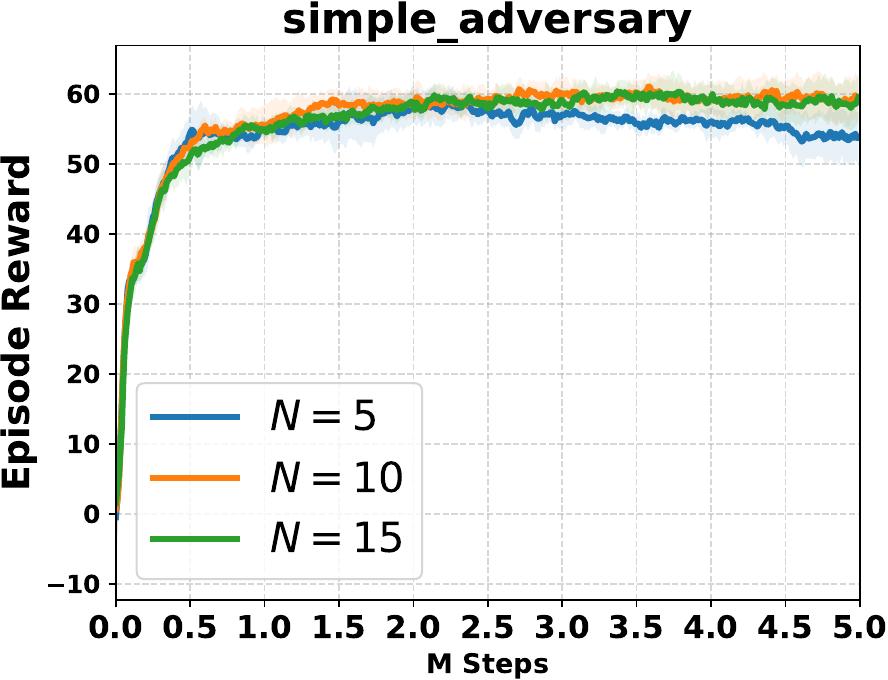}}
\subfigure[Episode return of $N_{\mathbb{H}}$]{
\label{returnnhablation}
\includegraphics[width=0.238\textwidth]{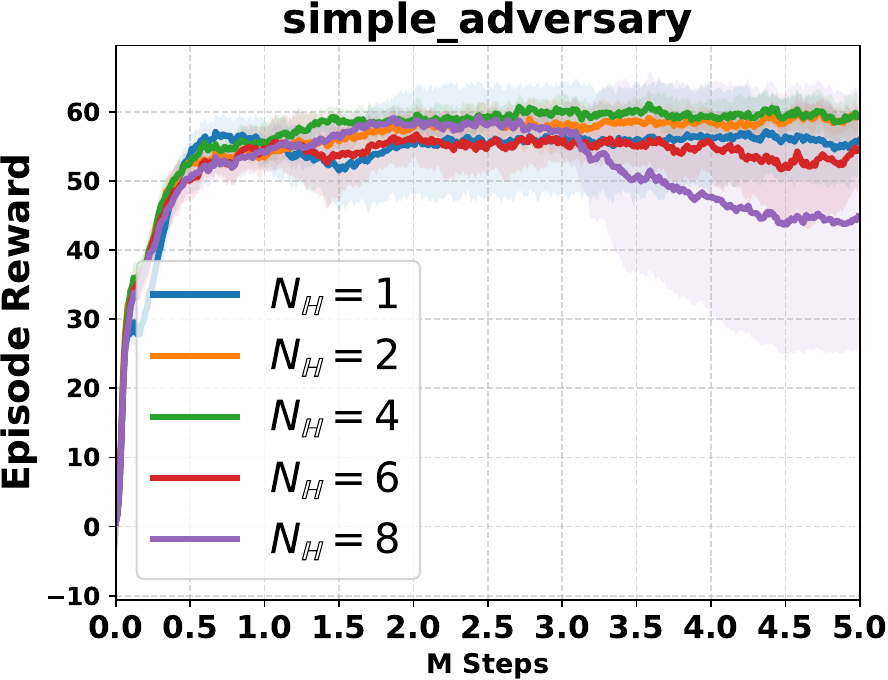}}
\subfigure[Episode return of $K$]{
\label{returnkablation}
\includegraphics[width=0.238\textwidth]{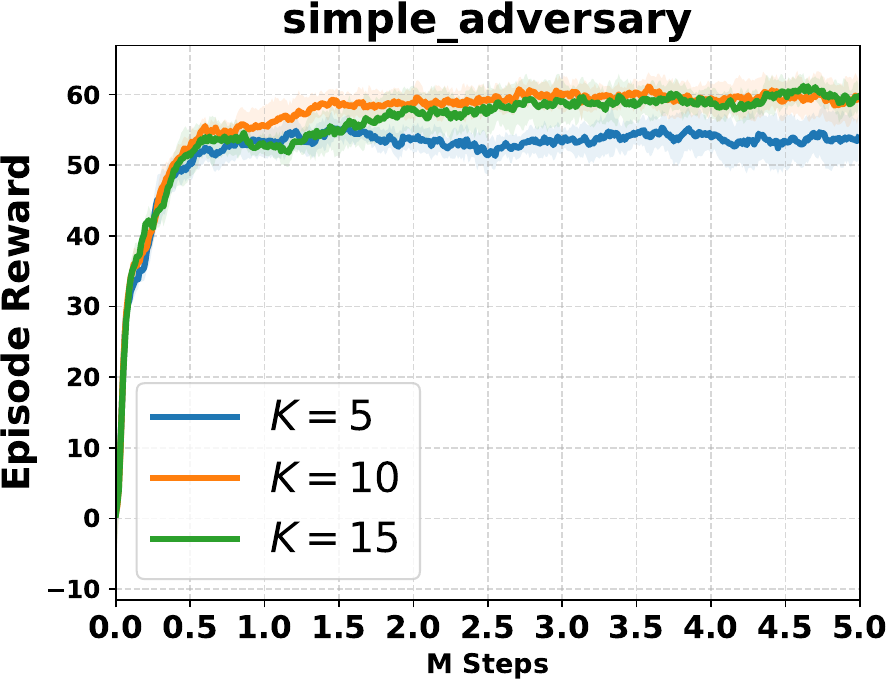}}
\subfigure[Episode return of $N_{\mathbb{K}}$]{
\label{returnnkablation}
\includegraphics[width=0.238\textwidth]{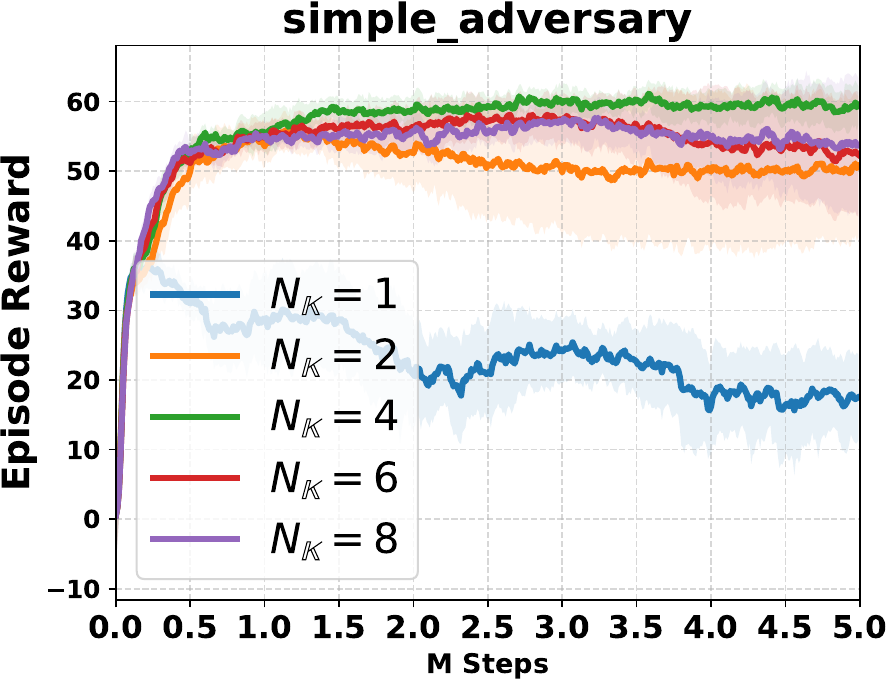}}
\subfigure[$Q_{tot}$ of $H$]{
\label{qtothablation}
\includegraphics[width=0.238\textwidth]{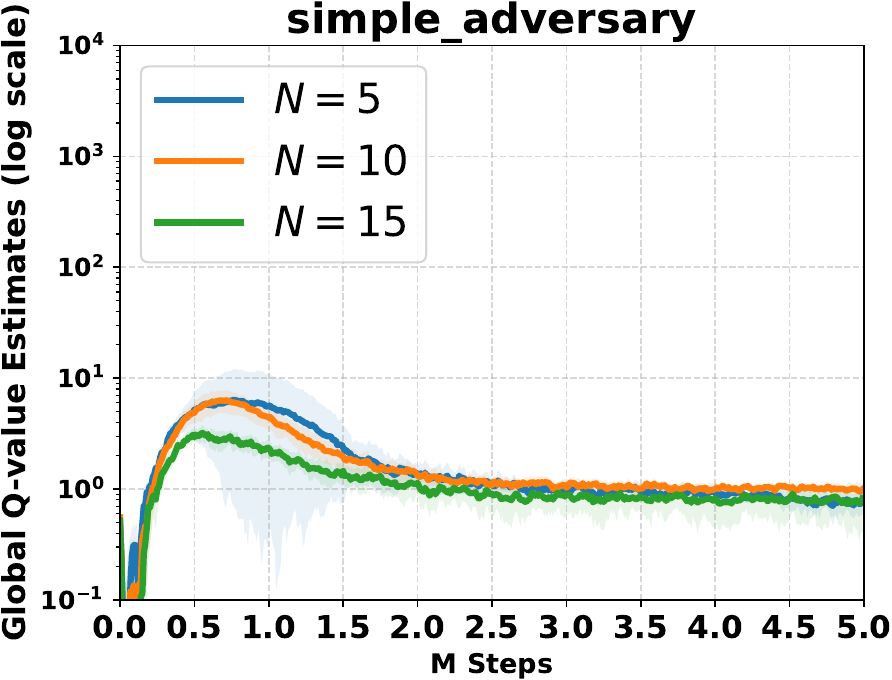}}
\subfigure[$Q_{tot}$ of $N_{\mathbb{H}}$]{
\label{qtotnhablation}
\includegraphics[width=0.238\textwidth]{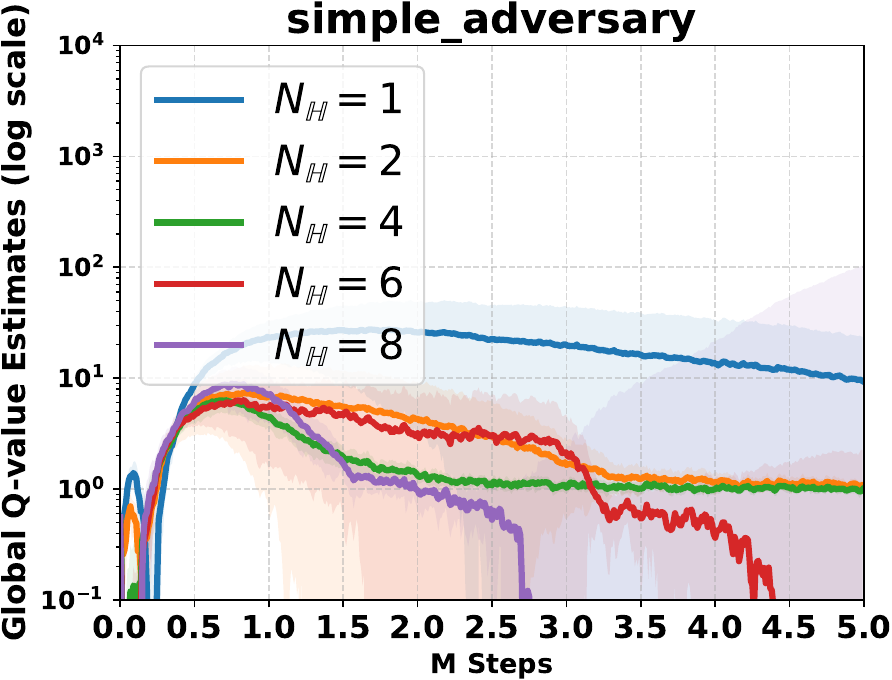}}
\subfigure[$Q_{tot}$ of $K$]{
\label{qtotkablation}
\includegraphics[width=0.238\textwidth]{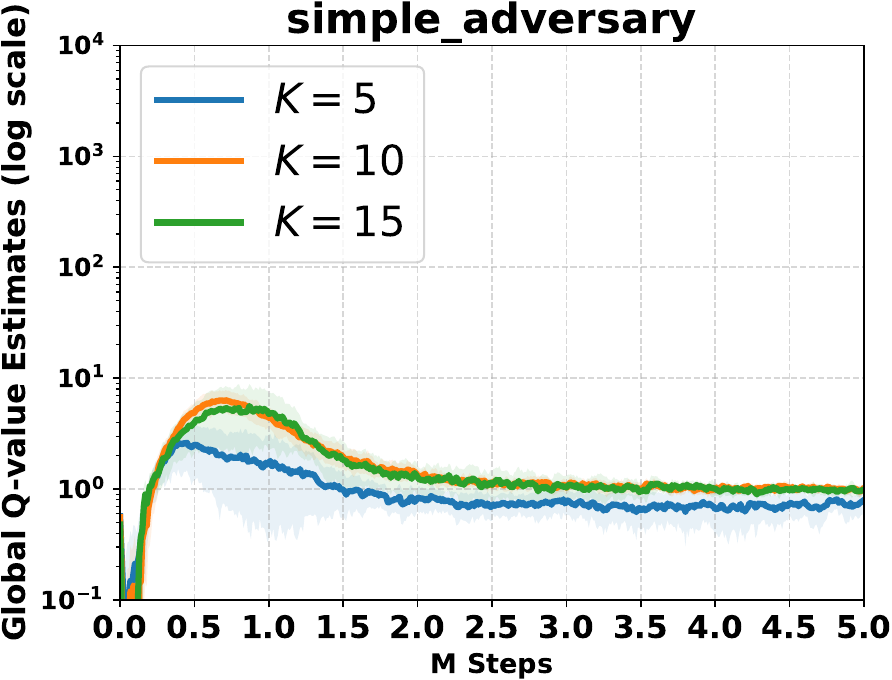}}
\subfigure[$Q_{tot}$ of $N_{\mathbb{K}}$]{
\label{qtotnkablation}
\includegraphics[width=0.238\textwidth]{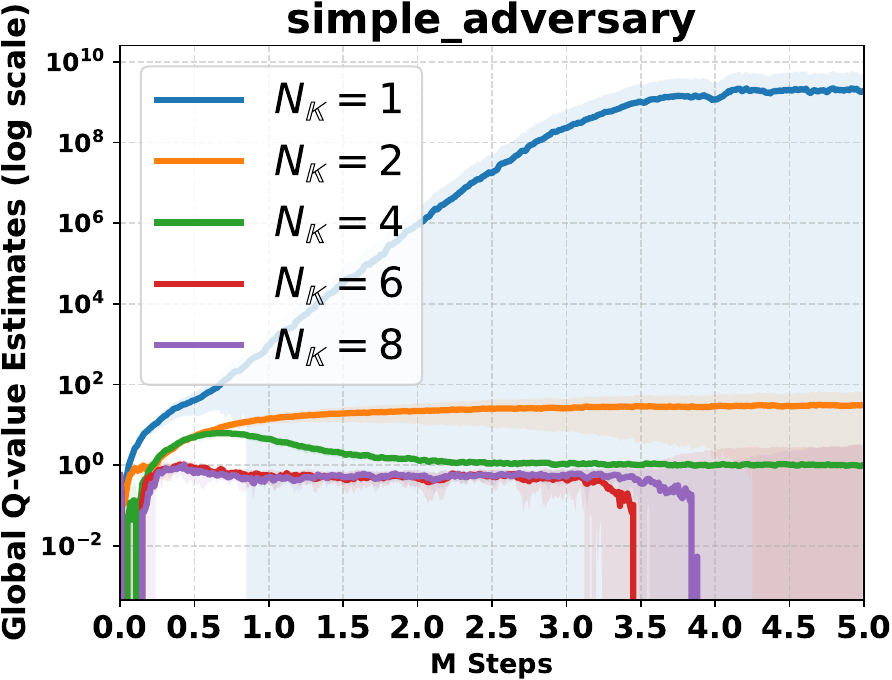}}
\caption{Ablation of dual ensembled Q-learning on the \textit{simple\_adversary} task.}
\label{figure:dualablation}
\end{figure}

\begin{figure}[htbp]
\centering
\subfigure[Episode return on 5m\_vs\_6m]{
\includegraphics[width=0.242\textwidth]{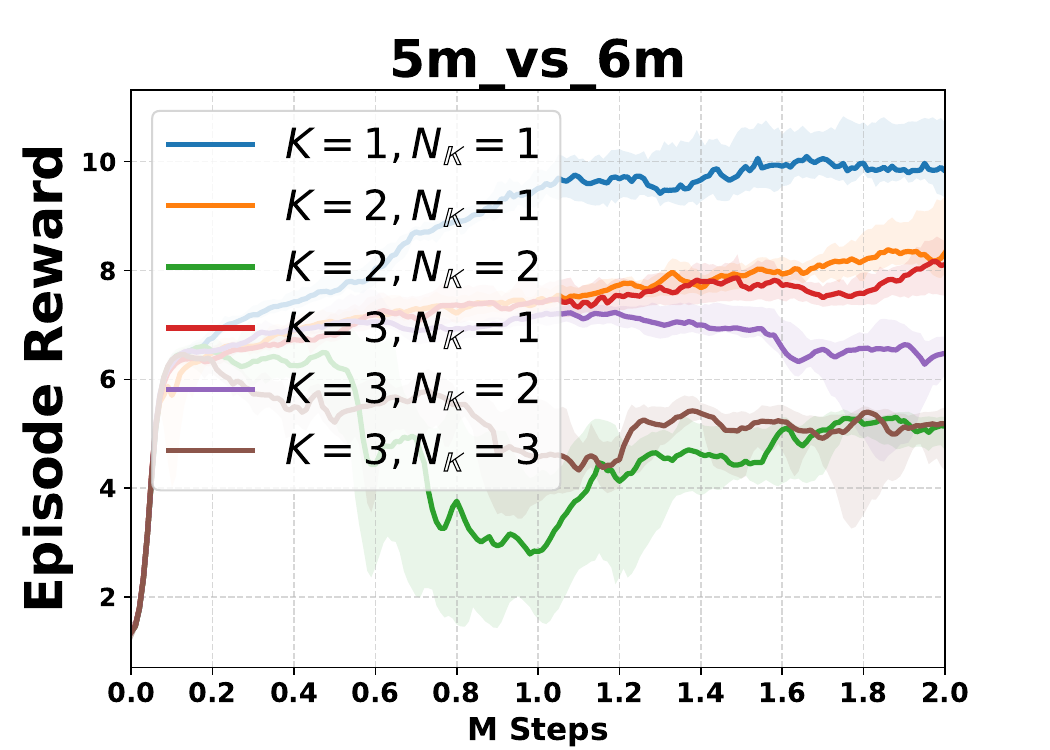}}
\subfigure[Test win rate on 5m\_vs\_6m]{
\includegraphics[width=0.242\textwidth]{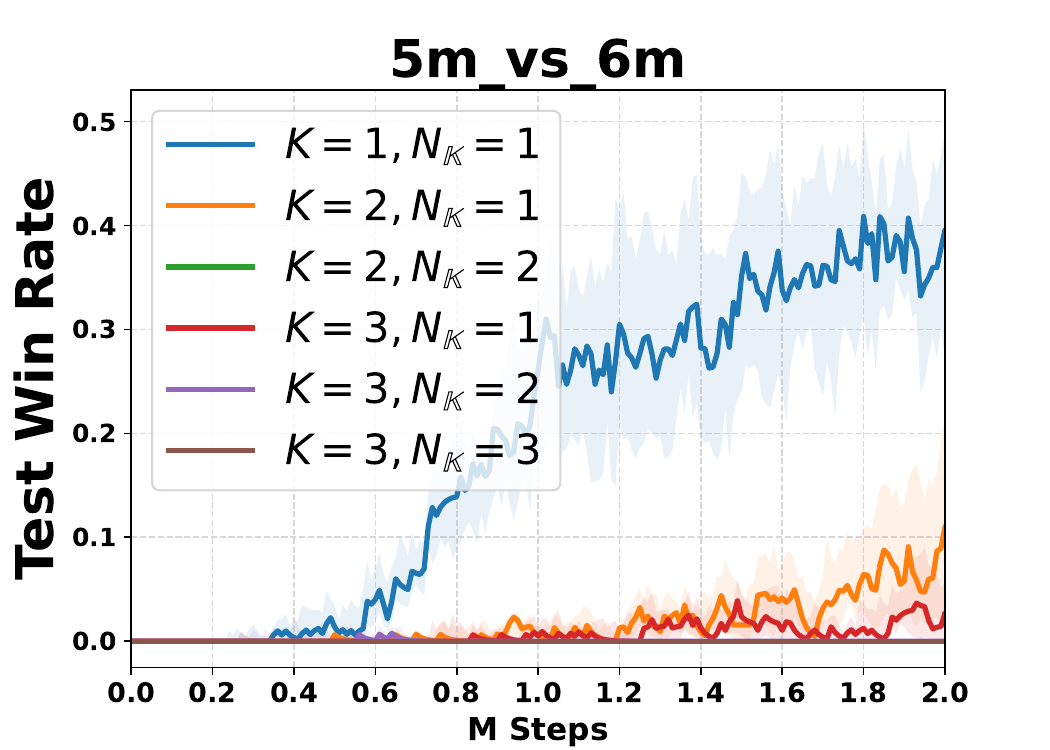}}
\subfigure[Episode return on 2s3z]{
\includegraphics[width=0.242\textwidth]{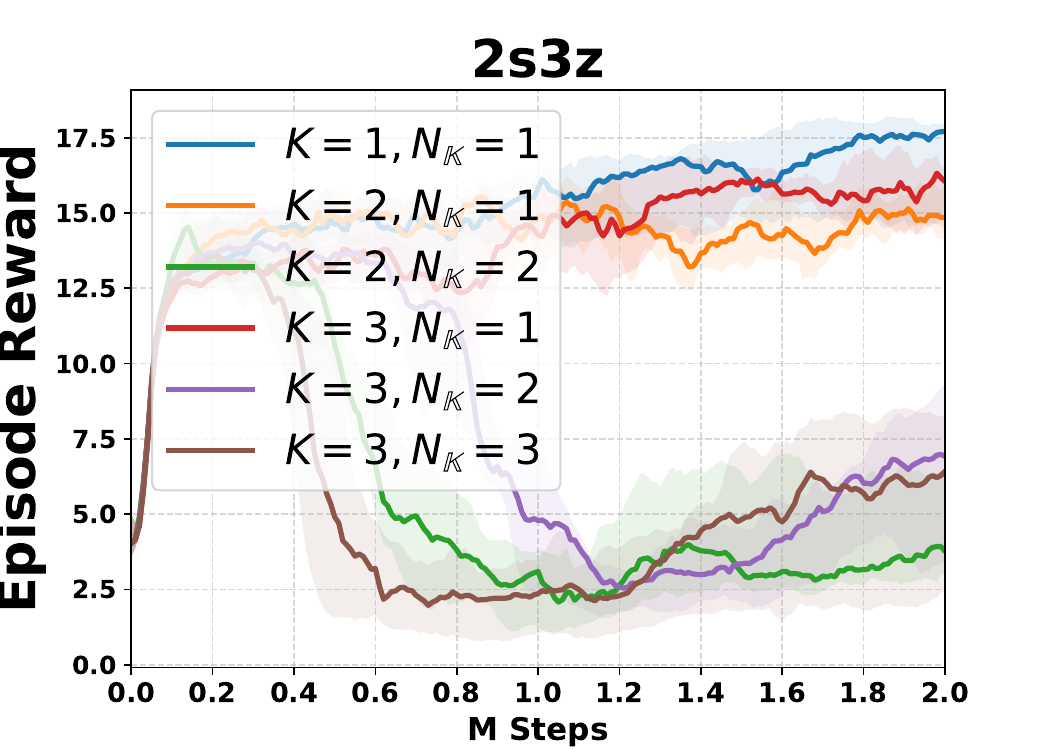}}
\subfigure[Test win rate on 2s3z]{
\includegraphics[width=0.242\textwidth]{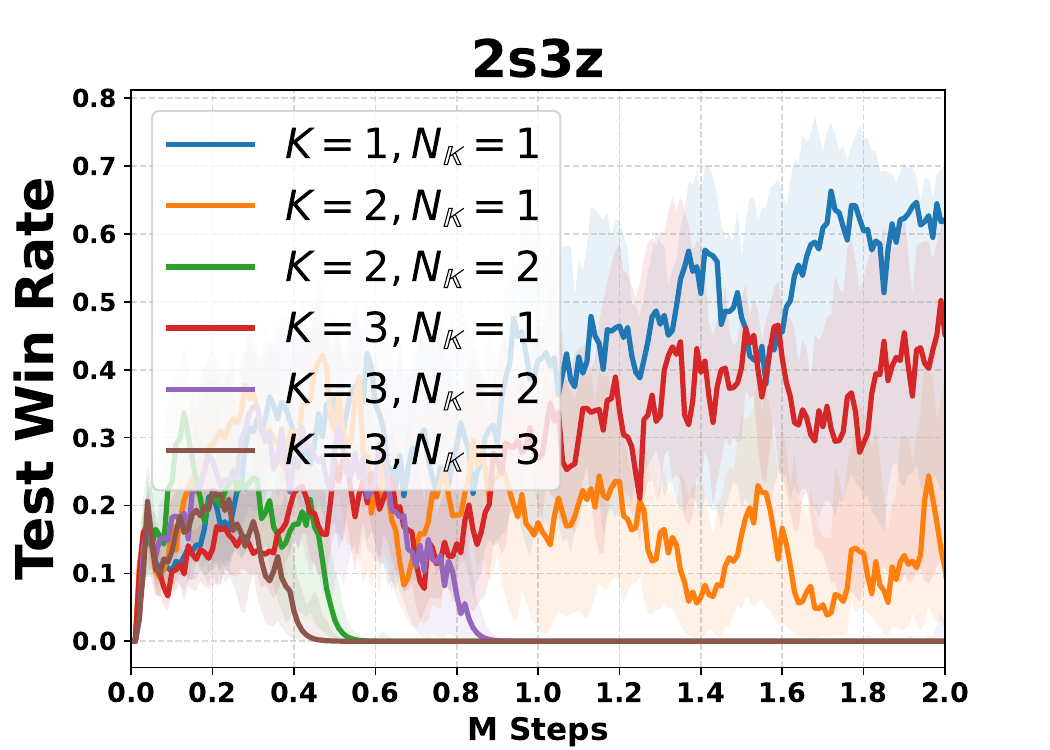}}
\caption{Ablation of different combinations of $K$ and $N_{\mathbb{K}}$ on the \textit{5m\_vs\_6m} and \textit{2s3z} tasks.}
\label{figure:ab5m6mK}
\end{figure}

\begin{figure}[htbp]
\centering
\subfigure[Episode return of $\alpha_{reg}$]{
\label{returnhyperablation}
\includegraphics[width=0.3\textwidth]{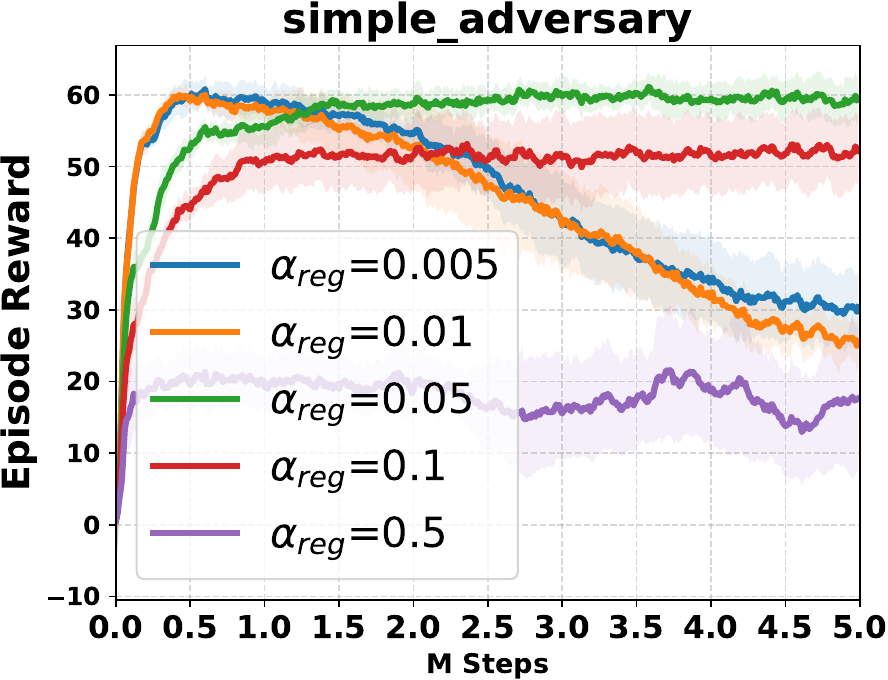}}
\subfigure[$Q_{tot}$ of $\alpha_{reg}$]{
\label{qtothyperablation}
\includegraphics[width=0.3\textwidth]{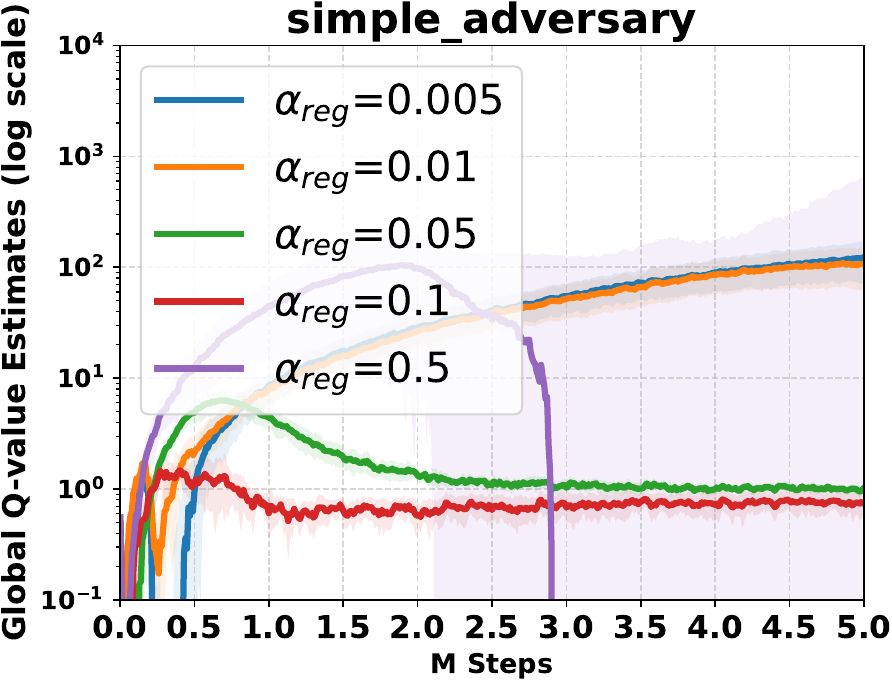}}
\caption{Ablation of hypernet regularizer on the \textit{simple\_adversary} task.}
\label{figure:hyperablation}
\end{figure}

Besides the ablation on dual ensembled Q-learning, we also conduct the ablation study on the coefficient of hypernet regularizer $\alpha_{reg}$. The standard hyperparameter setting of $\alpha_{reg}$ on \textit{simple\_adversary} is 0.05. We test $\alpha_{reg} \in \{0.005, 0.01, 0.05, 0.1, 0.5\}$ with the standard hyperparameter setting of dual ensembled Q-learning. The ablation results are shown in Figure~\ref{figure:hyperablation}. We could see that, if $\alpha_{reg}$ is too small, the severe overestimation cannot be fully tackled. If the $\alpha_{reg}$ is too large, although the severe overestimation is tackled, the algorithm performance would be affected. This indicates $\alpha_{reg}$ need to be tuned carefully.

\section{Test Win Rate in SMAC}

As the test win rate is the most popular performance metric in SMAC, we also include it as an additional performance metric for better evaluation in SMAC. The results are shown in Figure~\ref{figure:smac_winrate}.

\begin{figure}[htbp]
\centering
\subfigure[Test win rate on 5m\_vs\_6m]{
\includegraphics[width=0.242\textwidth]{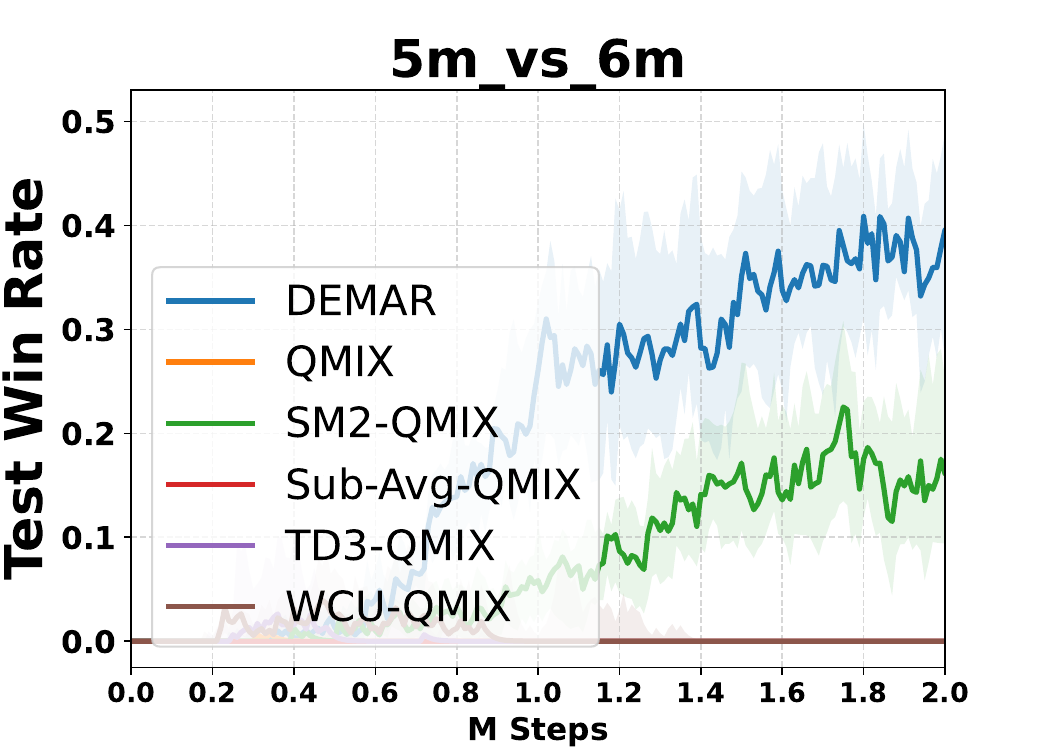}}
\subfigure[Test win rate on 10m\_vs\_11m]{
\includegraphics[width=0.242\textwidth]{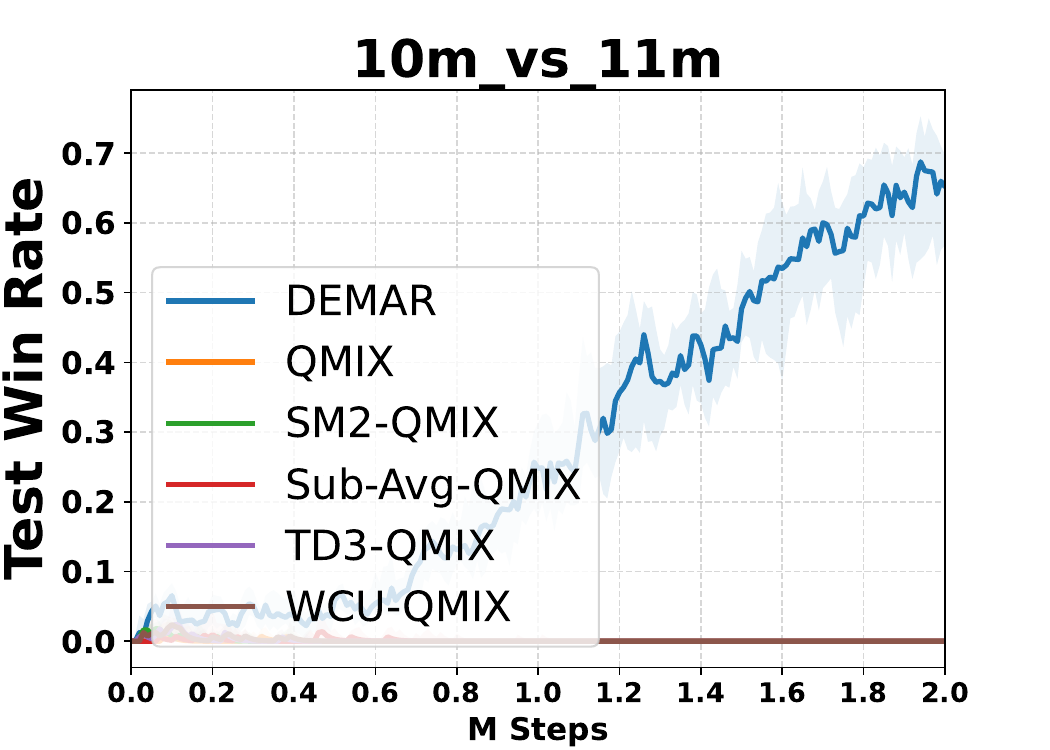}}
\subfigure[Test win rate on 2s3z]{
\includegraphics[width=0.242\textwidth]{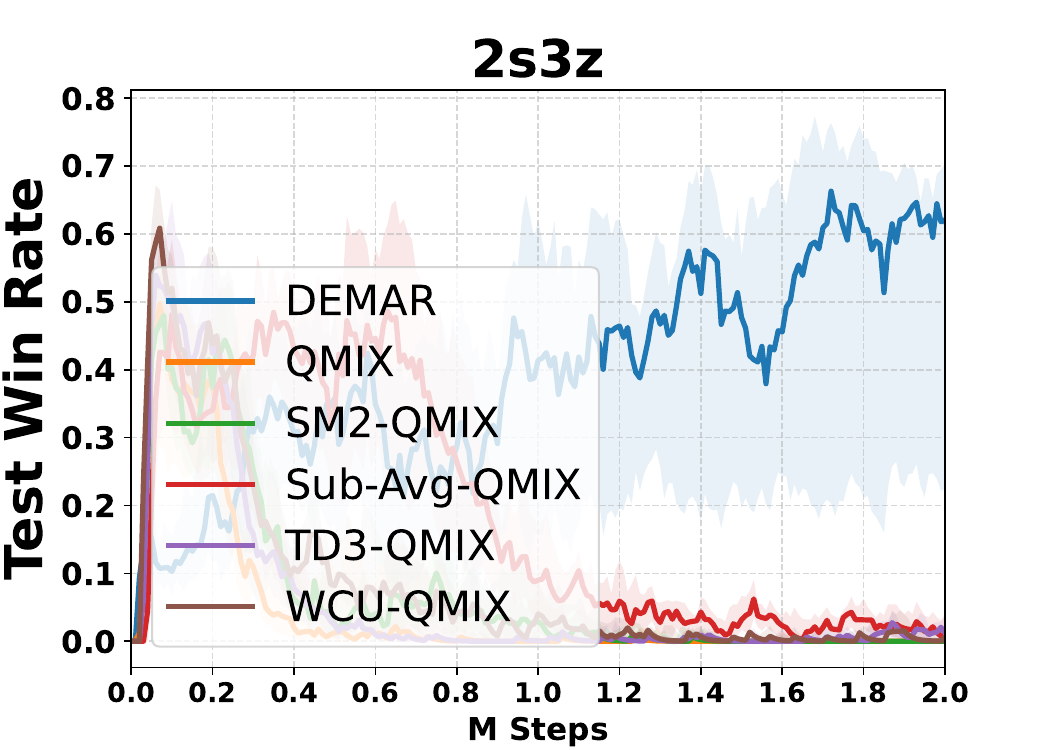}}
\subfigure[Test win rate on 3s5z]{
\includegraphics[width=0.242\textwidth]{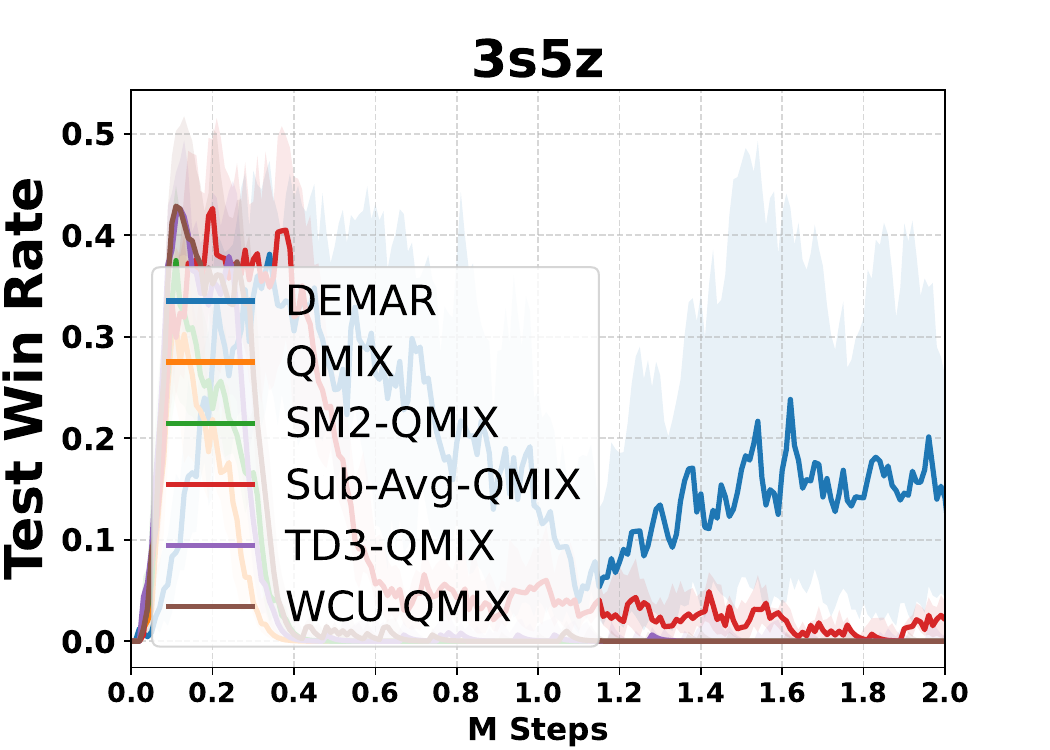}}
\caption{The test win rate on different maps in SMAC.}
\label{figure:smac_winrate}
\end{figure}

\section{Broader Impacts}
\label{appendix:impact}
MARL is a powerful paradigm that can model real-world systems, such as autonomous driving, network optimization, and energy distribution. The proposed DEMAR could stabilize the learning of existing multiagent value-mixing Q-learning algorithms, thus increasing the practicability of MARL in real-world applications especially when robust learning is required. However, when applied to real-world tasks, the learning process of MARL with DEMAR still needs some explorations which may lead to unsafe situations. On the other hand, there still exists the risk of using MARL with DEMAR to do unethical actions such as using MARL to perform network attacks.

\section{Limitations}
\label{appendix:limit}
Our study may have limitations under extensive consideration. First, our method is not suitable for the policy-based MARL algorithms. The full adaption of DEMAR to policy-based MARL methods may need further efforts and we list this as one of our future works. Second, there are five hyperparameters for DEMAR, which need more hyperparameter tuning although we have developed a heuristic sequential searching to help the tuning. Third, as the ensemble-based methods use multiple networks, DEMAR has a larger network parameter size although it uses the same network architecture compared with other non-ensemble baselines. This is the inherent property of ensemble-based methods such as TD3 \cite{fujimoto_addressing_2018} and REDQ \cite{chen_randomized_2021}, and so does DEMAR. It may limit the scalability of DEMAR for a large number of agents.

\end{document}